\documentclass[11pt,letterpaper]{article}
\usepackage[T1]{fontenc}
\usepackage[english]{babel}
\usepackage[margin=1in]{geometry}
\usepackage{lmodern,amsmath,amssymb,amsthm,mathtools,aliascnt}
\usepackage{booktabs,array,tabularx,enumitem,microtype,placeins}
\usepackage{tikz}
\usetikzlibrary{arrows.meta,positioning,fit,calc}
\usepackage{csquotes,xurl}
\usepackage[backend=biber,style=alphabetic,sorting=nyt,giveninits=false,
 maxbibnames=99,maxalphanames=4,minalphanames=4,
 doi=true,url=true,eprint=true,isbn=false]{biblatex}
\DeclareNameAlias{sortname}{given-family}
\DeclareFieldFormat[article]{eid}{Art.\addnbspace#1}
\renewbibmacro{in:}{\ifentrytype{article}{}{\printtext{\bibstring{in}\intitlepunct}}}
\AtBeginBibliography{\interlinepenalty=10000}
\usepackage[hidelinks]{hyperref}
\hypersetup{pdftitle={The Complexity of Exact and Approximate Nash Equilibria in Anonymous Games},
 pdfauthor={Zhengyang Liu},
 pdfsubject={Computational complexity of equilibria in anonymous games},
 pdfkeywords={anonymous games, Nash equilibrium, PPAD, FIXP, interpolation}}
\usepackage[nameinlink,capitalise,noabbrev]{cleveref}
\allowdisplaybreaks
\newtheorem{theorem}{Theorem}[section]
\newaliascnt{lemma}{theorem}
\newtheorem{lemma}[lemma]{Lemma}
\aliascntresetthe{lemma}
\newaliascnt{proposition}{theorem}

\aliascntresetthe{proposition}
\newaliascnt{corollary}{theorem}
\newtheorem{corollary}[corollary]{Corollary}
\aliascntresetthe{corollary}
\theoremstyle{definition}
\newaliascnt{definition}{theorem}
\newtheorem{definition}[definition]{Definition}
\aliascntresetthe{definition}
\newaliascnt{remark}{theorem}

\aliascntresetthe{remark}
\crefname{theorem}{Theorem}{Theorems}
\crefname{lemma}{Lemma}{Lemmas}
\crefname{proposition}{Proposition}{Propositions}
\crefname{corollary}{Corollary}{Corollaries}
\crefname{definition}{Definition}{Definitions}
\crefname{section}{Section}{Sections}
\crefname{subsection}{Section}{Sections}
\crefname{subsubsection}{Section}{Sections}
\crefname{equation}{Equation}{Equations}
\crefname{table}{Table}{Tables}
\crefname{figure}{Figure}{Figures}
\newcommand{\PPAD}{\mathsf{PPAD}}
\newcommand{\FIXP}{\mathsf{FIXP}}
\newcommand{\FIXPa}{\mathsf{FIXP}_a}
\newcommand{\PureCircuit}{\textnormal{\textsc{Pure-Circuit}}}
\newcommand{\GCircuit}{\textnormal{\textsc{Gcircuit}}}
\newcommand{\ExactGCircuit}{\textnormal{\textsc{Exact-Gcircuit}}}
\newcommand{\OQE}{\textnormal{\textsc{Ordered Quadratic Equations}}}
\newcommand{\eps}{\varepsilon}
\newcommand{\D}{\mathcal D}

\newcommand{\step}[1]{\par\smallskip\noindent\textbf{#1}\enspace}

\newcommand{\ANE}{\mathrm{ANE}}
\newcommand{\WSNE}{\mathrm{WSNE}}
\newcommand{\E}{\mathbb E}
\newcommand{\R}{\mathbb R}
\newcommand{\one}{\mathbf 1}

\newcommand{\calA}{\mathcal A}
\newcommand{\calR}{\mathcal R}
\newcommand{\poly}{\operatorname{poly}}

\newcommand{\clip}{\operatorname{clip}}

\DeclareMathOperator{\NOR}{NOR}
\DeclareMathOperator{\PURIFY}{PURIFY}
\title{The Complexity of Exact and Approximate Nash Equilibria in Anonymous Games}
\author{Zhengyang Liu\\
{\normalsize Beijing Institute of Technology}\\
{\normalsize\texttt{zhengyang@bit.edu.cn}}}
\date{}

\begin{document}
\pagenumbering{roman}
\begin{titlepage}
\thispagestyle{empty}
\begingroup
\centering
\makeatletter
{\LARGE\@title\par}
\vspace{0.6em}
{\normalsize\@author\par}
\makeatother
\endgroup
\vspace{0.4em}
\begin{abstract}
We prove that, for every fixed rational $c>0$, computing a $2^{-n^c}$-approximate Nash equilibrium in an $n$-player two-strategy anonymous game is $\PPAD$-complete. This holds even when one strategy always pays $1/2$ and, for all but two players, the other is constant outside at most five opponent counts. The reduction uses coefficient estimates to simulate circuit constraints approximately. Exact computation requires eliminating the simulation error: with three strategies, a separate construction uses polynomial roots to implement circuit gates exactly. This yields $\FIXP$-completeness of exact equilibrium computation under separable-linear reductions and $\FIXPa$-completeness of approximation in distance to an exact equilibrium.
\end{abstract}
\begingroup
\small
\setlength{\parskip}{0pt}
\makeatletter
\renewcommand{\l@section}[2]{%
 \addpenalty{\@secpenalty}\addvspace{0.2em}%
 \@dottedtocline{1}{0em}{1.8em}{\bfseries#1}{\bfseries#2}}
\renewcommand{\tableofcontents}{%
 \par\medskip\noindent{\normalsize\bfseries\contentsname}\par\smallskip
 \@starttoc{toc}}
\makeatother
\tableofcontents
\endgroup
\end{titlepage}
\pagenumbering{arabic}

\section{Introduction}\label{sec:introduction}

Many strategic decisions depend on aggregate demand. When agents choose
among shared resources, congestion depends on how many agents use each
resource, while preferences may differ across agents~\cite{mil96}. Nash
equilibrium describes an outcome in which no agent can improve her payoff
by deviating alone. Computing such outcomes is central to assessing what
these models can predict in large populations.

\emph{Anonymous games} isolate this form of interaction: a player's payoff
depends on her own action and on the number of opponents choosing each
action, while payoff functions may differ across players~\cite{blo00,dp15}.
For $n$ players and a fixed number of actions, the payoff tables have
polynomial size, despite exponentially many pure profiles. The model
therefore offers a compact description of heterogeneous incentives.
Does replacing individual interactions by counts also make equilibria
easier to compute?

The approximation algorithms suggest that aggregation is powerful.
For every fixed number of actions and every fixed $\varepsilon>0$,
an $\varepsilon$-Nash equilibrium can be found in polynomial
time~\cite{dp07,dp08,dp15}. The guarantee can even reach
$O(n^{-1+\alpha})$ for every fixed $\alpha>0$~\cite{cds17}.
An improvement to $O(n^{-1-\alpha})$ for any fixed $\alpha>0$ would
yield a fully polynomial-time approximation scheme~\cite{cds17}.
At finer precision, however, the distribution of action counts can encode
information about individual mixed strategies. Chen, Durfee, and
Orfanou~\cite{cdo15} used this information to prove $\PPAD$-completeness
with seven actions; Cheng, Diakonikolas, and Stewart~\cite{cds17}
observed that the construction can be reduced to five.
The remaining question is how much room a mixed strategy needs to encode
a hard computation.

Two actions leave each player with a single probability. That probability
must both represent a variable and make its contribution identifiable
inside an aggregate distribution. A third action supplies a second degree
of freedom: one probability can distinguish a player, while another
represents her value. Exact computation asks more of this encoding than
high-precision approximation. A small simulation error is tolerable in
the latter; the former requires identities that hold exactly. Rational
payoffs alone do not remove this distinction, since even two-strategy
anonymous games can have only irrational equilibria~\cite{gt17}.

We show that two actions already suffice for high-precision $\PPAD$
hardness, and that three suffice for exact $\FIXP$ hardness. The reductions
explain how anonymous counts recover individual interactions: selected
coefficients approximately isolate quadratic terms with two actions,
while polynomial roots recover individual probabilities exactly with
three. In both constructions, the required equilibrium supports follow
from the payoff inequalities themselves. This makes the recovery valid
for every target equilibrium and prevents boundary supports from escaping
the reduction.

\subsection{Our Results}

All games in our reductions have explicit rational payoff tables with entries in $[0,1]$. An $\varepsilon$-approximate Nash equilibrium allows no unilateral deviation to improve expected payoff by more than $\varepsilon$. Our first result establishes hardness with two actions even under a restricted payoff structure.

\begin{theorem}[PPAD-Completeness with Two Strategies]\label{thm:two-strategy-main}
For every fixed rational $c>0$, computing a $2^{-n^c}$-approximate Nash equilibrium of an $n$-player two-strategy anonymous game is $\PPAD$-complete. This holds even when strategy $0$ always pays $1/2$ and, for all but at most two players, the payoff of strategy $1$ is constant outside at most five opponent counts.
\end{theorem}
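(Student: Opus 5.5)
Membership in $\PPAD$ is standard: an anonymous game with polynomial-size payoff tables is a succinct polymatrix-like game whose expected payoffs are polynomial-time computable. A $2^{-n^c}$-approximate equilibrium (with $c$ fixed, so the precision has polynomially many bits) therefore reduces to approximate Brouwer or to $\varepsilon$-Nash of a general game via the usual Nash-map argument. I would cite the generic containment of polynomial-precision approximate equilibria of succinct games with polynomial-time expected utilities, which gives $\PPAD$ membership.

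For hardness I would reduce from $\GCircuit$ (or $\PureCircuit$) with inverse-polynomial, or even constant, tolerance; that problem is $\PPAD$-hard. Padding then turns the inverse-polynomial tolerance into $2^{-n^c}$: add dummy players until the total player count $N$ satisfies $2^{-N^c}$ below the needed error. The dummies have strategy $1$ paying constantly $0$, so they play $0$ and are constant outside zero counts.

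Each circuit node $v$ gets a player $p_v$ whose probability $x_v$ of playing $1$ encodes the gate value. The key design is a pair of \enquote{tag} players (the at most two exceptional players) together with distinct scales per node, so that the count distribution encodes individual probabilities. Concretely, I would split the roles as follows. The node players are arranged so that the probability generating function of the number of opponents playing $1$ is $\prod_u (1-x_u + x_u z)$. A gate player at $v$ with inputs $a,b$ should respond to a quantity like $x_a - x_b$ or $x_a x_b$. Restricting strategy $1$'s payoff to be nonconstant on at most five counts means the player sees a fixed linear combination of at most five coefficients of this polynomial. With heterogeneous weights, this combination approximately isolates the relevant terms. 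Here I would use the coefficient estimates advertised in the abstract: make all non-input players play $1$ with probability either tiny ($\le 2^{-\text{poly}}$) or near $1$ by forcing most players to be nearly pure. I would group players into blocks whose counts are fixed up to exponentially small error, so that a window of five counts around a known offset reads off the probabilities $x_a, x_b$ and their product up to tiny error. Comparing against the constant $1/2$ payoff of strategy $0$ implements a threshold or comparison gate. Copy, subtraction, and scalar multiplication then follow as in the standard $\GCircuit$ gadgets.

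The main obstacle is the approximate isolation: the other players are only approximately pure in an approximate equilibrium, and their mixedness perturbs the coefficients read by every gadget. I would handle this by showing, from the payoff inequalities alone, that every player outside the gate in question is $\delta$-close to pure with $\delta$ exponentially small relative to the gadget tolerance. This holds because their strategy-$1$ payoffs are bounded away from $1/2$ at all relevant counts. I would then bound the total perturbation of the five coefficients by $O(n\delta)$, which is dominated by the tolerance. Finally I would verify that reading off $x_v$ from any $2^{-n^c}$-approximate equilibrium yields a valid $\GCircuit$ solution.
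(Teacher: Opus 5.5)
\textbf{There is a genuine gap in the hardness core.} Your membership and padding steps are fine and match the paper, which also converts the $2^{-n^c}$-ANE to a WSNE before discarding the dummies and extends the original tables by constants at the new counts. The gap is the step where ``a window of five counts around a known offset reads off $x_a,x_b$ and their product.'' This is asserted rather than constructed, and the mechanism you give for it fails.

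\emph{Why near-purity fails.} Every node player must mix, since its probability carries a circuit value. So when you analyze gate $v$, the other node players are not close to pure. Their strategy-$1$ payoffs are not bounded away from $1/2$; they are indifferent at their own gates. Since $\Pr(K_{-p}=k)=P_{0,p}e_k(\omega_{-p})$ is symmetric in all opponents' weights, any combination of five count probabilities is symmetric in all mixed node players. It cannot single out two of them. Your $O(n\delta)$ perturbation bound presupposes a near-purity that does not hold.

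\emph{Why separated scales alone are not enough.} Separated scales are the right remedy, but they conflict with letting one player's probability equal $x_v$. A weight encoding an arbitrary value can approach $0$ and destroy the scale ordering. Even with fixed scales, the coefficient that reads a weight is approximately the product of all larger weights times that weight. It therefore depends multiplicatively on every earlier encoded value, and deleting the observer changes this prefix. For the same reason, the ``standard \GCircuit{} gadgets'' do not transfer: the readable quantities carry value-dependent prefix factors at exponentially different scales.

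\emph{What the paper adds to handle this.} Most of its construction addresses exactly these points:
\begin{itemize}[nosep]
\item Each variable is a pair $(U_i,V_i)$ with weights $s_iu_i,s_iv_i$. The $C_i,V_i$ rows enforce $u_i+v_i\approx1$, and the value is the ratio $u_i/(u_i+v_i)$.
\item A player $C_{i+1}$ of weight $\approx\chi_{i+1}/(u_iv_i)$, together with $Z_i$, keeps every normalized prefix product near $1$.
\item The source is \OQE{} rather than \GCircuit{}. Its terms $x_j(1-x_j)$ match the readable products $u_jv_j$, and the type L/type Q ordering absorbs the deletion effect. Type Q rows are multiplied by $u_i$ to cancel the factor $1/u_i$.
\end{itemize}

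\emph{The missing support argument.} In an arbitrary approximate equilibrium nothing places players at their intended scales, so you must prove which players mix before using any indifference. The paper uses the two unrestricted players $A,B$ to bound the total probability of action $1$. It uses the sign pattern of the tables to show that the active encoding players form an initial segment. It then proves, on arbitrary supports, that $e_3(J)\le\tau e_2(J)$ implies $e_2(J)\le H^2/4+\tfrac34\tau H$. This caps each selected coefficient at $\tfrac14+O(Nt)$, and against the source margins that cap forces each pair to mix. A generic cap of $1/2$ would leave no gap.

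Without an isolation mechanism of this kind and a support argument, your final step of reading $x_v$ off any $2^{-n^c}$-approximate equilibrium has nothing to rest on.
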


A third action permits exact algebraic simulation.

\begin{theorem}[FIXP-Completeness with Three Strategies]\label{thm:three-strategy-main}
Computing an exact Nash equilibrium of a three-strategy anonymous game is $\FIXP$-complete under separable-linear reductions.
\end{theorem}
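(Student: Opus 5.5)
Membership and hardness are proved separately. For membership in $\FIXP$, I would use the standard observation of Etessami and Yannakakis that Nash equilibria of a finite game are the fixed points of a Nash-style map built from $\{+,-,*,\max,\min\}$ with rational constants. For an anonymous game with three strategies and $n$ players, the expected payoff of player $i$ for strategy $s$ is a polynomial in the mixed strategies $x_j\in\Delta_3$. It equals a sum over count vectors $(k_1,k_2)$ of the table entry times the probability that the opponents realize those counts. That probability is the coefficient of a product of linear forms $\prod_{j\ne i}(x_{j0}+x_{j1}t+x_{j2}u)$, and a dynamic program over the players computes all these coefficients with $O(n^3)$ additions and multiplications. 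So the payoff map is a polynomial-size algebraic circuit, and the map $x_i\mapsto$ (normalize $x_{is}+\max(0,u_{is}(x)-u_i(x))$) is a $\FIXP$ circuit. Its fixed points are exactly the equilibria, which we recover by the identity map, a separable-linear reduction.

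For hardness I would reduce from $\ExactGCircuit$, assuming it has been shown $\FIXP$-complete under separable-linear reductions earlier in the paper, as is standard. Each gate variable $v\in[0,1]$ is represented by a player $p_v$ who uses strategies $1$ and $2$ with probabilities tied to $v$; the value is $v=x_{v2}$. Strategy $1$ serves as a tag that makes $p_v$'s contribution identifiable inside the aggregate distribution. The plan is to assign distinct tags: each variable player is forced to play strategy $1$ with a prescribed distinct rational probability $\alpha_v$. This is enforced by a payoff that depends only on her own action and a controlled count, and the forcing follows from the payoff inequalities so that boundary supports are excluded. Gate players then observe the distribution of counts. The generating polynomial of the opponents' counts factorizes into linear factors whose roots encode each $(\alpha_v,v)$. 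A gate player's payoff, being linear in the count distribution, can be written as a fixed linear functional of the coefficients.

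Next I would implement the gates exactly. Linear gates and constants are realized by threshold/clip comparisons between the payoffs of two strategies, as in generalized-circuit gadgets. These are exact because at equilibrium indifference holds precisely. The multiplication gate $v_c=v_av_b$ requires a quadratic term, and this is the main obstacle. Expected payoffs are multilinear in the players' probabilities, but a fixed linear functional of the symmetric coefficients mixes all pairs. To isolate $v_av_b$, I would exploit the known distinct tags $\alpha_v$: the elementary symmetric functions in the terms $\alpha_j t+v_j u$ determine the multiset of roots. Choosing payoff tables that evaluate a suitable interpolation (Lagrange/Vandermonde in the known $\alpha$'s) extracts the coefficient of a specific product exactly. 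An alternative is to restrict each gate player's relevant opponents by using a separate small "sub-population" count. Anonymity forbids this directly, so instead I would give the two inputs tags whose products $\alpha_a\alpha_b$ are unique among all pairs and read off the monomial $t^2$-like coefficient, with exact rational inversion.

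Finally, I would verify that every equilibrium of the constructed game maps, via the coordinates $x_{v2}$ (a separable-linear map), to a solution of the $\ExactGCircuit$ instance. This requires checking that all tag-forcing gadgets hold with equality at every equilibrium, not only at nice ones, and that payoffs can be rescaled into $[0,1]$ with rational entries of polynomial bit length.
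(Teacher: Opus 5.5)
Your membership argument is the standard one and matches the paper; note that normalizing Nash's map uses division, which the $\FIXP$ basis permits. The hardness plan, however, omits the two mechanisms that make an exact reduction possible, and one of them fails as you describe it.

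\emph{Forcing the tags.} A player's own payoff table cannot pin her own mixed strategy. Her best-response condition constrains only the distribution of her opponents' counts. No opponent can single her out until the tags are already known, because singling out is exactly what the roots of prescribed tags provide. So there is no local gadget ``depending on her own action and a controlled count'' that enforces $\sigma_{p_v,1}=\alpha_v$. You need a coupled system of $N$ anonymous conditions whose only solution is the prescribed vector, on every support, including tags equal to $0$ or $1$. This system must also survive the addition of the gate payoffs. Local invertibility does not exclude distant or boundary equilibria.

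This is the content of \Cref{t:thm:robust-localization}. The common payoff $G_i(K)$ has expectation $|c_i|(1-\varrho_i(w))$. The Lagrange conservation identity turns the one-sided boundary inequalities into two-sided ratio control. The factorization $\prod_i\varrho_i(w)=\prod_i\ell_i^\delta(w_i)$, together with the near-Fekete Vandermonde gap, excludes far profiles, and the reciprocal-difference inverse bound then localizes. The gate terms perturb only action $c$. Once the gate inequalities show that $s$ and $t$ are both supported, comparing these two unperturbed actions gives $\varrho_i(w)=1$ exactly at an exact equilibrium, hence $w=\delta$. Without some such argument, your root evaluations are not justified at an arbitrary equilibrium.

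\emph{Multiplication.} No fixed payoff table has expectation $v_av_b$, however the distinct tags are chosen. With tags fixed and values free, the opponents' count generating function is $\prod_j\bigl(A_j(t)+v_j(u-1)\bigr)$ with $A_j(t)=1-\alpha_j+\alpha_jt$. A table's expectation is a linear functional $\Lambda$ of this polynomial, so the coefficient of $v_pv_q$ is $\Lambda\bigl((u-1)^2\prod_{\ell\ne p,q}A_\ell\bigr)$. Let $\zeta_\ell=1-1/\alpha_\ell$ be the root of $A_\ell$, and put $D_{pq}=\prod_{\ell\ne p,q}(t-\zeta_\ell)$ and $B_p=\prod_{\ell\ne p}(t-\zeta_\ell)$. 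Then $(\zeta_p-\zeta_q)D_{pq}=B_p-B_q$, hence
\[
 (\zeta_a-\zeta_b)D_{ab}+(\zeta_b-\zeta_c)D_{bc}+(\zeta_c-\zeta_a)D_{ca}=0
\]
for any third free-valued opponent $c$. A table with no $v_av_c$ or $v_bv_c$ term therefore has no $v_av_b$ term either. Interpolation in the $\alpha$'s and ``unique products $\alpha_a\alpha_b$'' are still linear functionals, so they obey the same relation. Distinct tag products can only separate pairs by scale, which gives an approximation, as in the two-strategy reduction.

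What a root does give, after normalization, is $v_a\sum_{q\ne a}v_q/\bigl(\alpha_q(\zeta_a-\zeta_q)\bigr)$, the analogue of \Cref{t:eq:main-star}. The missing idea is to cancel this sum through equilibrium equations rather than inside one table. In the paper, the players $B,C,V,W$ copy $u_j$ into $\beta$ and use exact first-order statistics to force $\psi=1/2$ and $\nu=F^{\mathrm{mul}}$. This makes the bracket equal $u_k$ exactly (\Cref{t:eq:mult-cancellation-overview}). The filler spacing keeps $F^{\mathrm{mul}}\in[15/32,17/32]$, so these supports can be proved before any gate equation holds.

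Similarly, your linear gates are exact only where indifference applies, that is, on supports already proven. The protected encoding $E(x)=1/4+x/2$ and the two-stage projection controllers in \Cref{t:lem:semantic-stability} exist to establish those supports.
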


A separable-linear reduction recovers each source coordinate from a single equilibrium probability by a rational affine map. Our exact reduction has this property, so it also preserves approximation in distance to an exact solution.

\begin{corollary}[Approximation in Distance]\label{cor:distance-main}
Given a positive rational $\varepsilon$ encoded in binary, computing a rational vector within $\varepsilon$ in the $\ell_\infty$ norm of some exact equilibrium of a three-strategy anonymous game is $\FIXPa$-complete.
\end{corollary}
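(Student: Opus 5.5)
The plan is to derive the corollary from \cref{thm:three-strategy-main}, using the two standard facts about $\FIXP$ and $\FIXPa$ due to Etessami and Yannakakis. First, a separable-linear reduction between exact fixed-point problems also yields a polynomial-time reduction between the corresponding ``approximate in distance'' problems. Second, the class $\FIXPa$ has complete problems, for instance approximating in distance a fixed point of an algebraic $\{+,-,*,\max,\min,\div\}$ circuit, or equivalently an exact equilibrium of a three-player normal-form game. The argument therefore has two parts: membership, and hardness by transfer.

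\textbf{Membership.} By \cref{thm:three-strategy-main}, exact equilibria of three-strategy anonymous games form a $\FIXP$ problem. Concretely, the Nash map $x_i \mapsto$ the normalized $\max(0,\text{gain})$ update is computed by an algebraic circuit of polynomial size. Its inputs are the binomial-type count distributions, which are polynomial in the probabilities and are obtained by dynamic programming over players. The fixed points of this map are exactly the Nash equilibria. Approximating some fixed point within $\varepsilon$ in $\ell_\infty$ is then, by definition, in $\FIXPa$.

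\textbf{Hardness.} Take any $\FIXPa$ problem: an instance $I$ with tolerance $\varepsilon$, whose exact version is in $\FIXP$. By \cref{thm:three-strategy-main}, there is a separable-linear reduction to an anonymous game $G$. Each source coordinate is recovered from a single equilibrium probability by a rational affine map
\[
  y_j = a_j x_{\pi(j)} + b_j,
\]
and every exact equilibrium of $G$ maps to an exact solution of $I$. Let $A=\max_j |a_j|$. Since the $a_j$ are rationals computed in polynomial time, $A \le 2^{\poly(|I|)}$. Set $\varepsilon' = \varepsilon/\max(1,A)$; this has polynomial bit length. Suppose $\tilde x$ is within $\varepsilon'$ of an exact equilibrium $x^*$. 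Then $\tilde y_j = a_j\tilde x_{\pi(j)}+b_j$ satisfies
\[
  |\tilde y_j - y^*_j| \le |a_j|\,\varepsilon' \le \varepsilon,
\]
and $y^*$ is an exact solution of $I$. This is a polynomial-time reduction, as required.

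The main obstacle is the soundness condition hidden in this transfer. The affine recovery must map \emph{every} exact equilibrium of $G$ to a genuine solution of $I$. An approximate point is close only to some unknown equilibrium, so no equilibrium may escape the encoding, for example via boundary supports. This is exactly what the introduction asserts: the supports are forced by the payoff inequalities for every target equilibrium. So I would check that \cref{thm:three-strategy-main}'s reduction is stated for all equilibria. I would also check that the coefficients $a_j$ have polynomial bit size, so that $\varepsilon'$ stays polynomially encodable. With both in hand, the corollary follows.
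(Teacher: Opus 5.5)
Your proposal is correct and follows the paper's argument: membership comes from the Nash fixed-point map, and hardness comes from pushing an approximate equilibrium through the separable-linear decoder after rescaling $\varepsilon$ by the decoder's Lipschitz constant. The only difference is cosmetic: the paper works directly from \ExactGCircuit{} with the explicit constant $L=\max_i\{1,2/\delta_{X_i}\}\le 10$, whereas you rescale by $\max_j|a_j|$ for the composed reduction from an arbitrary $\FIXP$ problem.
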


The two hardness results concern different precision requirements. \Cref{thm:two-strategy-main} establishes high-precision hardness with the minimum nontrivial number of pure strategies. \Cref{thm:three-strategy-main} establishes exact hardness with three strategies. The latter does not follow by setting the approximation error to zero in the former: the two-strategy construction has a small coefficient-approximation error even at an exact equilibrium. The exact complexity with two strategies remains open. \Cref{tab:complexity-landscape} places the results alongside the known approximation algorithms.

\begin{table}[t]
\centering
\renewcommand{\arraystretch}{1.18}
\begin{tabularx}{\linewidth}{@{}l l X@{}}
\toprule
Strategies & Required Accuracy & Complexity \\
\midrule
Fixed $k$ & Fixed $\varepsilon>0$ & Polynomial time~\cite{dp07,dp08,dp15} \\
Fixed $k$ & $O(n^{-1+\alpha})$, fixed $\alpha>0$ & Polynomial time~\cite{cds17} \\
Two & $2^{-n^c}$, fixed $c>0$ & $\PPAD$-complete (\Cref{thm:two-strategy-main}) \\
Three & Exact equilibrium & $\FIXP$-complete (\Cref{thm:three-strategy-main}) \\
Three & Distance to an exact equilibrium & $\FIXPa$-complete (\Cref{cor:distance-main}) \\
Two & Exact equilibrium & In $\FIXP$; hardness remains open \\
\bottomrule
\end{tabularx}
\caption{Equilibrium Computation in Anonymous Games. The additive payoff guarantee and distance to an exact equilibrium are different approximation requirements.}
\label{tab:complexity-landscape}
\end{table}

\paragraph{Proof Techniques.}

Both reductions encode circuit variables in mixed strategies and use
best-response conditions to enforce the circuit constraints. The obstacle
is anonymity: a gate must depend on its designated inputs, whereas a
player's payoff can depend only on how many opponents choose each action.
We overcome this restriction in two ways.

With two strategies, we first reduce \PureCircuit{} to a sparse system
of quadratic equations. Each variable is represented by a pair of
players. Different pairs use well-separated probability scales, so
selected count probabilities approximately isolate the quadratic terms
needed for an equation. Changing a variable also affects the count
probabilities used by later equations. We add control players that keep
the leading contribution of each completed variable block
approximately constant. The main estimate bounds the quadratic terms
before any variable player is known to mix. It lets us prove that each
variable pair mixes, so approximate indifference enforces the intended
quadratic equations within the required error.

With three strategies $s,c,t$, we fix each player's total probability of
choosing $s$ or $c$, and encode a variable in how this probability is
split between the two actions. We choose distinct totals for different
players. The generating polynomial of the count distribution has one
factor per player; evaluating suitable derivatives at the roots of these
factors isolates individual encoded values. This gives anonymous payoff
tables whose expectations equal a designated variable. An additional
gadget computes products exactly, allowing the best-response conditions
to enforce the algebraic circuit equations without approximation error.

The correctness proofs must justify every use of indifference. We first
show from the best-response inequalities that the required actions have
positive probability. Only then do we use equal, or approximately equal,
expected payoffs to derive the circuit constraints. This ensures that
every equilibrium at the required accuracy yields a circuit solution.
\Cref{sec:recovery-principles} explains the constructions and these
support arguments in more detail.

\subsection{Related Work}
\paragraph{Approximation Algorithms.}
Approximation schemes for fixed strategy counts were developed by
Daskalakis and Papadimitriou~\cite{dp07,dp08,dp15} and, for two
strategies, Daskalakis~\cite{das08}. The relevant distributional tools
include moment matching~\cite{dp09,dp15} and structural descriptions of
sums of independent random variables~\cite{dkt15,ddkt16}. Fourier methods
give polynomial dependence on the population and quasipolynomial
dependence on $1/\varepsilon$ for every fixed number of
strategies~\cite{dks16a}; for two strategies, the latter dependence can be
improved to $(1/\varepsilon)^{O(\log(1/\varepsilon))}$~\cite{dks16b}.
Our hardness result concerns exponentially small $\varepsilon$, where
these approximation schemes need not run in polynomial time.

\paragraph{Hardness of Nash Equilibria.}
The classical $\PPAD$-completeness results concern normal-form
games~\cite{dgp09,cdt09}. Constant-error hardness also holds for two-action
graphical polymatrix games of degree three~\cite{rub18}, where payoffs
may depend on specified opponents. Chen, Durfee, and Orfanou~\cite{cdo15}
overcame the restriction to anonymous payoffs using their generalized
radix game. It uses widely separated probabilities to distinguish
players, with additional strategies to encode values and enforce these
probabilities. Our two-strategy result removes the need for additional
strategies.

\paragraph{Exact Computation.}
Etessami and Yannakakis~\cite{ey} introduced $\FIXP$ and proved exact
equilibrium computation $\FIXP$-complete for three-player games. Our exact
reduction uses their framework and the restricted \ExactGCircuit{}
problem of Filos-Ratsikas, Giannakopoulos, Hollender, Lazos, and
Po\c{c}as~\cite{FGHLP23}. Garg, Mehta, Vazirani, and
Yazdanbod~\cite{gmvy18} proved $\FIXPa$-completeness for symmetric
equilibria in symmetric three-player games. Their number of strategies
grows with the input. Our reduction fixes the number of strategies at
three and decodes every equilibrium by a separable-linear map.

\paragraph{Organization.}
\Cref{sec:preliminaries} defines the equilibrium notions and circuit sources.
\Cref{sec:recovery-principles} gives an overview of the two reductions and
their quantitative guarantees. \Cref{sec:two-strategy-overview,sec:three-strategy-overview}
give the constructions and support arguments, and \Cref{sec:complexity}
derives the complexity classifications. The appendices supply the longer
coefficient, interpolation, and precision estimates.

\section{Preliminaries}\label{sec:preliminaries}

An $n$-player anonymous game has a common action set $\mathcal S$ of size $k$. It specifies $U_p(a,h)$ for every player $p$, action $a\in\mathcal S$, and opponent histogram $h\in\mathbb Z_{\ge0}^{\mathcal S}$ with $\sum_a h_a=n-1$. There are $k\binom{n+k-2}{k-1}$ entries per player. We use explicit rational payoff tables, and measure running time in their total encoding length. In a mixed profile $\sigma$, players randomize independently, and $\sigma_{p,a}$ is the probability that player $p$ chooses $a$. We write $u_p(a,\sigma_{-p})$ for a pure action's expected payoff and \[v_p(\sigma)=\sum_a\sigma_{p,a}u_p(a,\sigma_{-p})\] for the player's mixed payoff.

\begin{definition}[Approximate Nash Equilibria]\label{def:equilibrium}
  An \emph{$\varepsilon$-approximate Nash equilibrium} (ANE) is a mixed profile $\sigma$ in which no player can improve her expected payoff by more than $\varepsilon$ through a unilateral deviation: $u_p(a,\sigma_{-p})\le v_p(\sigma)+\varepsilon$ for every $p,a$.

  An \emph{$\varepsilon$-well-supported Nash equilibrium} (WSNE) requires each action played with positive probability to be within $\varepsilon$ of a best response: $u_p(a,\sigma_{-p})\ge\max_b u_p(b,\sigma_{-p})-\varepsilon$ whenever $\sigma_{p,a}>0$.

  We abbreviate these notions as $\varepsilon$-$\ANE$ and $\varepsilon$-$\WSNE$. At $\varepsilon=0$, both are exact Nash equilibria.
\end{definition}

For fixed $k$, expected payoffs are computable in polynomial time by convolving the opponents' categorical distributions~\cite{cdo15,pr08}. The same recurrence gives polynomial-size arithmetic circuits. We use well-supported equilibrium within the reductions: a supported action gives the corresponding payoff inequality, and two supported actions give a two-sided comparison.

For a finite vector $\omega=(\omega_1,\ldots,\omega_m)$, its $r$th
\emph{elementary symmetric polynomial} is
\[
 e_r(\omega)=\sum_{\substack{S\subseteq[m]\\|S|=r}}
                  \prod_{j\in S}\omega_j
 \qquad (0\le r\le m).
\]
The empty product gives $e_0(\omega)=1$, and we set $e_r(\omega)=0$
for $r<0$ or $r>m$. In particular, $e_1(\omega)=\sum_j\omega_j$
and $e_2(\omega)=\sum_{i<j}\omega_i\omega_j$.

For two strategies, let $p_j=\sigma_{j,1}<1$ and define its \emph{weight} $\omega_j=p_j/(1-p_j)$, the ratio of its action-$1$ and action-$0$ probabilities. Let $K_{-p}$ count opponents of $p$ choosing action $1$, and put $P_{0,p}=\prod_{j\ne p}(1-p_j)$. Independence gives
\begin{equation}\label{eq:two-strategy-generating}
 \Pr(K_{-p}=r)=P_{0,p}e_r(\omega_{-p}),\qquad
 \prod_j(1+\omega_jz)=\sum_r e_r(\omega)z^r.
\end{equation}
Here $\omega_{-p}$ is the vector with player $p$'s coordinate deleted;
a superscript $-p$ likewise denotes deletion of that player.
Thus an anonymous payoff is a linear form of count coefficients.
Our two-strategy argument proves $p_j<1$ before introducing these weights.

\subsection{Complexity and Approximation}\label{sec:complexity-approximation}

$\PPAD$ is the parity-based class of total search problems~\cite{pap94}. $\FIXP$ concerns exact fixed points of algebraic circuits, and $\FIXPa$ concerns rational approximation in distance to an exact fixed point~\cite{ey}. Approximation problems ask for rational outputs; exact computation is understood as a real-valued search problem in this framework. The payoff guarantee of an approximate equilibrium need not control its distance to an exact equilibrium. For exact hardness we use separable-linear reductions, in which source coordinate $i$ is recovered as $\alpha_i\sigma_{p_i,a_i}+\beta_i$ with rational coefficients of polynomial encoding length. We measure distance in the $\ell_\infty$ norm on the full vector of action probabilities. For an interval $I=[a,b]$, $\clip_I(z)=\min\{b,\max\{a,z\}\}$ denotes projection to $I$; this map is $1$-Lipschitz.

\paragraph{Approximation and Normalization.}

A common transformation $U\mapsto aU+b$, $a>0$, preserves exact equilibria and sends an $\varepsilon$-approximate equilibrium to an $a\varepsilon$-approximate equilibrium. For each fixed number of actions, computing a $2^{-\poly(n)}$-approximate anonymous equilibrium belongs to $\PPAD$~\cite{cdo15}; the fixed-point formulation uses the weak approximation results of Etessami and Yannakakis~\cite{ey}. Exact computation belongs to $\FIXP$ through the usual Nash fixed-point map~\cite{nash51,ey}.

Our payoff arguments use supported actions, so we need the following standard conversion from approximate to well-supported Nash equilibria.

\begin{lemma}[Approximation Conversion, {\cite{cdo15}}]\label{lem:ane-wsne}
In an $n$-player game with $k$ actions and payoffs in $[0,1]$, a $\delta$-$\ANE$ can be converted in polynomial time into an $\varepsilon$-$\WSNE$ if $\delta\le\varepsilon^2/(16kn)$ and pure-action expected payoffs are polynomial-time computable.
\end{lemma}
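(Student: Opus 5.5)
The plan is the standard "shift mass from bad actions to the best action" argument, as in the reduction of Daskalakis, Goldberg and Papadimitriou and its refinement by Chen, Durfee and Orfanou. Let $\sigma$ be a $\delta$-$\ANE$, and put $\eta=\varepsilon/4$ (the constant is to be tuned). For each player $p$, compute the pure payoffs $u_p(a,\sigma_{-p})$ for all $a$ in polynomial time, and let $u_p^*=\max_b u_p(b,\sigma_{-p})$. Call $a$ \emph{bad} for $p$ if $u_p(a,\sigma_{-p})<u_p^*-\eta$. Build $\sigma'$ by moving, for every player, all probability on bad actions onto a best response $b_p^*$ (computed from $\sigma_{-p}$). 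This is clearly polynomial time.

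\textbf{Step 1 (little mass is moved).} Since $v_p(\sigma)\ge u_p^*-\delta$ and $v_p$ averages the pure payoffs, the mass $m_p$ on bad actions satisfies $\eta m_p\le u_p^*-v_p(\sigma)\le\delta$. Hence $m_p\le\delta/\eta$. So $\|\sigma'_p-\sigma_p\|_1\le 2\delta/\eta$ for each player.

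\textbf{Step 2 (payoffs move little).} For any player $p$ and action $a$, $u_p(a,\cdot)$ is multilinear in the opponents' mixed strategies with values in $[0,1]$. A hybrid/coupling argument replacing one opponent at a time then gives
\[|u_p(a,\sigma'_{-p})-u_p(a,\sigma_{-p})|\le\sum_{j\ne p}\tfrac12\|\sigma'_j-\sigma_j\|_1\le n\delta/\eta.\]
(The bound $\sum_j$ total-variation applies since payoffs lie in $[0,1]$.)

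\textbf{Step 3 (well-supportedness).} Every action in the support of $\sigma'_p$ was non-bad under $\sigma$, or is $b_p^*$. Its payoff under $\sigma'_{-p}$ is therefore at least $u_p^*-\eta-n\delta/\eta$, while every action's payoff is at most $u_p^*+n\delta/\eta$. Thus $\sigma'$ is an $(\eta+2n\delta/\eta)$-$\WSNE$. With $\eta=\varepsilon/2$ this requires $4n\delta/\varepsilon\le\varepsilon/2$, i.e.\ $\delta\le\varepsilon^2/(8n)$, which is implied by the hypothesis $\delta\le\varepsilon^2/(16kn)$. The slack in $k$ covers a cruder per-action accounting if we bound $\|\sigma'_p-\sigma_p\|_1$ by summing over up to $k$ bad actions.

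The only point needing care is Step 2: we want a perturbation bound linear in $n$ rather than one depending on the number of histograms. I would prove it by the telescoping hybrid over opponents using independence, and nothing specific to anonymity is needed. The computational claim follows because only the pure-action expected payoffs under $\sigma$ are evaluated, and these are polynomial-time computable by assumption.
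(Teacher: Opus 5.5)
Your proof is correct and is the standard mass-shifting conversion that the paper cites from \cite{cdo15} without reproducing; the paper only remarks that probability is removed from actions whose payoff is far below a best response. Your bound $m_p\le\delta/\eta$, the one-opponent-at-a-time total-variation estimate, and the choice $\eta=\varepsilon/2$ give an $\varepsilon$-$\WSNE$ whenever $\delta\le\varepsilon^2/(8n)$, which the hypothesis implies. (You first announce $\eta=\varepsilon/4$ and later use $\eta=\varepsilon/2$, but either choice works: with $\varepsilon/4$ the requirement becomes $\delta\le 3\varepsilon^2/(32n)$.)
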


The conversion removes probability from actions whose payoff is far below a best response. Its square-root loss in accuracy is absorbed in our precision choices. We use the bound $\varepsilon^2/(32n)$ for two actions and $\varepsilon^2/(48n)$ for three.

\subsection{Circuit Problems}\label{sec:circuit-sources}

The two-strategy reduction starts from a discrete circuit problem,
whereas the three-strategy reduction starts from an algebraic one.
Both allow cycles and ask for assignments that satisfy every gate
simultaneously.

\paragraph{\PureCircuit.}

Our discrete hardness source is \PureCircuit. Its gates permit an undetermined value, while purification ensures that some Boolean information survives.

\begin{definition}[\PureCircuit, {\cite{pure}}]\label{def:pure-circuit}
An instance consists of wires and gates. Each gate involves three distinct wires, and each wire is an output of exactly one gate; cycles are permitted. A solution assigns each wire a value in $\{0,1,\bot\}$, where $\bot$ means undetermined, subject to these conditions:
\begin{itemize}[nosep]
\item A $\NOR$ gate has two inputs and one output. Two zero inputs force output $1$; either input equal to $1$ forces output $0$. Otherwise the output is unrestricted.
\item A $\PURIFY$ gate has one input and two outputs. At least one output is Boolean. A Boolean input forces both outputs to equal it.
\end{itemize}
\end{definition}

\begin{theorem}[\PureCircuit{} Completeness, {\cite{pure}}]\label{thm:pure-circuit}
Finding a solution to a \PureCircuit{} instance with $\NOR$ and $\PURIFY$ gates is $\PPAD$-complete.
\end{theorem}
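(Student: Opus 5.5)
This statement is cited from \cite{pure} rather than proved here, so the plan follows the standard argument behind it.

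\textbf{Membership.} Take a \PureCircuit{} instance and relax it to a continuous circuit on $[0,1]$. Read a value as $0$ on $[0,1/3)$, as $1$ on $(2/3,1]$, and as $\bot$ otherwise.
\begin{itemize}[nosep]
\item A $\NOR$ gate becomes the Lipschitz map $\clip_{[0,1]}(1-3(x+y))$. If both inputs read as $0$, then $x+y<2/3$, but to guarantee output $1$ we rescale the relaxed thresholds (for instance, reading $0$ as $[0,\delta]$) so that two $0$-inputs give output near $1$ and a $1$-input gives output $0$.
\item A $\PURIFY$ gate becomes two steep piecewise-linear threshold maps at staggered thresholds, e.g.\ outputs $\clip(3z-1)$ and $\clip(3z-2)$ rescaled so that for every $z$ at least one output is $0$ or $1$.
\end{itemize}
The composite map is a polynomially Lipschitz map $[0,1]^m\to[0,1]^m$. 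Brouwer gives a fixed point, and a $1/\poly$-approximate fixed point can be found in $\PPAD$. At such a point every gate's continuous relation holds approximately. The thresholds leave slack, so reading off values gives a valid solution.

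\textbf{Hardness.} Reduce from a $\PPAD$-complete generalized circuit problem with constant error, namely $\varepsilon$-\GCircuit{} in its Boolean-plus-arithmetic form, or directly from End-of-Line via a discretized Brouwer function.
\begin{itemize}[nosep]
\item Boolean logic (NOT, OR, AND) is built from $\NOR$ gates, and copying uses $\PURIFY$, whose Boolean input forces equal outputs.
\item The key component is a purification tree. Starting from a wire that may be $\bot$, repeated $\PURIFY$ gates produce many copies. At each level at most one branch is $\bot$, so among $k$ leaves all but $O(1)$ are Boolean.
\item Each candidate grid point of a Brouwer instance is then encoded by a bit string. The successor/predecessor circuit is evaluated on it in Boolean logic, and the displacement is fed back. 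This follows the averaging technique: evaluate at many sample points and take a majority, which is robust because only few samples carry $\bot$ bits.
\end{itemize}

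The main obstacle is \textbf{robustness to $\bot$}. One must show that a circuit of only $\NOR$ and $\PURIFY$ gates, with adversarial outputs whenever an input is $\bot$, still encodes a point whose neighborhood contains a solution of the source problem. I would handle this as in \cite{pure}. Use a \PureCircuit{} cycle implementing the Brouwer function on a discretized simplex. Sample polynomially many shifted copies of the current point through purification chains, and argue by a counting bound that a constant fraction of the samples are fully Boolean and correctly evaluated. The fixed-point condition enforced by the circuit's cycle then forces the samples to surround a panchromatic (End-of-Line solution) region, from which a source solution is decoded in polynomial time.
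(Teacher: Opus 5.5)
There is nothing in the paper to compare against. The theorem is imported from \cite{pure} without proof, and for the paper's purposes that citation is all that is needed.

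\textbf{Membership.} Judged as a proof in its own right, this half is sound once the relaxations are actually fixed. Your first candidates do not work:
\begin{itemize}
\item $\clip_{[0,1]}(1-3(x+y))$ need not output a value reading as $1$ on two $0$-inputs.
\item $\clip_{[0,1]}(3z-2)$ reads as $0$ for inputs just above $2/3$, which violates the forcing rule of $\PURIFY$.
\end{itemize}
Shifting the decoding thresholds alone does not repair the $\NOR$ map either. With thresholds $1/3,2/3$, explicit maps that work are $\clip_{[0,1]}(2-3\max\{x,y\})$ for $\NOR$, and $\clip_{[0,1]}(6z-2)$, $\clip_{[0,1]}(6z-3)$ for $\PURIFY$. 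With these, every $\varepsilon$-approximate fixed point with $\varepsilon<1/3$ decodes to a solution, and finding one lies in $\PPAD$.

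\textbf{Hardness.} This half has a genuine gap. The mechanism you sketch is the Daskalakis--Goldberg--Papadimitriou sample-and-average argument, and it relies on real-valued wires that a \PureCircuit{} instance does not have.
\begin{itemize}
\item With values only in $\{0,1,\bot\}$, there is no continuous ``current point'' to shift. There is no addition that produces shifted samples. There is no averaging of displacement vectors and no feedback equation $x=x+(\text{average displacement})$.
\item That continuous feedback is exactly what forces the sampled colors to balance at a fixed point, and hence to be panchromatic. A majority vote is no substitute: it outputs a single Boolean direction, and feeding it back yields no balance condition.
\item The purification tree does not supply the samples either. It only copies one wire. If the root is Boolean, all leaves agree. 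If the root is $\bot$, the leaves are arbitrary $0/1$ values with at most one $\bot$.
\item Those leaves are therefore not evaluations at geometrically nearby points. The fact that few leaves are $\bot$ does not give ``most samples are correctly evaluated near the current point.''
\item The alternative route from constant-error \GCircuit{} is only named. Simulating its arithmetic gates with $\NOR$ and $\PURIFY$ is the whole difficulty.
\end{itemize}

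Your final step, ``handle this as in \cite{pure},'' makes your hardness argument the same citation the paper uses. That is acceptable, but the surrounding sketch should be presented as motivation, not as an outline of a working reduction.
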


\paragraph{\ExactGCircuit.}

Our exact hardness source is \ExactGCircuit{} over a basis of three gate
types~\cite{FGHLP23}. An instance is a generalized circuit whose gates define
a continuous map of the unit cube. A solution satisfies all defining
equations, including those on cycles.

\begin{definition}[Restricted Generalized Circuits]\label{def:restricted-circuit}
A restricted generalized circuit has variables $x_1,\ldots,x_G\in[0,1]$ and exactly one
defining gate for each output $x_i$. Each gate is of one of the
following types, with input indices in $[G]$:
\begin{itemize}[nosep]
\item $G_{1-}$: $x_i=1-x_j$.
\item $G_{\times2}$: $x_i=\min\{2x_j,1\}$.
\item $G_\phi$: $x_i=(1+x_j)(1+x_k)/4$.
\end{itemize}
Dependencies may contain cycles. For $\varepsilon\ge0$, an
\emph{$\varepsilon$-solution} is a vector $x\in[0,1]^G$ for which
the absolute difference between the two sides of each gate equation
is at most $\varepsilon$. A $0$-solution is an \emph{exact solution}.
The search problem \ExactGCircuit{} asks for an exact solution;
for $\varepsilon>0$, $\varepsilon$-\GCircuit{} asks for a rational
$\varepsilon$-solution. We use both problems with the gate basis above.
\end{definition}

Each gate maps its inputs into $[0,1]$, so an exact solution exists
by Brouwer's Theorem. The restricted basis already captures the
complexity of algebraic fixed points.

\begin{theorem}[Generalized Circuit Completeness, {\cite{FGHLP23}}]\label{t:thm:source}
With the gate basis in \Cref{def:restricted-circuit}, \ExactGCircuit{}
is $\FIXP$-complete under separable-linear reductions.
For some fixed rational $\varepsilon_{\mathrm{GC}}>0$,
$\varepsilon_{\mathrm{GC}}$-\GCircuit{} is $\PPAD$-complete.
Hardness holds even when every gate has distinct input and output
variables.
\end{theorem}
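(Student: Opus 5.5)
This statement is quoted from \cite{FGHLP23}, so the plan is to assemble it from known ingredients rather than prove it from scratch. It has three parts: $\FIXP$-completeness of \ExactGCircuit{} over the restricted basis $\{G_{1-},G_{\times2},G_\phi\}$, $\PPAD$-completeness of $\varepsilon_{\mathrm{GC}}$-\GCircuit{} for a fixed constant $\varepsilon_{\mathrm{GC}}$, and robustness under the requirement that every gate have distinct input and output variables. I would treat the three parts in that order.

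\textbf{Membership.} Both memberships are routine.
\begin{itemize}[nosep]
\item The gate map $F:[0,1]^G\to[0,1]^G$ is built from $+,-,\times,\min,\max$ and rational constants. Exact solutions are precisely the fixed points of $F$, so \ExactGCircuit{} lies in $\FIXP$.
\item $F$ is $4$-Lipschitz in $\ell_\infty$, so constant-precision approximate fixed points lie in $\PPAD$ by the standard Sperner or simplicial argument.
\end{itemize}

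\textbf{Exact hardness.} I would start from the classical $\FIXP$-completeness of generalized circuits with the full basis $\{+,-,\times c,\times,\max,\min,c\}$ over bounded domains, as in Etessami--Yannakakis \cite{ey}. Each full-basis gate is then simulated exactly by gadgets in the restricted basis, using fresh auxiliary variables.
\begin{itemize}[nosep]
\item \emph{Constants.} A self-loop gadget $x=\min\{2x,1\}$ is not enough by itself, since it admits $x=0$. Instead, the dyadic constant $1/2$ comes from $y=1-y$. The constant $1$ comes from $G_\phi$ with both inputs equal to $1$; that requires $1$ in the first place, so I would use $x=\min\{2y,1\}$ with $y=1/2$. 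Further dyadic constants follow by composing $G_\phi$ and $G_{1-}$.
\item \emph{Products.} $G_\phi(a,b)=(1+a+b+ab)/4$ is a fixed polynomial. Combined with exact affine operations it gives $ab$, by subtracting the affine part.
\item \emph{Affine operations.} These are the main issue: halving, subtraction, and addition must be exact in $[0,1]$. The squaring trick $G_\phi(x,x)=(1+x)^2/4$ and the subtraction $1-x$ give access to $(1+x)/2$ type averages. I would look for a polynomial identity that isolates $(x+y)/2$ and $x/2$ exactly, for instance via $G_\phi(x,1-x)$-style combinations, which cancel the quadratic terms.
\item \emph{Truncation and max/min.} $G_{\times2}$ supplies doubling with truncation at $1$. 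From it, $\max$ and $\min$ follow by combining $\min\{2x,1\}$ with exact halving and shifting, for example $\min\{x,y\}$ via $\min\{2\cdot\tfrac{1}{2}(\cdots),1\}$ after rescaling the difference.
\item \emph{Separable-linear decoding.} Each original variable is stored, up to a fixed rational affine rescaling into $[0,1]$, in one designated restricted variable. Decoding is therefore separable-linear.
\end{itemize}

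\textbf{The $\PPAD$ part.} I would rely on the known constant-$\varepsilon$ hardness of generalized circuits, due to Rubinstein and to Deligkas et al.\ for the $\{G_{1-},G_{\times2},\text{linear}\}$ variants. The gadgets above degrade errors by at most a constant factor, so a fixed $\varepsilon_{\mathrm{GC}}$ survives.

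\textbf{Distinct variables.} A gate whose input equals its output is fixed by inserting a copy via two $G_{1-}$ gates, $u=1-x$ followed by $v=1-u$. This preserves both exact and approximate solutions, up to a factor $2$ in the error.

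\textbf{Main obstacle.} I expect the hardest step to be the exact simulation of linear combinations using only $G_\phi$, $1-x$, and truncated doubling. The fallback is to cite \cite{FGHLP23} directly, since the statement is theirs.
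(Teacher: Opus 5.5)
The paper does not prove this statement; it imports it from \cite{FGHLP23} without argument, which is your fallback. As a reconstruction, though, your argument has a genuine gap at the step you flagged, and the route you suggest for it cannot work. Several of the gadgets you need depend on two variables: addition, subtraction, $\min\{x,y\}$, and extracting $ab=4G_\phi(a,b)-1-a-b$. All of them require exact addition of two variables. No \emph{acyclic} combination of $G_{1-}$, $G_{\times2}$ and $G_\phi$ computes that, even on a small open set.

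To see this, take an open set on which no truncation switches; there the circuit computes a polynomial. Suppose it equals a nonconstant affine function depending on both inputs. The output gate cannot be a saturated $G_{\times2}$. It also cannot be $G_\phi$ with two nonconstant inputs, since $(1+a)(1+b)$ would then have degree $\deg a+\deg b\ge2$. So the output gate is unary, or $G_\phi$ with one constant input, and its nonconstant input is again such an affine function. Descending in this way ends at a single input variable, which is a contradiction. Hence no polynomial identity, from $G_\phi(x,1-x)$ or anything else, will cancel the quadratic terms: there is no gate that adds.

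The missing idea is to use cycles, which generalized circuits allow. The loop $z=G_\phi(z,w)$, made self-loop-free by your double-negation copy, forces $z(3-w)=1+w$, which is an exact division. Take $w=G_\phi(1-a,1)=1-a/2$. Then $z=(4-a)/(4+a)$, so $1+z=8/(4+a)$. Sums then come from $A+B=A(1+B/A)$: one checks that $G_\phi\bigl(a/4,\,G_\phi(b/4,z)\bigr)=\tfrac38+\tfrac a{16}+\tfrac b{32}$. Here $x/2=1-G_\phi(1-x,1)$ supplies the exact quarters. Replacing $a$ by $a/2$ throughout makes the result symmetric, and unary gates (doubling twice and negating) then isolate $(a+b)/4$ exactly.

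Your $\PPAD$ paragraph also relies on gadgets you had not built. Its constant-factor error claim therefore has to be checked for these cyclic gadgets. The check is easy here: the loop equation is linear in $z$ with coefficient $3-w\ge2$, so $\varepsilon$-errors at its gates move $z$ by only $O(\varepsilon)$.
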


Our three-strategy reduction uses the exact part of this theorem.
Its construction also accommodates repeated inputs and self-loops.

For two strategies, we first express \PureCircuit{} as a sparse quadratic system tailored to count coefficients; \Cref{b:sec:source} gives this intermediate problem and the reduction.

\section{Overview of the Reductions}\label{sec:recovery-principles}

We outline the reductions proving $\PPAD$-hardness with two strategies
and $\FIXP$-hardness with three. Both simulate circuit constraints using
anonymous payoff tables. The first uses coefficient estimates and allows
a small simulation error; the second uses polynomial roots to obtain
exact identities. In both reductions, the proof must first establish
the equilibrium supports that make these simulations valid.

\subsection{The PPAD-Hardness Reduction}\label{b:sec:source}

The source problem asks for a point in $[0,1]^q$ that approximately
satisfies a sparse system of quadratic equations. We use this formulation
because each input term $x_j(1-x_j)$ can be represented by the product
of two players' weights. The coefficient margins will later let us
prove that both players mix.

Fix $\delta=2^{-7}$ and $\eps_{\mathrm{src}}=2^{-17}$.

\begin{definition}[\OQE]\label{b:def:source}
\emph{Input.}
A positive integer $q$, a type $\mathrm L$ or $\mathrm Q$ for each
variable $x_i$, $i\in[q]$, and rational coefficients defining the following
system on $x_1,\ldots,x_q\in[0,1]$.
Write $y_i=x_i(1-x_i)$. Variable $x_i$ satisfies
\begin{align}
 x_i&=1-a_i-\sum_{j<i}b_{ij}y_j
                &&\text{(type L)},\label{b:eq:linear}\\*
 x_i^2-(1-a_i)x_i&=c_i+\sum_{j>i}d_{ij}y_j
                &&\text{(type Q)}.\label{b:eq:quadratic}
\end{align}
All coefficients are rational and nonnegative, and at most two input
coefficients in any equation are nonzero. Type L equations satisfy
\begin{equation}\label{b:eq:linear-margin}
 a_i=\frac{1+\delta}{2},\qquad
 a_i+\frac14\sum_j b_{ij}\le1-\frac{\delta}{2}.
\end{equation}
Type Q equations satisfy
\begin{equation}\label{b:eq:quadratic-margin}
 a_i=\frac{9}{10},\qquad c_i>\frac{1}{5},\qquad
 c_i+\frac14\sum_jd_{ij}\le a_i-\delta.
\end{equation}
\emph{Output.}
A rational vector $x\in[0,1]^q$ for which the left- and right-hand
sides of every equation differ by at most $\eps_{\mathrm{src}}$.
We call such a vector an \emph{$\eps_{\mathrm{src}}$-solution}.
\end{definition}

The ordering restricts the direction of each equation type:
type L uses earlier variables, whereas type Q uses later variables.
It is not a topological order; the combined dependency graph may
contain cycles.
The coefficient margins also imply the bounds
$\sum_jb_{ij}<2$ and $c_i+\sum_jd_{ij}<4$ used below.

These equations define a continuous map of $[0,1]^q$ to itself:
a type L output lies in $[\delta/2,(1-\delta)/2]$, and the positive
type Q root lies in $(1/2,1)$ and is the unique root in $[0,1]$.
Brouwer's Theorem gives an exact solution. The bounded coefficients
then permit rounding to a rational $\varepsilon_{\mathrm{src}}$-solution
of polynomial encoding length, so \OQE{} is total.

The following lemma reduces \PureCircuit{} to this quadratic form.

\begin{lemma}\label{b:lem:source}
\OQE{} is $\PPAD$-hard, even when all coefficients belong to a fixed
finite set of rational numbers.
\end{lemma}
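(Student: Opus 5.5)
The plan is to reduce from \PureCircuit{} (\Cref{thm:pure-circuit}) by replacing each gate with a constant-size gadget of type L and type Q equations. All coefficients are drawn from a fixed finite list of rationals chosen once, which gives the ``fixed finite set'' clause automatically.

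\textbf{Ordering.} The ordering requirement is handled structurally. All type Q variables are placed first and all type L variables after them. Then any type L equation may read any type Q variable (it is earlier), and any type Q equation may read any type L variable (it is later). I will therefore build every gadget so that L-variables read only Q-variables and Q-variables read only L-variables. Cycles in the \PureCircuit{} instance then cause no difficulty, since the dependency graph only has to be bipartite between the two types.

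\textbf{Encoding and monotone building blocks.} Each wire $w$ is represented by a type L variable $x_w$. Its logical value is read from $y_w=x_w(1-x_w)$: a value $y_w\ge\theta_1$ means $0$, a value $y_w\le\theta_0$ means $1$, and anything in between is $\bot$, for thresholds $\theta_0<\theta_1$ to be fixed. The key observation is that $y$ is a monotone function of $x$ on each type's range.
\begin{itemize}[nosep]
\item On the L range $[\delta/2,(1-\delta)/2]$, $y$ is increasing in $x$ with slope $1-2x\ge\delta$.
\item On the Q range $(1/2,1)$, $y$ is decreasing in $x$.
\end{itemize}
Composing these with the equations themselves gives two basic maps.
\begin{itemize}[nosep]
\item A type L equation gives $y_{\mathrm{out}}$ as a decreasing function of $\sum b_{ij}y_j$.
\item A type Q equation gives $x_{\mathrm{out}}$ as an increasing function of $c+\sum d_{ij}y_j$, namely the positive root of $x^2-x/10-(\cdot)=0$. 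Hence $y_{\mathrm{out}}$ is decreasing in the inputs' $y$.
\end{itemize}
So an L$\to$Q composition is an increasing one-variable map of $y$ (a ``buffer''), and a single type L equation with two nonzero $b$'s is decreasing in both inputs. The latter serves as the core of the $\NOR$ gadget, with suitable buffers before and after it to align thresholds.

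\textbf{Amplification and gates.} With the constants fixed, I would compute explicit affine-in-$y$ bounds for these maps on the relevant ranges. The aim is a buffer map $g$ with two properties:
\begin{enumerate}[nosep]
\item $g$ has a repelling threshold $\tau$ with slope greater than $1$ near it.
\item $g$ maps $[0,\tau-\eta]$ into a small neighbourhood of its low fixed value and $[\tau+\eta,1/4]$ near its high fixed value.
\end{enumerate}
Iterating $g$ a constant number of times, via auxiliary variables alternating between the two types, yields a strong amplifier that sends everything outside a tiny window around $\tau$ to within the Boolean thresholds. This remains true even when each equation is satisfied only up to $\eps_{\mathrm{src}}$, because $\eps_{\mathrm{src}}=2^{-17}$ is much smaller than the gaps.

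The gates are then assembled as follows.
\begin{itemize}[nosep]
\item $\PURIFY$: feed the input into two amplifier chains with slightly shifted thresholds $\tau<\tau'$ whose windows are disjoint. At most one chain can have its input in its window, so at least one output is Boolean, and a Boolean input lies outside both windows and is copied correctly.
\item $\NOR$: apply the decreasing two-input L equation and then an amplifier. Two $0$-inputs (large $y$) drive the output to $1$, and any $1$-input (small $y$) drives it to $0$, since the other input's contribution is bounded.
\end{itemize}
Given any $\eps_{\mathrm{src}}$-solution, I read off each wire's value by thresholding $y_w$ and check every gate condition.

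\textbf{Main obstacle.} I expect the hardest step to be obtaining gain greater than $1$ under the margin constraints \eqref{b:eq:linear-margin} and \eqref{b:eq:quadratic-margin}. These bound $\sum b_{ij}<2$ and $c_i+\tfrac14\sum d_{ij}\le 4/5-\delta$, while $dy/dx$ is only about $\delta$ near the top of the L range. I would first try to confine the variables to the regions where the slopes are large: L outputs near $x\approx\delta/2$ (slope near $1$) and Q roots near $x\approx1/2$. I would then verify by explicit computation with concrete rationals that one L$\to$Q$\to$L round trip has slope at least some fixed $\lambda>1$ on a constant-width window. If a single round trip cannot exceed slope $1$, the fallback is to feed the same input into an L equation through both allowed coefficients (two identical copies of a buffered variable), which doubles the effective gain within the $\sum b<2$ budget.
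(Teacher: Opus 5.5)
There is a genuine gap, and it lies in the rule you use to dispose of cycles. With nonnegative coefficients every source equation is antitone. As you note yourself, the $y$ of a type L output and the $y$ of a type Q output are both nonincreasing in each input $y_j$. If type L variables read only type Q variables and vice versa, the dependency graph is bipartite, so every path between two type L variables has even length. Since all your wires are type L variables, each gadget's output wire is then a nondecreasing function of its input wires' $y$ values.

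$\NOR$ cannot be realized this way. Under either orientation of your thresholds, $\NOR(0,0)=1$ and $\NOR(1,1)=0$ require the output for two small-$y$ inputs to lie above the upper threshold and the output for two large-$y$ inputs to lie below the lower one. A nondecreasing map cannot do this. Your description hides the problem: your $\NOR$ core, a type L equation with two nonzero $b$'s, may not read the wire variables under your own rule. A type Q equation must come first, and this extra inversion turns the gadget into a monotone gate. Two-step buffers cannot change the parity. Giving different wires different types or decoding orientations also fails on any circuit with a cycle through an odd number of $\NOR$ gates, for instance a $\NOR$ whose two inputs are the outputs of a $\PURIFY$ of its own output.

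The missing idea is a same-type dependency, which \Cref{b:def:source} permits but your rule excludes: a type L equation may read any earlier type L variable. The paper's amplifier is a chain of type L equations, each reading its predecessor and implementing $P(z)=1-[\delta+(1-2\delta)z]^2$ on $z=4y$. $\NOR$ is a type Q equation followed by $P^{10}$, and each $\PURIFY$ output is a type Q equation followed by $P^{11}$, so the chain length fixes each gate's direction. The ordering then needs its own argument rather than bipartiteness: all type Q variables come first, then each chain in forward order. Type Q equations read chain ends, and chain entries read their predecessors.

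By contrast, the gain you flag as the main obstacle is not one. $P$ has a fixed point near $0.62$ with $|P'|\approx1.2$, and even an admissible round trip through one type Q and one type L equation reaches gain about $1.4$. Your two suggested remedies would not help. A type Q root near $x=1/2$ is where $1-2x$ vanishes, so the slope there is near zero. Splitting an input into two copies with coefficients $b/2$ gives the same equation under the budget on $\sum_j b_{ij}$.

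Finally, $\eps_{\mathrm{src}}=2^{-17}$ is fixed, so error growth through the amplifier must be bounded explicitly; the paper obtains $2^{11}\eps_{\mathrm{src}}=1/64$ against a margin of $1/48$. Also, the type Q margin is $c_i+\frac14\sum_j d_{ij}\le 9/10-\delta$, not $4/5-\delta$.
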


\begin{proof}[Proof Outline]
We reduce from the $\PPAD$-complete problem in \Cref{thm:pure-circuit}.
Represent a wire by $z=4x(1-x)$, decoding $z\le1/12$ as $0$,
$z\ge11/12$ as $1$, and the intervening values as $\bot$.
A type Q equation followed by a constant-length chain of type L
equations implements each gate, with a Boolean margin exceeding
the accumulated error at tolerance $\eps_{\mathrm{src}}$.
Place all type Q variables first and then each type L chain in its
forward order. Type Q equations read the later wire outputs, while
type L equations read earlier chain entries, so circuit cycles are
preserved. The construction uses at most twelve variables per wire
and fixed rational coefficients. \Cref{b:app:source} gives the gate
formulas, verifies every coefficient margin, and proves the
approximate-solution decoding.
\end{proof}

The anonymous game uses four players per source variable and three
additional players. Its accuracy is stated explicitly here; population
padding in \Cref{sec:complexity} gives \Cref{thm:two-strategy-main}.

\begin{theorem}[Quantitative Two-Strategy Reduction]\label{b:thm:core}
There is a polynomial-time construction mapping a $q$-variable instance of \OQE{} to an $N=4q+3$ player two-strategy anonymous game. From every $N^{-54N^2}$-$\ANE$, an $\varepsilon_{\mathrm{src}}$-solution can be computed in polynomial time, where $\varepsilon_{\mathrm{src}}=2^{-17}$. All payoffs are rational and in $[0,1]$; strategy $0$ always pays $1/2$, and every encoding player's strategy-$1$ payoff is constant outside at most five counts. There are only two other players.
\end{theorem}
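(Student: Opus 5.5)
We plan to build the game in layers, one block of four players per source variable $x_i$, arranged so that block $i$ uses a probability scale $\lambda_i$ with $\lambda_1\gg\lambda_2\gg\cdots\gg\lambda_q$. We choose $\lambda_i=N^{-\Theta(N)\cdot i}$, so that products of weights from different blocks occupy separated magnitudes. Each block consists of a \emph{variable pair} $(u_i,v_i)$ and two \emph{control players}. The intended weights are $\omega_{u_i}=\lambda_i x_i$ and $\omega_{v_i}=\lambda_i(1-x_i)$ up to normalization. With this choice, the pair's contribution to $e_1$ is $\lambda_i$, independent of $x_i$, and its contribution to $e_2$ is $\lambda_i^2 y_i$. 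The two extra players fix the global scale: one always plays $0$, and one anchors the baseline probability.

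By \eqref{eq:two-strategy-generating}, a strategy-$1$ payoff supported on at most five counts $r$ is a linear form $\sum_r \alpha_r P_{0,p}e_r(\omega_{-p})$. We expand each $e_r$ as a sum over blocks. Suppose the first $i-1$ blocks contribute an essentially fixed leading term; maintaining this is exactly what the control players do, by keeping each completed block's leading contribution constant. Then the coefficient at count $r=i+1$, normalized by the known constant $\prod_{j<i}\lambda_j$, isolates $\lambda_i$-level and $\lambda_i^2$-level terms, and the error is $O(\lambda_{i+1}/\lambda_i)$.

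With this in hand, we choose rational payoffs for $u_i$ and $v_i$ so that indifference against $1/2$ expresses \eqref{b:eq:linear} or \eqref{b:eq:quadratic}. For type L, the terms $y_j$ with $j<i$ appear as $e_2$-type pieces from earlier blocks. For type Q, the terms $y_j$ with $j>i$ appear as later, smaller coefficients rescaled by known factors. In both cases the coefficients $b_{ij}$ and $d_{ij}$ are converted into table entries in $[0,1]$ by a common affine rescaling.

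The proof will proceed in the following order.

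(1) Convert an $N^{-54N^2}$-$\ANE$ to an $\eps$-$\WSNE$ with $\eps=N^{-27N^2}/\poly$ using \Cref{lem:ane-wsne}.

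(2) Show from the best-response inequalities alone that no variable or control player is pure. The margins \eqref{b:eq:linear-margin} and \eqref{b:eq:quadratic-margin} give this: if $u_i$ played pure $0$ or $1$, the payoff of strategy $1$ would be bounded away from $1/2$ by $\Omega(\delta)$ in the direction contradicting that choice. This needs a bound on the quadratic terms $\sum b_{ij}y_j\le\frac14\sum b_{ij}$ that holds \emph{before} we know the other players mix. That is the main estimate.

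(3) Having established full support, apply approximate indifference to get each equation up to error $\eps\cdot\lambda$-rescaling plus the coefficient-isolation error $O(\lambda_{i+1}/\lambda_i)$.

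(4) Read off $x_i=\omega_{u_i}/(\omega_{u_i}+\omega_{v_i})$, round it to a rational, and check that the total error is below $\eps_{\mathrm{src}}=2^{-17}$.

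The encoding player's table is nonconstant only at the $\le5$ counts used for its equation. These are $r$ equal to the block index plus offsets $\{-1,0,1,2\}$, plus one count for the control terms, and we fix the constant elsewhere so that strategy $1$ matches $1/2$ to leading order.

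I expect step (2) to be the main obstacle, together with the uniform error bookkeeping. Before mixing is known, weights may be $0$ or infinite, so the expansion must be done in terms of $p_j$ with case analysis. The separation $\lambda$ must beat all $\binom{N}{r}$ cross terms, which forces exponents of order $N^2$ in $\log$ and explains the precision $N^{-54N^2}$. I would first try to prove that control players are interior. They have simple payoffs that pin $\omega$ to $\lambda_i$ up to $(1\pm N^{-N})$. Then I would induct on blocks from $i=1$ upward for type L, and downward for type Q. This matches the ordering in \Cref{b:def:source}, which is designed so that each equation reads only variables whose scale has already been controlled.
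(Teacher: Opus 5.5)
Your outline shares the paper's architecture: scale-separated four-player blocks, a pair with normalized sum, control players compensating $u_iv_i$, and decoding by $\omega_{U_i}/(\omega_{U_i}+\omega_{V_i})$. However, the estimate you defer in step (2) is the heart of the proof, and nothing in your construction can deliver it.

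\textbf{The pre-mixing bound.} The input term in $U_i$'s payoff is not $y_j$ but a normalized symmetric coefficient such as $e^{-U_i}_{r_j+3}/(\beta S_{r_j+2})$. This is a sum over all subsets of that size, including products with control players and later blocks. It equals $u_jv_j$ only after block $j$ is known to mix and be normalized. The generic bound $e_2\le e_1^2/2$ gives $1/2$ instead of $1/4$. Then the lower bound $u_i\ge1-a_i-\frac12\sum_j b_{ij}$ from $U_i$'s upper inequality is negative for the gadgets of \cref{b:lem:source}, so nothing forces $U_i$ to mix.

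The missing idea is \cref{b:lem:quarter-bound}: $4e_2(J)^2\le H^2e_2(J)+3He_3(J)$, which is equivalent to $H\sum_a a^3\ge(\sum_a a^2)^2$. It must be combined with two rows that compare consecutive coefficients on every support:
\begin{itemize}
\item the $C_i$ row, at scale $s_i$, giving $H\le s_i(1+O(Nt))$;
\item the $V_i$ row, at a third scale $\tau_i=t^2s_i$ just after the pair, giving $e_3(J)\le\tau_i(1+O(Nt))e_2(J)$.
\end{itemize}
A product row $Z_i$ then passes the normalized prefix coefficient to the next block without losing a factor $1/t$. Together these give \cref{b:lem:cap}.

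\textbf{Pure-$1$ play.} The margins do not exclude pure-$1$ play either, because an infinite weight destroys the coefficient expansion itself. The paper caps the total action-$1$ mass with two mixing auxiliary players, $f_A(k)=c_0-k$ and $f_B(k)=k-T$. It then uses $f_\ell(k)\le-\zeta$ for $k\le\ell$ to make the active encoding players an initial segment. A player who always plays $0$ enters no count and cannot do this. Your count also gives $4q+2$ players, not $4q+3$.

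\textbf{The induction order.} Your plan, upward for type L and downward for type Q, would fail. The source order is explicitly not topological. A type Q equation may read a later type L variable whose equation reads an earlier type Q variable, so the dependencies are cyclic. Moreover, every coefficient a block reads contains the product of the whole earlier prefix, so no block can be analyzed before its predecessors are normalized.

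The paper instead uses a single forward induction, establishing support in the forced order $U_i,V_i,Z_i,C_{i+1}$. So the control players cannot be shown interior first: $Z_i$'s mixing needs $u_iv_i\ge\kappa/2$. Later inputs of a type Q equation are handled only through the arbitrary-support bound.

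\textbf{The role of the ordering.} The ordering actually handles the deletion effect. When $U_i$ reads a later prefix, her own weight $s_iu_i$ is missing. This produces the unknown factor $1/u_i$, not a known rescaling. The type Q equation is exactly $u_i$ times the indifference condition. This is how $x_i^2$ can arise even though a player's own probability never enters her payoffs.

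\textbf{Precision.} Your choice $\lambda_i=N^{-\Theta(N)i}$ makes the reference prefix product $N^{-\Theta(N^3)}$. The required accuracy would then be $N^{-\Theta(N^3)}$ rather than $N^{-54N^2}$. Factorization errors are $\sum_j H/b_j$, so a polynomial separation $t=2^{-36}/N$ per level suffices. This gives $S=t^{d_q}/7$ with $d_q=O(N^2)$.
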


\paragraph{The Coefficient Bound Needed before Mixing.}
For two nonnegative weights with total weight $H$, their product is
at most $H^2/4$. Anonymous coefficients also include products involving
other players, so this elementary observation alone cannot justify the
same bound. The third coefficient measures the extra contribution.
We write $s_i$ for the scale of variable pair $i$ and
$\tau_i=t^2s_i$ for its subsequent control scale, with
a separation parameter $t>0$ chosen so that $Nt$ is small.

\begin{lemma}[Second Coefficients with Small Third Coefficients]\label{b:lem:quarter-bound}
For every finite nonnegative list $J$, with $H=e_1(J)$,
\begin{equation}\label{b:eq:third-order}
 4e_2(J)^2\le H^2e_2(J)+3He_3(J).
\end{equation}
If $e_3(J)\le\tau e_2(J)$ for $\tau\ge0$, then
$e_2(J)\le H^2/4+(3/4)\tau H$.
The coefficient $3/4$ is attained by three equal positive weights
when $\tau=e_3(J)/e_2(J)$.
\end{lemma}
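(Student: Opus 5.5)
The plan is to rewrite \eqref{b:eq:third-order} in terms of power sums, where it turns into a single Cauchy--Schwarz inequality. Write $J=(\omega_1,\ldots,\omega_m)$ with $\omega_j\ge0$, and set $p_k=\sum_j\omega_j^k$, so $p_1=H$. Newton's identities give
\[
 e_2=\tfrac12\bigl(H^2-p_2\bigr),\qquad e_3=\tfrac16\bigl(H^3-3Hp_2+2p_3\bigr).
\]

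\textbf{Step 1: reduce to a power-sum inequality.} Substitute these into both sides. The left side becomes $4e_2^2=(H^2-p_2)^2=H^4-2H^2p_2+p_2^2$. The right side becomes
\[
 H^2e_2+3He_3=\tfrac12H^2(H^2-p_2)+\tfrac12H(H^3-3Hp_2+2p_3)=H^4-2H^2p_2+Hp_3 .
\]
The terms $H^4-2H^2p_2$ cancel, so \eqref{b:eq:third-order} is equivalent to $p_2^2\le p_1p_3$.

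\textbf{Step 2: apply Cauchy--Schwarz.} Write $\omega_j^2=\omega_j^{1/2}\cdot\omega_j^{3/2}$, which is legitimate because the weights are nonnegative. Cauchy--Schwarz then gives
\[
 p_2^2=\Bigl(\sum_j\omega_j^{1/2}\omega_j^{3/2}\Bigr)^2\le\Bigl(\sum_j\omega_j\Bigr)\Bigl(\sum_j\omega_j^3\Bigr)=p_1p_3 .
\]
This proves \eqref{b:eq:third-order}. Equality holds exactly when the vectors $(\omega_j^{1/2})$ and $(\omega_j^{3/2})$ are proportional, that is, when all nonzero weights are equal.

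\textbf{Step 3: deduce the bound on $e_2$.} Assume $e_3\le\tau e_2$ with $\tau\ge0$. Since $H\ge0$, \eqref{b:eq:third-order} gives $4e_2^2\le H^2e_2+3H\tau e_2$.
\begin{itemize}[nosep]
\item If $e_2=0$, the claim $e_2\le H^2/4+(3/4)\tau H$ is trivial, because the right side is nonnegative.
\item If $e_2>0$, dividing by $4e_2$ gives $e_2\le H^2/4+(3/4)\tau H$.
\end{itemize}

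\textbf{Step 4: check sharpness.} Take three equal weights $w>0$. Then $H=3w$, $e_2=3w^2$, $e_3=w^3$, and $\tau=e_3/e_2=w/3$. The bound evaluates to
\[
 \tfrac94w^2+\tfrac34\cdot\tfrac w3\cdot3w=\tfrac94w^2+\tfrac34w^2=3w^2=e_2,
\]
so the constant $3/4$ is attained. This matches the equality case of Cauchy--Schwarz in Step 2.

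The only step requiring an idea is noticing the power-sum reformulation in Step 1. After that, the rest is algebra and a division by $e_2$ that needs the $e_2=0$ case separated out.
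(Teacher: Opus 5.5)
Your proof is correct and follows the paper's argument: the paper also reduces the inequality to $H^2e_2+3He_3-4e_2^2=Hp_3-p_2^2\ge0$ via Cauchy--Schwarz, divides by $4e_2$ when $e_2>0$ (the case $e_2=0$ being immediate), and checks equality at three equal weights. Your explicit Newton-identity computation simply writes out the expansion that the paper leaves to the reader.
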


\begin{proof}
Write $p_r=\sum_{a\in J}a^r$. Expansion gives
$H^2e_2+3He_3-4e_2^2=Hp_3-p_2^2\ge0$
by Cauchy--Schwarz.
If $e_2>0$, divide by $4e_2$ and use $e_3\le\tau e_2$;
otherwise the second bound is immediate.
Three equal weights give equality.
\end{proof}

The contribution of the reduction is to enforce these hypotheses on
arbitrary equilibrium supports. If a normalized prefix coefficient is
small, the quadratic bound is immediate. Otherwise, we factor out the
larger weights, and the control payoffs give
$H\le s_i(1+O(Nt))$ and $e_3(J)\le\tau_i(1+O(Nt))e_2(J)$,
with $\tau_i/s_i=t^2$.
\Cref{b:lem:quarter-bound} then bounds the normalized quadratic
coefficient by $1/4+O(Nt)$.
The source margins use exactly this bound to force the encoding pair
to mix; a generic bound of $1/2$ would leave no useful gap.

\paragraph{Why Later Coefficients Need Compensation.}
Write the pair's weights as $s_i u_i,s_i v_i$.
The control equations will give $u_i+v_i=1+O(Nt)$, so
$x_i=u_i/(u_i+v_i)$ and $u_iv_i=x_i(1-x_i)+O(Nt)$.
Pairs with $(u_i,v_i)=(1/2,1/2)$ and $(1/10,9/10)$ have the same
sum but products $1/4$ and $9/100$. Thus a later coefficient that
selects both players inherits an unwanted dependence on $x_i$.
We assign a later control player a weight approximately
$\chi_{i+1}/(u_iv_i)$ to cancel that factor. Here $\chi_i,s_i,\tau_i$
are successively separated reference scales, specified in
\Cref{b:sec:construction}.

\begin{figure}[htbp]
\centering
\begin{tikzpicture}[
 node distance=6mm,
 box/.style={draw,rounded corners=2pt,align=center,
             minimum height=12mm,text width=25mm},
 >=Stealth]
\node[box] (c) {$C_i$\\$\omega_{C_i}=\Theta(\chi_i)$};
\node[box,right=of c] (pair) {$U_i,\ V_i$\\$s_i u_i,\ s_i v_i$};
\node[box,right=of pair] (z) {$Z_i$\\$\omega_{Z_i}\approx\tau_i$};
\node[box,right=of z] (next) {$C_{i+1}$\\$\displaystyle
 \omega_{C_{i+1}}\approx\frac{\chi_{i+1}}{u_iv_i}$};
\draw[->] (c) -- (pair);
\draw[->] (pair) -- (z);
\draw[->] (z) -- (next);
\node[align=center,below=7mm of pair] (sum)
 {$u_i+v_i\approx1$\\$u_iv_i\approx x_i(1-x_i)$};
\node[align=center,below=7mm of next] (prod)
 {$\displaystyle
 \frac{\omega_{U_i}\omega_{V_i}\omega_{C_{i+1}}}{s_i^2\chi_{i+1}}\approx1$};
\draw[densely dotted] (pair) -- (sum);
\draw[densely dotted] (next) -- (prod);
\end{tikzpicture}
\caption{Players and Constraints for One Variable.
The product constraint cancels $u_iv_i$ from the next normalized prefix.
All approximate relations follow from the equilibrium inequalities in
\cref{b:lem:normalization}.}
\label{b:fig:block}
\end{figure}

\paragraph{Correctness of the Reduction.}
The proof proceeds in three steps. First, two auxiliary players bound
the total probability of action $1$, making all weights finite and
the active encoding players an initial segment.
Second, the control rows bound the normalized count coefficients on
every support, as explained above. Third, a forward induction forces
each pair and its controls to mix and transfers the compensated
prefix to the next block. The resulting indifference equations recover
the source equations with residual $O(Nt)$.

The order of the source equations resolves a further deletion effect.
Deleting $U_i$ leaves an earlier prefix intact, but divides a later
prefix product by $s_i u_i$. Type L reads only earlier terms; type Q
reads later terms and multiplies its payoff equation by $u_i$ to
cancel this factor. Each input term uses one count, giving the
five-count restriction in \Cref{b:thm:core}. The support lemmas in
\Cref{b:sec:correctness} implement this argument before any source
coordinate is decoded.

\subsection{The FIXP-Hardness Reduction}\label{sec:three-strategy-principle}

A third action separates identity from value. For actions $s,c,t$,
let $w_P=\sigma_{P,s}+\sigma_{P,c}$ be player $P$'s signal probability
and $\theta_P=\sigma_{P,c}/w_P$ her conditional value, defined when $w_P>0$.
We prescribe distinct rational signal probabilities $\delta_P\in(0,1)$
and encode $x_i$ as $u_i=1/4+x_i/2$ in a designated player $X_i$.
This interior encoding will let us establish support without assuming
that the source equations already hold.

The quantitative guarantee makes the exact reduction explicit.

\begin{theorem}[Quantitative Three-Strategy Reduction]\label{t:thm:quantitative-reduction}
From a restricted generalized circuit $\mathcal C$ with $G$ gates, one can construct in polynomial time a three-strategy anonymous game $\Gamma(\mathcal C)$ with $N=O(G\log(2G))$ players and rational payoffs in $[0,1]$. There are rational parameters of polynomial encoding length $\xi_{\mathrm{out}}\ge2^{-O(N)}$ and $\Lambda_{\mathrm{out}}\le2^{O(N)}$ such that, for $0\le\varepsilon\le\xi_{\mathrm{out}}$, every $\varepsilon$-$\WSNE$ decodes to a $\Lambda_{\mathrm{out}}\varepsilon$-solution. At zero error the decoder is separable-linear:
\begin{equation}\label{t:eq:main-SL-decoder}
 x_i=\frac{2\sigma_{X_i,c}}{\delta_{X_i}}-\frac12.
\end{equation}
Every logical player has $\delta_{X_i}\in[1/5,4/5]$, so this decoder is
$10$-Lipschitz in the $\ell_\infty$ norm.
\end{theorem}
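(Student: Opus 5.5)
We plan to build $\Gamma(\mathcal C)$ from gadgets of three kinds: \emph{logical players} $X_i$, one per circuit variable; \emph{signal players}, whose job is to fix every total $w_P$ to the prescribed value $\delta_P$; and \emph{gate players}, which produce a quantity with the meaning of a product or a linear function of designated inputs. The main identity is the generating polynomial of the count vector. For actions $s,c,t$, each opponent $Q$ contributes the factor
\[
 1-w_Q+w_Q(1-\theta_Q)y+w_Q\theta_Q z
\]
with $y$ marking $s$ and $z$ marking $c$. If every $w_Q=\delta_Q$ exactly, then the factor belonging to player $Q$ vanishes on a line determined by $\delta_Q$ and $\theta_Q$. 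Evaluating a suitable partial derivative at a point where all other factors are nonzero, but the $Q$-factor's value is chosen (as a rational function of $\delta_Q$ alone), isolates $\theta_Q$ linearly. Such an evaluation is a fixed rational linear combination of count probabilities. A payoff table equal to this combination, rescaled into $[0,1]$, therefore has expectation $\alpha+\beta\,\sigma_{Q,c}$ with rational $\alpha,\beta$ depending only on the $\delta$'s.

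I would first prove a \emph{support lemma}. Each signal player's payoff for $t$ versus $\{s,c\}$ compares against a constant $1-\delta_P$ threshold. Paired with the player's own incentives, this forces $w_P\in[\delta_P-\Lambda\varepsilon,\delta_P+\Lambda\varepsilon]$ in any $\varepsilon$-WSNE. The mechanism is the usual "pay for $s/c$ iff the designated counterpart plays $t$ too rarely" comparison, arranged so both outcomes are supported by the inequalities themselves. The interior encoding $u_i\in[1/4,3/4]$ then ensures that both $s$ and $c$ are supported for $X_i$. Concretely, the payoff of $c$ minus $s$ is an affine function of the gate target $g_i(x)$ minus $u_i$, and pure $s$ or pure $c$ would contradict $g_i\in[1/4,3/4]$.

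With supports established, indifference between $s$ and $c$ gives $\sigma_{X_i,c}/\delta_{X_i}$ equal to the decoded gate output up to $O(\varepsilon)$. For $G_{1-}$ and $G_{\times2}$ the decoded output is an affine function of one isolated coordinate. For $G_{\times2}$ I would use an auxiliary player whose indifference is clipped, giving the $\min$. For $G_\phi$ I need the product of two isolated values. I would add a helper player $Y$ whose encoded value is made equal to $\theta_{X_j}$ times a constant. Then a mixed second-order derivative of the generating polynomial at a point killing both $X_j$'s and $X_k$'s factors yields $\theta_{X_j}\theta_{X_k}$ as a linear form of counts.

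The main obstacle is the quantitative error propagation. Each isolating linear form has coefficients of size $2^{O(N)}$, because the evaluation points are separated by roughly $1/N$ in $\delta$ and the product over the other factors can be exponentially small. So a perturbation of $w_Q$ by $\Lambda\varepsilon$ moves the decoded value by $2^{O(N)}\varepsilon$. I would choose the $\delta_P$ on a grid of spacing $\Theta(1/N)$ inside $[1/5,4/5]$, using $O(\log G)$ bits per player group; this is where $N=O(G\log 2G)$ comes from. I would then bound every derivative via Cauchy-type estimates, giving $\Lambda_{\mathrm{out}}=2^{O(N)}$, and take $\xi_{\mathrm{out}}$ small enough that the support arguments remain valid. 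At $\varepsilon=0$ all identities are exact, so $x_i=2\sigma_{X_i,c}/\delta_{X_i}-1/2$ gives the separable-linear decoder. It is $10$-Lipschitz because $2/\delta_{X_i}\le10$.
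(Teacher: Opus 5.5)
There are two genuine gaps, and both sit exactly where anonymity constrains the construction. The first is calibration. In an anonymous game a player's payoff depends only on her own action and the opponent histogram, so there is no ``designated counterpart.'' A threshold comparison of $t$ against $\{s,c\}$ gives one scalar condition on the other players' totals, not $w_P=\delta_P$ for each $P$ separately. Pinning every $w_P$ to its own node on \emph{every} support, including $w_P\in\{0,1\}$, is the core of the paper's argument:
the common base payoff $G_i(K)$ has expectation $|c_i|(1-\varrho_i(w))$;
the Lagrange conservation identity $\sum_i(w_i-\delta_i)(\varrho_i(w)-1)=0$ turns the one-sided $\WSNE$ inequalities, including those at boundary masses, into two-sided ratio bounds;
the near-Fekete Vandermonde gap excludes every distant $w$;
and the inverse bound for the reciprocal-difference Jacobian localizes.
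You also do not handle that the gate terms perturb a signal action by a fixed amount $\eta B$. One pass therefore only bounds $\|w-\delta\|_\infty$ by $O(\eta B+\xi)$, which does not vanish at $\xi=0$. The paper makes a second pass. The coarse bound gives $0<w_P<1$ and the gate inequalities give $\theta_P<1$, so the two unperturbed actions $s,t$ are both supported, and comparing them gives $\|w-\delta\|_\infty\le2N\xi/c_*$. Without this, the root identities are not exact at an exact equilibrium and the affine decoder is not justified.

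The second gap is the product. Your first-order recovery is essentially the paper's $T^{(1)}$ (evaluate on the diagonal $y=z$ at $\zeta_Q$). However, no fixed point kills the factors of both $X_j$ and $X_k$: the factor $1-\delta_Q+\delta_Qy+\delta_Q\theta_Q(z-y)$ vanishes independently of $\theta_Q$ only at $y=z=\zeta_Q$, and distinct nodes give distinct points. In fact no single anonymous table isolates $\theta_j\theta_k$ at distinct nodes. By partial fractions of $\prod_{\ell\ne p,q}A_\ell$, the part of any table's expectation that is quadratic in the $\theta$'s equals $\sum_{p<q}\theta_p\theta_q(\lambda_p-\lambda_q)/(\zeta_p-\zeta_q)$ for some reals $\lambda_p$. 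Matching $\theta_j\theta_k$ would force $\lambda_j=\lambda_m=\lambda_k$ for any third opponent $m$, so a copying helper $Y$ does not escape this. The second-order statistic at $\zeta_B$ necessarily returns $\theta_B$ times a weighted sum of all other values. The paper removes that sum exactly through the equilibrium equations of $V_i,W_i$, with $W_i$'s unreadable self-term replaced by $1/2$. It also separates active players by $\Theta(\log(2G))$ fillers so that $F_i^{\mathrm{mul}}\in[15/32,17/32]$ for all values, which lets the $V_i,W_i$ support argument run before any gate equation holds. This spacing, not bit length, is the source of $N=O(G\log(2G))$. Finally, your gate step makes $X_i$'s $c$-versus-$s$ payoff depend on $u_i$, her own value, which no payoff can read. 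It also assumes $g_i\in[1/4,3/4]$, which fails a priori (e.g.\ $2u_j-1/4$ reaches $5/4$). The paper closes each loop through separate players: $Z_i$ with controllers $L_i^0,L_i^F$ for the lower stage of $\mathcal P$, and $U_i^z,U_i^b$ for the upper stage. It then derives all supports in an acyclic order.
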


\paragraph{A Root Eliminates the Other Players.}
At $w=\delta$, let $A_j(z)=1-\delta_j+\delta_jz$ and
$\zeta_j=1-1/\delta_j$. With two opponents $j,k$, the histogram
statistics $K=h_s+h_c$ and $L=h_c$ satisfy
$\E[Lz^{K-1}]=\delta_j\theta_jA_k(z)+\delta_k\theta_kA_j(z)$.
At $z=\zeta_j$, the second summand vanishes, and division by the
nonzero rational $\delta_jA_k(\zeta_j)$ recovers $\theta_j$.
The same cancellation holds with any number of opponents.
Second-order statistics leave products involving the player at the
chosen root. The following construction cancels the unwanted terms.

\begin{figure}[htbp]
\centering
\begin{tikzpicture}[>=Stealth,
 box/.style={draw,rounded corners=2pt,align=center,minimum height=13mm},
 every node/.style={font=\small}]
\node[box,text width=28mm] (counts) at (0,0)
 {Opponent\\Histogram\\$K=h_s+h_c$\\$L=h_c$};
\node[box,text width=44mm] (first) at (5.1,1.15)
 {First-Order Statistic\\$\E[Lz^{K-1}]$};
\node[box,text width=44mm] (second) at (5.1,-1.15)
 {Second-Order Statistic\\$\E[\binom L2 z^{K-2}]$};
\node[box,text width=34mm] (value) at (11.2,1.15)
 {Conditional Value\\$\theta_j$};
\node[box,text width=34mm] (product) at (11.2,-1.15)
 {A Product\\$\theta_j\theta_k$};
\draw[->] (counts) -- (first);
\draw[->] (counts) -- (second);
\draw[->] (first) -- node[above,align=center] {$z=\zeta_j$\\Normalize} (value);
\draw[->] (second) -- node[below] {$z=\zeta_B$} (product);
\end{tikzpicture}
\caption{Individual Probabilities from Anonymous Counts. A root makes
all unwanted first-order terms vanish. At second order, auxiliary equilibrium
equations cancel the remaining pair products.}
\label{fig:root-recovery}
\end{figure}

\paragraph{Exact Multiplication by Cancellation.}
To see the algebra, work at the prescribed signals and set the filler
players' conditional values to zero. These players will separate active
nodes. Consider inputs $u_j,u_k$ and four auxiliary players $B,C,V,W$ with
conditional values $\beta,\chi,\nu,\psi$, respectively; we suppress
the gate index in this calculation. Place $V,B$ at adjacent roots
and put $h=\zeta_B-\zeta_V>0$ and $d=1/h$.
Fix an observer distinct from the four auxiliaries and the input players.
Let $S$ be the sum of $\theta_p/(\zeta_B-\zeta_p)$ over its active opponents
other than $B,V,W$. Subtracting a first-order term from the second-order
statistic gives the expected value
\[
 \beta\left[d\left(\nu-\frac12\right)
       +\frac{\psi}{\zeta_B-\zeta_W}+S\right].
\]
Define
\[
 F^{\mathrm{mul}}
 =\frac12+h\left(u_k-\frac{1}{2(\zeta_B-\zeta_W)}-S\right).
\]
The four auxiliary payoff differences have ideal residuals
$D_B=\chi-1/2$, $D_C=u_j-\beta$, $D_V=\psi-1/2$, and
$D_W=F^{\mathrm{mul}}-\nu$.
Each residual depends only on opponents' conditional values. In particular, $W$ need not read
its own value $\psi$: its target uses the constant $1/2$ that the
$V,W$ equations enforce. When the four residuals vanish, $dh=1$ gives
\begin{align*}
 d\left(\nu-\frac12\right)
   &=u_k-\frac{1}{2(\zeta_B-\zeta_W)}-S,\\
 \beta\left[d\left(\nu-\frac12\right)
       +\frac{\psi}{\zeta_B-\zeta_W}+S\right]
   &=\beta u_k=u_j u_k.
\end{align*}
Thus the multiplication is exact, with no truncation of smaller terms.

The support argument must justify setting these residuals to zero.
We separate active players by filler players, whose conditional values
are forced to zero, except for the adjacent pair $V,B$.
This spacing keeps $F^{\mathrm{mul}}$ in $[15/32,17/32]$ for every
choice of the other conditional values. The $V,W$ inequalities then
force both values into the interior before any source equation is
used. At zero gate error, cancellation first produces $\beta u_k$; the projection
constraints next put every $u_j$ in the interior, and only then do
the $B,C$ inequalities force $\beta=u_j$.
\Cref{t:sec:semantic-framework,t:sec:gate-overview} give the anonymous
tables and the quantitative support proof. The calculation above is
at the prescribed signal probabilities; we next explain how to
enforce them at every equilibrium.

\paragraph{Calibration Must Hold Globally.}
Root recovery requires prescribed signal probabilities.
Local invertibility alone cannot exclude a distant or boundary
equilibrium. We instead enforce a product map and use the following
global statement. Fekete points maximize the Vandermonde product
$\prod_{i<j}(y_j-y_i)$ over ordered $y\in[0,1]^N$.
Rational approximations to these points satisfy the product-gap
hypothesis below; \Cref{t:sec:fekete} constructs them.

For a signal vector $w$, write $c_i=\prod_{j\ne i}(\delta_j-\delta_i)$,
$c_*=\min_i|c_i|$, $b=\min_i\{\delta_i,1-\delta_i\}$, and
$g=\min_{i<N}(\delta_{i+1}-\delta_i)$. Define
\begin{equation}\label{t:eq:ratio-map}
 Q_i(w)=\prod_{j\ne i}(w_j-\delta_i),
 \qquad
 \varrho_i(w)=Q_i(w)/c_i.
\end{equation}
\begin{theorem}[Global Calibration on Arbitrary Supports]\label{t:thm:robust-localization}
Let $N$ be even, $0<\delta_1<\cdots<\delta_N<1$, and
$0<\rho_0\le g^2/(4N^3)$. Suppose the Lagrange polynomials
$\ell_i^\delta(z)=\prod_{j\ne i}(z-\delta_j)/(\delta_i-\delta_j)$ satisfy
\[
 \left|\prod_i\ell_i^\delta(w_i)\right|
 \le \exp\!\left(-\frac{\rho_0^2}{32N}\right)
 \quad\text{whenever }w\in[0,1]^N
 \text{ and }\|w-\delta\|_\infty\ge\rho_0.
\]
Consider a three-action game whose expected $s$-payoff is
$|c_i|(1-\varrho_i(w))$, whose $t$-payoff is zero, and whose
$c$-payoff differs from its $s$-payoff by at most $\gamma$.
Every $\xi$-$\WSNE$ with
\begin{equation}\label{t:eq:global-gamma}
 \gamma+\xi\le\frac{c_*b\rho_0^2}{512N^3}
\end{equation}
satisfies $\|w-\delta\|_\infty<\rho_0$ and
$\|w-\delta\|_\infty\le4N^2(\gamma+\xi)/(c_*b)$.
\end{theorem}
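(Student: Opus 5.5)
The plan rests on a product identity that turns the local best-response inequalities into a bound on the product $\prod_i\ell_i^\delta(w_i)$ that the hypothesis controls. Write $\eta=(\gamma+\xi)/c_*$.

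\textbf{Step 1: one-sided inequalities.} At $w=\delta$ every $Q_i(\delta)=c_i$, so $\varrho_i(\delta)=1$ and all payoffs vanish. This is the intended fixed point. Now fix a $\xi$-$\WSNE$. I would record what each support pattern of player $i$ implies, using that the $t$-payoff is $0$ and the $c$-payoff is within $\gamma$ of the $s$-payoff.
\begin{itemize}[nosep]
\item If $w_i>0$, some action in $\{s,c\}$ is supported. It is within $\xi$ of the $t$-payoff, so $|c_i|(1-\varrho_i(w))\ge-\gamma-\xi$, that is, $\varrho_i(w)\le1+\eta$.
\item If $w_i<1$, action $t$ is supported. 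Then $|c_i|(1-\varrho_i(w))\le\xi$, so $\varrho_i(w)\ge1-\eta$.
\end{itemize}
Hence interior coordinates satisfy $|\varrho_i-1|\le\eta$, coordinates with $w_i=0$ satisfy only $\varrho_i\ge1-\eta$, and coordinates with $w_i=1$ satisfy only $\varrho_i\le1+\eta$.

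\textbf{Step 2: product identity and the interior case.} Both $\prod_i\prod_{j\ne i}$ and $\prod_j\prod_{i\ne j}$ range over the same ordered pairs, so
\[
\prod_i\varrho_i(w)=\prod_i\prod_{j\ne i}\frac{w_j-\delta_i}{\delta_j-\delta_i}=\prod_j\ell_j^\delta(w_j).
\]
If every coordinate is interior, then $\prod_i\varrho_i\ge(1-\eta)^N$. Condition \eqref{t:eq:global-gamma} makes $\eta\le\rho_0^2/(512N^3)$, so $(1-\eta)^N>\exp(-\rho_0^2/(32N))$. The hypothesis on the Lagrange product then forces $\|w-\delta\|_\infty<\rho_0$.

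\textbf{Step 3 (main obstacle): excluding boundary coordinates.} In logarithmic form, $\sum_i\log\varrho_i=\log|\prod_j\ell_j^\delta(w_j)|$. Only coordinates with $w_i=1$ can contribute a very negative $\log\varrho_i$, and only coordinates with $w_i=0$ can contribute a large positive one. So the task is to show that no coordinate sits at $0$ or $1$ with its $\varrho_i$ far from $1$.

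I would first try a sign and ordering argument. Since $\delta_i\in[b,1-b]$, a boundary value $w_i\in\{0,1\}$ is at distance at least $b$ from $\delta_i$. I would track the sign of $Q_i(w)=\prod_{j\ne i}(w_j-\delta_i)$ against the sign of $c_i$, which has $N-i$ negative factors. This sign comparison is where I expect $N$ even to enter. The goal is to show that a coordinate at $1$ with $\varrho_i\ll1$ forces some other coordinate to violate its own one-sided inequality, for instance a coordinate at $0$ with $\varrho_j<1-\eta$.

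If this fails, I would fall back on a continuity argument. Move the boundary coordinates inward along a path on which every inequality from Step 1 persists. Since $\varrho_i$ does not depend on $w_i$, such moves only change the other players' $\varrho_j$, in a monotone way for each factor. The aim is to reach an all-interior configuration that still lies at distance at least $\rho_0$ from $\delta$, which contradicts Step 2.

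\textbf{Step 4: the linear bound.} Once $\|w-\delta\|_\infty<\rho_0$, all coordinates are interior, because $\rho_0\le g^2/(4N^3)<b$. Then $|\varrho_i-1|\le\eta$ for every $i$. Linearize with $e=w-\delta$:
\[
\varrho_i(w)-1=\sum_{j\ne i}\frac{e_j}{\delta_j-\delta_i}+O\bigl(\|e\|^2N^2/g^2\bigr).
\]
The matrix with entries $1/(\delta_j-\delta_i)$ off the diagonal needs an explicit inverse bound. I would get it from Lagrange/Cauchy-matrix identities, expecting a bound of order $N^2/b$ together with the $c_*$ normalization. The quadratic remainder is absorbed using $\rho_0\le g^2/(4N^3)$. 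This yields $\|e\|_\infty\le4N^2(\gamma+\xi)/(c_*b)$.
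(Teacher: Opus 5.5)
\textbf{Step 3, the crux, is missing its key ingredient.} Neither of your suggested routes closes it. The coupling between coordinates that you are looking for is exactly the paper's Lagrange conservation identity
\[
 \sum_{i=1}^N(w_i-\delta_i)\bigl(\varrho_i(w)-1\bigr)=0,
\]
which holds for every $w$. To prove it, interpolate $H(z)=\prod_j(w_j-z)-\prod_j(\delta_j-z)$ at the nodes and compare the coefficients of $z^{N-1}$; here $H$ has degree at most $N-1$ and $H(\delta_i)=(w_i-\delta_i)Q_i(w)$.

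Set $T=(\gamma+\xi)/c_*$. Your Step~1 already shows that each summand $a_i=(w_i-\delta_i)(\varrho_i-1)$ is at most $T$ in all three support cases:
\begin{itemize}
\item at $w_i=0$, the factor $w_i-\delta_i$ is negative and you know $\varrho_i\ge1-T$;
\item at $w_i=1$, it is positive and you know $\varrho_i\le1+T$;
\item at an interior $w_i$, you have $|\varrho_i-1|\le T$ and $|w_i-\delta_i|\le1$.
\end{itemize}
Since the summands add to zero, $a_i\ge-(N-1)T$. At a boundary coordinate $|w_i-\delta_i|\ge b$, so $|\varrho_i-1|\le NT/b$ for every $i$. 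Only then are all $\varrho_i>1/2$, and only then does your Step~2 product argument apply on arbitrary supports.

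Without this identity, your logarithmic bookkeeping is not even defined. A coordinate with $w_i=1$ has only an upper bound on $\varrho_i$, and $\varrho_i$ may be negative. The continuity fallback also has no evident mechanism. Moving $w_i$ inward raises or lowers the factor $(w_i-\delta_j)/(\delta_i-\delta_j)$ of $\varrho_j$ according to the sign of $\delta_i-\delta_j$. This can break the two-sided inequalities of interior players.

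\textbf{Step 4 rests on a false claim.} The inequality $g^2/(4N^3)<b$ does not follow from the hypotheses. It fails in the paper's own instantiation, where $b=\alpha/2\le\rho_0/(8N)<\rho_0$. So boundary coordinates can lie inside the $\rho_0$-box, and you may not assume $|\varrho_i-1|\le\eta$ for all $i$. The factor $1/b$ in the final bound comes precisely from the conservation estimate above, not from the inverse.

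The inverse bound you need is $\|J^{-1}\|_\infty\le\sqrt N$. It follows from $DJ-JD=I-\one\one^{\mathsf T}$ with $D=\operatorname{diag}(\delta_i)$, together with the fact that $J$ is skew-symmetric of even dimension. This, rather than a sign count for $c_i$, is where evenness of $N$ enters: an odd-dimensional skew-symmetric matrix is singular. With $\|J^{-1}\|_\infty\le N$, and the quadratic remainder absorbed via $\rho_0\le g^2/(4N^3)$, you get
\[
\|w-\delta\|_\infty\le2N\max_i|\varrho_i-1|\le4N^2(\gamma+\xi)/(c_*b).
\]
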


\emph{Proof Idea.}
Lagrange interpolation gives the conservation identity
$\sum_i(w_i-\delta_i)(\varrho_i(w)-1)=0$.
The equilibrium inequalities bound each summand from above,
even when $w_i=0$ or $1$; conservation supplies its lower bound.
Thus every ratio differs from one by at most
$2N(\gamma+\xi)/(c_*b)$.
Their product equals $\prod_i\ell_i^\delta(w_i)$.
The assumed gap excludes every point outside the $\rho_0$-box,
and local inversion gives the stated displacement bound.
\Cref{t:sec:calibration} proves each step, including the boundary cases.

The local step has a short proof that does not lose a factor
depending on the smallest node gap.

\begin{lemma}[Inverse Bound for Reciprocal Differences]\label{t:lem:elementary-inverse}
For even $N$ and distinct $a_1<\cdots<a_N$, the matrix $J_{ii}=0$, $J_{ij}=1/(a_j-a_i)$ satisfies $\|J^{-1}\|_2\le a_N-a_1$. For signal probabilities in $[0,1]$, $\|J^{-1}\|_\infty\le\sqrt N$.
\end{lemma}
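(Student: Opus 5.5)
The plan is to bound the smallest singular value of $J$ from below using a commutator identity with the diagonal matrix of nodes. Let $D=\operatorname{diag}(a_1,\ldots,a_N)$, and let $D'=D-mI$ with $m=(a_1+a_N)/2$, so that $\|D'\|_2=(a_N-a_1)/2$. A direct entrywise computation shows that for $i\ne j$,
\[
(D'J-JD')_{ij}=\frac{a_i-a_j}{a_j-a_i}=-1,
\]
while the diagonal entries of the commutator vanish. Hence
\[
D'J-JD'=I-\one\one^{\mathsf T}.
\]
On the orthogonal complement of $\one$ this quadratic form is the identity. That complement has codimension one, which is what we will exploit.

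$J$ is real skew-symmetric, hence normal. Its singular values are therefore the moduli $|\lambda|$ of its eigenvalues $\pm i\lambda$, and $J^{\mathsf T}J=-J^2$ is symmetric positive semidefinite. Fix an eigenvalue $\mu=\lambda^2\ge0$ of $-J^2$, and let $E$ be its real eigenspace. The key parity fact is that $\dim E$ is even. For $\lambda\ne0$, $E$ is invariant under $J$ and $J$ acts on it without real eigenvalues. For $\lambda=0$, $E=\ker J$, and $\dim\ker J\equiv N\pmod 2$ because a skew-symmetric matrix has even rank; here we use that $N$ is even.

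So if $E\ne\{0\}$, then $\dim E\ge2$, and we can pick a real unit vector $y\in E$ with $\one^{\mathsf T}y=0$. Then $\|Jy\|_2=|\lambda|$, and
\[
1=y^{\mathsf T}(I-\one\one^{\mathsf T})y=y^{\mathsf T}D'Jy-y^{\mathsf T}JD'y\le2\|D'\|_2\|Jy\|_2=(a_N-a_1)|\lambda|.
\]
For $\lambda=0$ this is a contradiction, so $\ker J=0$ and $J$ is invertible. For every other eigenvalue it gives $|\lambda|\ge1/(a_N-a_1)$. Consequently $\|J^{-1}\|_2=1/\min|\lambda|\le a_N-a_1$.

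For the second claim, use the standard norm comparison $\|M\|_\infty\le\sqrt N\|M\|_2$. Signal probabilities lie in $[0,1]$, so $a_N-a_1\le1$, and hence $\|J^{-1}\|_\infty\le\sqrt N$.

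The main obstacle is the parity step, which guarantees a test vector orthogonal to $\one$ inside each eigenspace. This is exactly where evenness of $N$ enters: for odd $N$, $J$ is singular.
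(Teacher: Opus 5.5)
Your proposal is correct and follows the paper's own proof. It uses the same commutator identity $DJ-JD=I-\one\one^{\mathsf T}$ and the same even dimension of the least singular subspace to get a unit test vector orthogonal to $\one$. It then applies the same bound $1\le(a_N-a_1)\|Jv\|_2$ and the same $\sqrt N$ norm comparison. You additionally justify the parity step through the real eigenspaces of $-J^2$ and the even rank of skew-symmetric matrices, a fact the paper only asserts.
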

\begin{proof}
With $D=\operatorname{diag}(a_i)$, $DJ-JD=I-\one\one^{\mathsf T}$. The least singular value $\sigma$ of a real skew-symmetric even-dimensional matrix has a right singular subspace of dimension at least two, even if $\sigma=0$. Choose a unit vector $v$ in that subspace with $\one^{\mathsf T}v=0$. For $m=(a_1+a_N)/2$,
$1=v^{\mathsf T}(DJ-JD)v =2v^{\mathsf T}(D-mI)Jv \le(a_N-a_1)\sigma$.
Hence $\sigma>0$ and the spectral bound follows. The infinity-norm bound uses $\|M\|_\infty\le\sqrt N\|M\|_2$.
\end{proof}

\paragraph{From Equilibria to Exact Circuit Solutions.}
Gate terms perturb only action $c$ by a small positive scale $\eta$.
Calibration first places every $w_P$ in the interior, and a first
application of the gate inequalities shows $\theta_P<1$.
Therefore both unperturbed actions $s,t$ are supported.
Comparing them removes the gate perturbation from the calibration
bound. A second application of the same gate inequalities now gives
errors proportional to the WSNE error alone.
At an exact equilibrium, we obtain $w=\delta$ and exact source
equations, making the affine decoder in \Cref{t:eq:main-SL-decoder}
valid. \Cref{t:sec:gate-overview} carries out these two applications.

\section{PPAD-Hardness with Two Strategies}\label{sec:two-strategy-overview}

We prove \Cref{b:thm:core} by simulating the quadratic system from
\Cref{b:sec:source} with anonymous payoffs. The players in
\Cref{b:fig:block} enforce the normalization and product compensation
described there. We specify their payoffs, establish the required supports,
and show how every approximate equilibrium yields a solution to the
quadratic system.

\subsection{Players and Payoffs}\label{b:sec:construction}

Given a source with $q$ variables, use $N=4q+3$ players.
The first $D=4q+1$ are called the \emph{encoding players}: player $1$,
followed by blocks $(C_i,U_i,V_i,Z_i)$ for $i=1,\ldots,q$.
The final two players are $A$ and $B$. Put $r_i=4i-2$, so the
indices of block $i$ are $r_i,r_i+1,r_i+2,r_i+3$.
Player $A$ has index $D+1$ and also serves as $C_{q+1}$;
player $B$ has index $N$.

The pair $(U_i,V_i)$ encodes variable $x_i$. The other players enforce
the conditions needed for this encoding: $C_i$ controls the total
weight assigned to the pair and later players, $V_i$ also controls the
weight after the pair, and $Z_i$ imposes the product constraint used
to normalize the next block. Players $A$ and $B$ bound the total
probability of strategy $1$. These roles guide the construction;
the correctness proof derives the required conditions from the
equilibrium inequalities.

Take $t=2^{-36}/N$. The proof will require only that the accumulated error
$Nt$ is a sufficiently small constant. We separate the variable-pair scale
from its preceding control scale by $t^4$, so coefficient factorization
remains accurate even before mixing is known. The other gaps control the
compensation of products. This choice is verified in \cref{b:sec:correctness}.
Every implicit constant in $O(\cdot)$ below is independent of $q,N,i,t$
and the source instance. We specify the fixed proof constants with
\cref{b:lem:block-step}.

Assign reference weights $\bar\omega_1=1/7$, $\chi_i=t^{7i-6}$,
$s_i=t^{7i-2}$, $\tau_i=t^{7i}$, and $\chi_{q+1}=t^{7q+1}$.
Here $\bar\omega_{C_i}=\chi_i$, $\bar\omega_{U_i}=\bar\omega_{V_i}=s_i$,
$\bar\omega_{Z_i}=\tau_i$, and $\bar\omega_A=\chi_{q+1}$.
These reference weights record the scales used in the construction.
The control weights vary with the encoded products to normalize
successive prefixes.
Put $S_0=1$ and $S_k=\prod_{\ell=1}^k\bar\omega_\ell$ for
$1\le k\le D+1$.
Set $K_i=s_i^2\chi_{i+1}$, $T=1/8$, $c_0=3/4$,
$b_0=c_0-T=5/8$, and $\beta=b_0/(1-b_0)=5/3$.
The scale relations used in the proof are
\begin{equation}\label{b:eq:scales}
 \frac{s_i}{\chi_i}=t^4,\qquad
 \frac{\tau_i}{s_i}=t^2,\qquad
 \frac{\chi_{i+1}}{\tau_i}=t,\qquad
 \frac{\tau_i^2}{s_i\chi_{i+1}}=t.
\end{equation}

\paragraph{Intended Constraints.}
We first explain the intended constraints, temporarily omitting the
small error terms. Write $\omega_{U_i}=s_i u_i$ and $\omega_{V_i}=s_i v_i$.
The intended normalization is $u_i+v_i=1$, so the decoded variable is
$x_i=u_i$ and the product $u_iv_i$ represents $x_i(1-x_i)$.
When the normalized prefix products equal one and the lower-scale
terms are omitted, the $U_i$ row is designed to impose
\begin{align*}
 0&=v_i-a_i-\sum_{j<i}b_{ij}u_jv_j
                  &&\text{(type L)},\\*
 0&=u_i(v_i-a_i)+c_i+\sum_{j>i}d_{ij}u_jv_j
                  &&\text{(type Q)}.
\end{align*}
Substituting $v_i=1-u_i$ gives exactly the source equations in
\cref{b:eq:linear,b:eq:quadratic}. In type Q, the displayed expression is
the payoff condition multiplied by $u_i$. Deleting $U_i$ from a later
prefix divides its product by $\omega_{U_i}=s_i u_i$, so this multiplication
cancels the resulting reciprocal factor. For type L, deleting $U_i$
leaves each earlier prefix intact. This explains the ordering of the
two equation types and the input terms in \cref{b:tab:payoffs}.
The rows for $C_i,V_i,Z_i$ enforce the normalization and product
constraints needed by this calculation. We next specify the payoff
tables and state the lemmas that justify the intended encoding.
\Cref{b:sec:correctness} proves these lemmas from the equilibrium
inequalities before deriving a solution to the quadratic system.

\paragraph{Payoff Functions.}

We now specify the payoff tables that implement these constraints.
The smallest reference prefix product determines their scale and
the required accuracy:
\begin{equation}\label{b:eq:precision-budget}
 S=S_{D+1},\qquad L=\frac{8}{S},\qquad \zeta=S^2,\qquad
 \eps_0=\frac{S^6}{2^{19}N}.
\end{equation}
Here $L$ bounds the unscaled payoff entries. The baseline $-\zeta$
will make each mixing player's coefficient contribution positive,
and $\eps_0$ makes the converted unscaled tolerance exactly $\zeta/8$.
\Cref{b:sec:correctness} verifies this accuracy and all the smallness thresholds.
Write $f_\ell(k)$ for the unscaled payoff difference.
For the auxiliary players, set $f_A(k)=c_0-k$ and $f_B(k)=k-T$.
For each encoding player, begin with $f_\ell(k)=-\zeta$ for every
$k$, then add the entries in \cref{b:tab:payoffs}. Entries with the same
player and count are added. The last three rows are included only for
the indicated equation type.

\begin{table}[!ht]
\centering
\renewcommand{\arraystretch}{1.1}
\begin{tabular}{@{}lll@{}}
\toprule
Player & Opponent Count $k$ & Term Added to $f_\ell(k)$\\
\midrule
$1$ & $2$ & $1/(\beta\chi_1)$\\
$1$ & $1$ & $-1/\beta$\\
$C_i$ & $r_i+1$ & $1/(\beta S_{r_i-1}s_i)$\\
$C_i$ & $r_i$ & $-1/(\beta S_{r_i-1})$\\
$V_i$ & $r_i+3$ & $1/(\beta S_{r_i}s_i\tau_i)$\\
$V_i$ & $r_i+2$ & $-1/(\beta S_{r_i}s_i)$\\
$Z_i$ & $r_i+4$ & $1/(\beta S_{r_i}K_i)$\\
$Z_i$ & $r_i+1$ & $-1/(\beta S_{r_i})$\\
$U_i$ & $r_i+2$ & $1/(\beta S_{r_i}s_i)$\\
$U_i$ & $r_i+1$ & $-a_i/(\beta S_{r_i})$\\
$U_i$, type L input $j<i$ & $r_j+3$ &
$-b_{ij}/(\beta S_{r_j+2})$\\
$U_i$, type Q input $j>i$ & $r_j+2$ &
$d_{ij}s_i/(\beta S_{r_j+2})$\\
$U_i$, type Q constant & $D+1$ &
$c_i s_i/(\beta S_{D+1})$\\
\bottomrule
\end{tabular}
\caption{Nonconstant Payoff Terms for the Encoding Players.
Every encoding table additionally has the constant term $-\zeta$.}
\label{b:tab:payoffs}
\end{table}

Only the $U_i$ rows depend on the source coefficients. The remaining
rows enforce the same support and normalization conditions for every
source instance.
Player $1$ and each $C_i,V_i,Z_i$ have two listed counts.
Each $U_i$ has two local counts, at most two input counts, and,
for type Q, one constant count. Thus every encoding player has
at most five exceptional counts; coinciding counts only reduce this
number.

The normalized payoffs are $\pi_\ell(0,k)=1/2$ and
$\pi_\ell(1,k)=1/2+f_\ell(k)/(2L)$.
Every listed count is at most $N-1$. The reference products have magnitude
$N^{-O(N^2)}$, and every unscaled payoff has magnitude at most $L=8/S$.
Thus the normalized payoffs lie in $[0,1]$, and their rational descriptions
have polynomial length. We give the bounds in \Cref{b:sec:induction}.

There is one useful feature of the signs and indices:
\begin{equation}\label{b:eq:prefix}
 f_\ell(k)\le-\zeta
 \quad\text{for every encoding player }\ell\text{ and every }k\le\ell.
\end{equation}
Indeed, every positive term occurs at a count strictly above the
player's index. This property will restrict the equilibrium support
before we use any approximate coefficient identities.

\paragraph{From a Payoff Row to a Coefficient Constraint.}
For example, the two entries for $C_i$ compare consecutive coefficients
at scale $s_i$. At a profile where all probabilities are less than one,
\cref{eq:two-strategy-generating} gives its expected unscaled payoff difference as
\[
 -\zeta+\frac{P_{0,C_i}}{\beta S_{r_i-1}}
 \left(\frac{e_{r_i+1}^{-C_i}}{s_i}-e_{r_i}^{-C_i}\right).
\]
The support argument in \cref{b:lem:auxiliary} establishes this probability
condition and shows that $C_i$ uses strategy $0$.
Its payoff condition then gives an upper bound on the parenthesized
expression, whether or not it mixes. If it mixes, the corresponding
lower bound is also available. The tolerance absorption in
\cref{b:eq:normal} turns these into the relative coefficient bounds
used below. The other control rows compare coefficients in the same way.

\subsection{Correctness of the Reduction}\label{b:sec:correctness}

The correctness proof uses two-sided payoff inequalities, so we must prove
that the encoding players mix. We first obtain coefficient bounds on
arbitrary supports, then use these bounds to force mixing in each variable
pair and pass the prefix normalization to the next block. Only after
this induction do we decode the source values.

\paragraph{Initial Support.}

We begin with an $\varepsilon_0$-$\ANE$, where
$\varepsilon_0$ is defined in \Cref{b:eq:precision-budget}.
By \Cref{lem:ane-wsne}, this converts to a well-supported equilibrium with
unscaled tolerance $\eta=\zeta/8$. The conversion threshold is verified in
\Cref{b:sec:induction}. In particular, $\eta<\zeta/4$ and
$\eta\le t^2/8<T$. All probabilities and weights below refer to this
converted profile.

Let $\mu=\sum_{\ell=1}^D p_\ell$ and write $a=p_A$, $b=p_B$.
The auxiliary players' expected unscaled differences are
$c_0-\mu-b$ and $\mu+a-T$.

The auxiliary players first bound the total probability of action $1$. This makes every encoding weight finite and gives an initial segment of active encoding players, before we assume that any encoding player mixes.

\begin{lemma}\label{b:lem:auxiliary}
Both $A$ and $B$ mix. Every encoding probability is
less than $T+\eta<1/4$, and the encoding players choosing $1$ with positive
probability form a nonempty initial segment of the player order.
Moreover, $|\mu+a-T|\le\eta$, $|\mu+b-c_0|\le\eta$, and
$P_{0,\ell}>1/5$.
\end{lemma}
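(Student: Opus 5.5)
We work throughout in the converted profile. It is an $\eta$-$\WSNE$ in unscaled units, so a supported action has unscaled difference within $\eta$ of the best response.

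\textbf{Step 1: $A$ and $B$ both mix.} The expected unscaled differences are $f_A=c_0-\mu-b$ and $f_B=\mu+a-T$. We argue by cases.

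If $B$ plays $0$ purely, then $b=0$ and $\mu+a\le T+\eta$. Then $f_A\ge c_0-T-\eta>0$, so $A$ plays $1$ purely and $a=1$. This gives $\mu+a\ge1>T+\eta$, a contradiction.

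If $B$ plays $1$ purely, then $b=1$ and $f_A=c_0-\mu-1<-\eta$, so $A$ plays $0$ purely. Then $\mu\ge T-\eta$. The remaining task is to rule out $\mu$ small, and also the configuration with $\mu$ large. Here I would use \eqref{b:eq:prefix}. Consider the first encoding player $\ell$ (in index order) with $p_\ell>0$. Every opponent count $k\le\ell-1$ has $f_\ell(k)\le-\zeta$, and only higher counts carry positive terms. I want to show that counts above the player's index have tiny probability when $\mu$ is bounded, so that the expected difference is about $-\zeta$.

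This suggests first deriving a global cap. From $B$'s inequality, a supported $0$ gives $\mu+a\le T+\eta$. In the case $b=1$, supported $1$ gives $\mu+a\ge T-\eta$, and $A$'s row, with $A$ playing $0$, gives $c_0-\mu-1\le\eta$, which is automatic. So this case does not immediately bound $\mu$ from above. Since each $p_\ell\le1$ and $\mu$ could be as large as $D$, the prefix argument is needed.

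\textbf{Step 2: initial segment.} Let $\ell$ be a player with $p_\ell>0$, and suppose some earlier encoding player $m<\ell$ has $p_m=0$. Then $\ell$'s opponent count $K_{-\ell}$ counts action $1$ among the other $N-1$ players, with player $m$ contributing $0$. The positive entries of $f_\ell$ sit at counts $\ge\ell+1$ (the second-highest positive count is $\ell+1$, up to $D+1$). Reaching such a count requires at least $\ell+1$ other players to play $1$, including some player with index exceeding $\ell$ or $A,B$. I do not see this yielding a contradiction directly.

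Instead I would proceed inductively on the first zero. Suppose players $1,\dots,m-1$ are active and $m$ is inactive. For every active $\ell>m$, at most $\ell-1$ of the encoding predecessors are active. The bound then comes from probabilities: counts $\ge\ell+1$ require $\ge2$ players among those with index $>\ell$, or $A,B$. Their total probability is bounded by $\mu+a+b$, which is less than $1+T+\eta$ once $\mu+a\le T+\eta$. The payoff magnitudes are scaled by $1/S_{\cdot}$, so the positive terms are weighted by products of small probabilities. Showing this weighting suffices is exactly where I expect the main obstacle to lie: the positive entries are huge (order $1/S$), so one must check that tail probabilities beat them. I would bound $\Pr(K_{-\ell}\ge\ell+1)$ by $e_{\ell+1}$ of the weights, and compare that against the normalizing prefix products $S_{r}$ using the scale relations \eqref{b:eq:scales}. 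Failing a direct bound, I would restrict to the maximal active initial segment and argue about the last active player beyond a gap.

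\textbf{Step 3: consequences.} Once the cap $\mu+a\le T+\eta$ holds, every $p_\ell\le T+\eta<1/4$. Mixing of both $A$ and $B$ yields the two-sided bounds $|\mu+a-T|\le\eta$ and $|\mu+b-c_0|\le\eta$. Nonemptiness follows because $\mu=0$ would force $a\ge T-\eta>0$ and $b\ge c_0-\eta$, which is consistent only if $A$ mixes with $a\approx T$; this is compatible, so nonemptiness needs separate handling. Player $1$ has $f_1(1)=-\zeta-1/\beta$ and $f_1(2)\approx1/(\beta\chi_1)$, and $\Pr(K=2)$ is at least about $ab>0$, making strategy $1$ strictly better. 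Finally, $P_{0,\ell}\ge\prod(1-p_j)\ge1-\mu-a-b$. This crude bound is too weak since $b\approx c_0$. I would instead bound by $(1-b)\,e^{-O(\mu+a)}\ge(3/8-\eta)(1-T-\eta)>1/5$, using $1-x\ge e^{-x-x^2}$ for $x<1/4$.
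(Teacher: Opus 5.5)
\textbf{There is a genuine gap, at the point you flag, and the obstacle you anticipate is not the real one.} In the case $b=1$ (and likewise $a=0$), nothing bounds $\mu$. Individual $p_\ell$ may be close to $1$, and the positive entries have size of order $1/S$. At this stage you cannot compare $\Pr(K_{-\ell}\ge\ell+1)$ with prefix products, because such weight bounds are derived later from this very lemma.

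The missing idea is to combine \eqref{b:eq:prefix} with a \emph{deterministic} count of the opponents who can choose $1$, via a descending induction. Once $A$ plays $0$ surely, player $D$ has at most $(D-1)+1=D$ opponents who can choose $1$. So every attainable count satisfies $f_D(k)\le-\zeta$, her expected difference is at most $-\zeta<-\eta$, and she plays $0$ surely. Player $D-1$ then has at most $D-1$ such opponents, and so on down to player $1$. This gives $\mu=0$, contradicting $\mu\ge T-\eta$ in the case $b=1$.

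You also never prove that $A$ mixes. If $a=1$, then $B$'s difference exceeds $\eta$, forcing $b=1$; then $A$'s difference is $c_0-\mu-1<-\eta$, forcing $a=0$, a contradiction. If $a=0$, the same descent gives $\mu=0$, which contradicts $|\mu+a-T|\le\eta$ once $B$ is known to mix. The paper uses this descent three times: to get $\mu\le T+\eta$, to exclude $b=1$, and to exclude $a=0$.

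The initial segment follows from the same counting, with no tail estimates. Suppose exactly $m$ encoding players are active and one of them has index $\ell\ge m+1$. She has at most $(m-1)+2\le\ell$ opponents who can choose $1$. So the positive entries of $f_\ell$ are reached with probability exactly zero, and \eqref{b:eq:prefix} contradicts her support of $1$. Your fallback about the last active player beyond a gap is essentially this argument; it is decisive, not a fallback.

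Two smaller points.
\begin{itemize}
\item The bound $|\mu+b-c_0|\le\eta$ gives only $1-b\ge1/4-\eta$, not $3/8-\eta$, because $\mu$ may be near $0$. The estimate still closes: by $\prod_j(1-p_j)\ge1-\sum_jp_j$ applied to the players other than $B$, you get $P_{0,\ell}\ge(1-b)(1-\mu-a)\ge(1/4-\eta)(7/8-\eta)>1/5$.
\item The strict inequality $p_\ell<T+\eta$ requires $a>0$, which again depends on the missing proof that $A$ mixes.
\end{itemize}
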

\begin{proof}
If $\mu>T+\eta$, player $B$ must choose $1$ with probability one,
and then $A$ must choose $0$. Descend through the encoding players'
indices. Once all later encoding players choose $0$, player $\ell$
faces at most $\ell$ opponents choosing $1$.
By \cref{b:eq:prefix}, she must also choose $0$.
This gives $\mu=0$, a contradiction. Hence $\mu\le T+\eta$.

If $a=1$, then $B$ must choose $1$, which would force $A$ to choose
$0$. Thus $a<1$. Since $A$ uses strategy $0$, we obtain
$b\ge c_0-\mu-\eta\ge b_0-2\eta>0$.
If $b=1$, then $A$ chooses $0$, and the same descending argument
gives $\mu=0$, contradicting $B$'s best response.
It follows that $0<b<1$ and $|\mu+a-T|\le\eta$.

If $a=0$, the descending argument again forces every encoding
player to choose $0$, giving $\mu=0$ and contradicting
$|\mu+a-T|\le\eta$. Thus $A$ mixes.
Consequently $\mu\le T-a+\eta<T+\eta<1/4$ and
$|\mu+b-c_0|\le\eta$.

If precisely $m$ encoding players choose $1$ with positive
probability, any such player of index $\ell\ge m+1$ faces at most
$(m-1)+2\le\ell$ opponents choosing $1$. This contradicts
\cref{b:eq:prefix}. The $m$ indices must therefore be $1,\ldots,m$.
If $m=0$, the auxiliary equations give $a=T+O(\eta)$ and
$b=c_0+O(\eta)$. Player $1$ then has expected difference
$-\zeta+ab/(\beta t)-(a+b-2ab)/\beta>\eta$, another contradiction.

Finally, the inequality $\prod_j(1-z_j)\ge1-\sum_jz_j$ gives
\[
 P_{0,\ell}\ge\prod_j(1-p_j)
 \ge(1-\mu-a)(1-b)
 \ge(1-T-\eta)(1-c_0-\eta)>\frac{1}{5}. \qedhere
\]
\end{proof}

All weights are now finite. For an encoding player $\ell$, let $\D_\ell$
be the linear combination of $e_k^{-\ell}$ specified by the added
terms in \cref{b:tab:payoffs}. Her expected unscaled difference is
$-\zeta+P_{0,\ell}\D_\ell$. Every encoding player uses strategy $0$;
when $p_\ell>0$, she uses both strategies.
By \cref{b:lem:auxiliary}, with $\nu=5(\zeta+\eta)=O(\zeta)$,
\begin{equation}\label{b:eq:D}
 \D_\ell\le\nu\quad\text{for every encoding player},\qquad
 0<\D_\ell\le\nu\quad\text{if }p_\ell>0.
\end{equation}
The strict positivity follows from $\eta<\zeta$.

We next absorb this additive tolerance into small relative errors in
the coefficient inequalities. Set $\xi=t^{10}$ and write
$s_i'=s_i(1+\xi)$, $\tau_i'=\tau_i(1+\xi)$, and
$K_i'=K_i(1+\xi)$. The rows of $C_i,V_i,Z_i$ and player $1$ give
\begin{equation}\label{b:eq:normal}
 \begin{aligned}
 e_{r_i+1}^{-C_i}&\le s_i'e_{r_i}^{-C_i},&
 e_{r_i+3}^{-V_i}&\le\tau_i'e_{r_i+2}^{-V_i},\\
 e_{r_i+4}^{-Z_i}&\le K_i'e_{r_i+1}^{-Z_i},&
 e_2^{-1}&\le t(1+\xi)e_1^{-1}.
 \end{aligned}
\end{equation}
If the corresponding player has positive probability, the reverse
inequality holds with the unprimed target.
In that case, strict positivity in \cref{b:eq:D} makes the higher
coefficient positive, and hence the lower coefficient is positive
as well. Ratios used with these lower constraints are therefore
well-defined.

The additive tolerance can be absorbed without a lower bound on any individual positive probability. The proof also covers vanished coefficients; see \Cref{b:sec:cap}.

The upper bounds in \cref{b:eq:normal} hold even for players who choose
$0$ with probability one. This lets us bound the symmetric coefficients
before determining which variable pairs mix.

\paragraph{Coefficient Bounds before Mixing.}
The initial support argument gives upper inequalities for every encoding
player. We use them to control the symmetric coefficients without assuming
that the variable pairs mix.

Write $E_k=e_k(\omega)$ for the full weight list and define
\begin{equation}\label{b:eq:global-coefficients}
 M_i=\frac{E_{r_i+3}}{\beta S_{r_i}s_i^2},\qquad
 R=\frac{E_N}{\beta S_{D+1}}.
\end{equation}
For each block, use the notation
\[
 \begin{gathered}
 u_i=\frac{\omega_{U_i}}{s_i},\qquad v_i=\frac{\omega_{V_i}}{s_i},\qquad
 \Pi_i=\frac{\omega_B\prod_{\ell=1}^{r_i}\omega_\ell}{\beta S_{r_i}},\\
 H_i=\sum_{\ell=r_i+1}^{D+1}\omega_\ell,\qquad
 J_i=(\omega_\ell)_{\ell=r_i+4}^{D+1},\qquad
 T_i=\omega_{Z_i}+e_1(J_i).
 \end{gathered}
\]
Thus $H_i=s_i(u_i+v_i)+T_i$ is the total weight after $C_i$,
including $A$, and $\Pi_i$ is the normalized product of the prefix
through $C_i$, together with $B$.

The upper control inequalities hold on every support.
Deletion interlacing in \cref{b:lem:coefficients} and the
third-order inequality in \cref{b:lem:quarter-bound} give the
following estimate.

\begin{lemma}[A Coefficient Bound on Arbitrary Supports]\label{b:lem:cap}
For the converted WSNE specified above, if $Nt\le2^{-10}$, then
$M_i\le\frac14+16Nt$ for each $i\le q$ and $R\le1+64Nt$.
These bounds hold without assuming that the variable pairs mix.
\end{lemma}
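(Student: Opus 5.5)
The plan is to factor the top coefficient $E_{r_i+3}$ through the prefix product. Write the full weight list as the prefix $P=(\omega_1,\ldots,\omega_{r_i})$ together with $B$, and the tail $(\omega_\ell)_{\ell=r_i+1}^{D+1}$ of total weight $H_i$. Expanding $E_{r_i+3}$ by the number of tail elements chosen gives
\[
 E_{r_i+3}=\omega_B\prod_{\ell\le r_i}\omega_\ell\,e_2(\text{tail})+(\text{terms omitting some prefix weight or }B).
\]
After dividing by $\beta S_{r_i}s_i^2$, the main term is $\Pi_i\,e_2(\text{tail})/s_i^2$. I would handle the remaining terms by interlacing. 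Each such term replaces a large prefix weight, of scale at least $\chi_i$, by additional tail weights, so it is smaller by a factor $O(s_i/\chi_i)=O(t^4)$ times a combinatorial factor at most $N$. This needs ratio bounds between consecutive symmetric coefficients, and those come from the upper control inequalities \cref{b:eq:normal} together with deletion interlacing \cref{b:lem:coefficients}. Since $e_k^{-\ell}$ and $E_k$ differ by $\omega_\ell e_{k-1}^{-\ell}$, the upper bounds for deleted lists transfer to the full list up to factors $1+O(Nt)$.

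The core estimate then has three parts. First, from the $C_i$ row, $e_{r_i+1}^{-C_i}\le s_i'e_{r_i}^{-C_i}$. Factoring out the prefix as above, this bounds the tail total by $H_i\le s_i(1+O(Nt))$. Second, from the $V_i$ row, $e_{r_i+3}\le\tau_i'e_{r_i+2}$ after deletion. The analogous factorization gives $e_3(\text{tail})\le\tau_i(1+O(Nt))e_2(\text{tail})$. Third, applying \cref{b:lem:quarter-bound} with $\tau=\tau_i(1+O(Nt))$ and $H\le s_i(1+O(Nt))$ yields
\[
 e_2(\text{tail})\le \frac{s_i^2}{4}(1+O(Nt))+O(\tau_is_i).
\]
Because $\tau_i/s_i=t^2$, the last term is negligible. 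It remains to show the normalized prefix $\Pi_i\le1+O(Nt)$, and this is where an induction over $i$ enters.

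For the induction, the $Z_{i-1}$ row gives $e_{r_i+1}\le K_{i-1}'e_{r_{i-1}+1}$ up to deletion, and the $C_{i-1}$ row gives the analogous bound at the previous scale. Together these bound $\Pi_i$ by the previous normalized product times $(1+O(t))$. The base case is player $1$'s inequality $e_2^{-1}\le t(1+\xi)e_1^{-1}$ combined with $\omega_1\omega_B$ normalization; here $\beta=5/3$ is the ratio $b_0/(1-b_0)$, obtained from the auxiliary bounds of \cref{b:lem:auxiliary}. Alternatively, the prefix bound can be read off directly by dividing consecutive control inequalities, which telescope into $\Pi_i\le\prod(1+O(t))\le1+O(Nt)$. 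The bound on $R$ is the same telescoping carried through $A=C_{q+1}$, with the final factor $1$ in place of $1/4$, since no quadratic term is involved.

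The main obstacle is the case where the factorization is not valid because some prefix weight is tiny or zero, i.e.\ prefix players that do not mix. In that case the main term itself is small, and I would split into two cases. If $\Pi_i$, or the relevant normalized prefix coefficient, is at most a threshold such as $Nt$, the bound on $M_i$ is immediate from crude bounds $E_{r_i+3}\le e_{r_i}(P)\binom{N}{3}H^3$. Otherwise every prefix weight is comparable to its reference scale, and the interlacing errors are controlled. Making the constants $16Nt$ and $64Nt$ come out requires tracking the $(1+O(Nt))$ factors carefully under $Nt\le2^{-10}$. I expect that bookkeeping, together with justifying the deletion steps when lower coefficients vanish, to be the most delicate part. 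I would defer the vanished-coefficient cases to the convention that the inequalities \cref{b:eq:normal} remain valid, trivially, when both sides vanish.
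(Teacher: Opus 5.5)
\textbf{Gap 1: factoring through the designated prefix.} Your decomposition factors $E_{r_i+3}$ through $(\omega_B,\omega_1,\ldots,\omega_{r_i})$. That presupposes these particular players have weight at least of order $\chi_i$, which is exactly what is unknown here. The lemma is used for blocks $j>i$ (the type Q inputs and $R$) before anything beyond block $i$ is known to mix.

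Your case split does not repair this. A lower bound on the product $\Pi_i$ does not make each factor comparable to its reference scale: one designated weight may be tiny while another, or a non-prefix player such as $U_i$, is large. Conversely, a small $\Pi_i$ does not make $E_{r_i+3}$ small, because $E_{r_i+3}$ contains products that omit designated players. Your bound $E_{r_i+3}\le e_{r_i}(P)\binom N3H^3$ drops those terms and already assumes the tail is small.

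The missing idea is to threshold on the full normalized coefficient $A_i=E_{r_i+1}/(\beta S_{r_i})$, and to factor out the \emph{largest} $r_i+1$ weights, whoever they are.
\begin{itemize}[nosep]
\item If $A_i\le1/8$, decreasing ratios and the interlaced bound $E_{r_i+2}\le s_i'E_{r_i+1}$ give $M_i\le A_i(1+\xi)^2<1/4$.
\item If $A_i>1/8$, a companion bound $E_{r_i}/(\beta S_{r_i-1})\le2$ gives $E_{r_i+1}/E_{r_i}\ge\chi_i/16$, while $E_{r_i+2}/E_{r_i+1}\le s_i'$. By \cref{b:eq:sorted}, the top $r_i+1$ weights then exceed $\chi_i/(16N)$, and the rest form a list $J$ of total weight at most $2N^2s_i$.
\end{itemize}
In the second case, apply \cref{b:eq:dominance} to that group, normalizing with $P\le E_{r_i+1}$. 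The full-list bounds $E_{r+2}\le s'E_{r+1}$ and $E_{r+4}\le\tau'E_{r+3}$ then give $e_1(J)\le s_i'(1+\theta)$ and $e_3(J)\le\tau_i'(1+\theta)e_2(J)$, regardless of where $C_i,V_i$ fall in the ordering. Only at this point does \cref{b:lem:quarter-bound} apply.

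\textbf{Gap 2: propagating the prefix bound.} The $Z_{i-1}$ row concerns the list with $Z_{i-1}$ deleted. Since $r_{i-1}+4=r_i$, it bounds $e^{-Z}_{r_i}$, whereas
$E_{r_i+1}=e^{-Z}_{r_i+1}+\omega_{Z_{i-1}}e^{-Z}_{r_i}$.
Your phrase ``up to deletion'' therefore hides the need to show $\omega_{Z_{i-1}}=O(\tau_{i-1})$, and upper rows cannot supply this.

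For a counterexample, suppress the block index, put the prefix at its reference scale, and take the tail $\omega_U=\omega_{C_{i+1}}=\tau/3$, $\omega_V=\omega_Z=s/3$. The upper $C$, $V$ and $Z$ rows all hold with room to spare. Yet $E_{r+5}/E_{r+1}\approx s^2\tau^2/81=\tau K/(81t)$, so the normalized prefix coefficient grows by a factor of order $1/t$ per block.

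The needed argument splits on $\omega_Z$.
\begin{itemize}[nosep]
\item If $\omega_Z=0$, the upper $Z$ row and decreasing ratios suffice.
\item If $\omega_Z>0$, then $Z$ mixes, so its \emph{lower} row is available; this is the row that fails in the example above. It forces $e^{-V,-Z}_{r+2}/e^{-V,-Z}_{r+1}\ge K/(s'\tau')$. The $V$ upper row then bounds $\omega_Z$, together with the next ratio of the list without $V,Z$, by $\tau'(1+O(t))$, which yields $E_{r+5}\le(1+O(t))\tau'K'E_{r+1}$.
\end{itemize}
Finally, the companion bound on $E_{r_i}/(\beta S_{r_i-1})$ used in the case $A_i>1/8$ must be propagated alongside $A_i$.
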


Here $M_i$ sums products over all subsets of the prescribed size.
Its bound can force mixing before the coefficient is identified with
the intended pair's product. The control rows enforce the hypotheses
of the elementary coefficient bounds on every support,
independently of the source coefficients.

\begin{proof}[Proof Sketch]
Suppress the block index. Deletion interlacing transfers the $C,V$
upper constraints to $E_{r+2}\le s'E_{r+1}$ and
$E_{r+4}\le\tau'E_{r+3}$. Using only these bounds would lose a
factor $1/t$ when passing to the next prefix. We instead show
\[
 E_{r+5}\le(1+O(t))\tau K E_{r+1}.
\]
If $\omega_Z=0$, the $Z$ upper constraint gives
$E_{r+4}\le K'E_{r+1}$, and decreasing coefficient ratios give
$E_{r+5}\le\tau'E_{r+4}$.
If $\omega_Z>0$, the $Z$ lower constraint is available. After deleting
$V,Z$, let $g=e_{r+2}/e_{r+1}$ and $h=e_{r+3}/e_{r+2}$; strict
positivity of the $Z$ row makes these ratios well-defined.
Interlacing and the $Z$ lower constraint give $g\ge K/(s'\tau')$,
while the $V$ upper constraint gives
$h+\omega_Z\le\tau'(1+\omega_Z/g)$.
Since $\tau'/g\le s'(\tau')^2/K=O(t)$, we obtain
$h+\omega_Z\le\tau'(1+O(t))$, which gives the same prefix bound.
Neither case uses a lower constraint for $C$ or $V$.

Normalize by setting $A_i=E_{r_i+1}/(\beta S_{r_i})$ and
$A_{q+1}=R$. The preceding bound gives
$A_{i+1}\le(1+O(t))A_i$; the initial support constraints give
$A_1\le1+O(t)$. Hence $A_i,R\le1+O(Nt)$.
The same argument bounds the coefficient immediately before each prefix.
If $A_i\le1/8$, decreasing ratios already give
$M_i\le A_i(1+t^{10})^2<1/4$.
Otherwise, the bounded adjacent coefficients ensure that the largest
$r_i+1$ weights can be factored out with relative error $O(Nt)$.
For the remaining list $J$, the $C,V$ upper constraints become
$H=e_1(J)\le s_i(1+O(Nt))$ and
$e_3(J)\le\tau_i(1+O(Nt))e_2(J)$.
Since $\tau_i/s_i=t^2$, \cref{b:lem:quarter-bound} gives
$M_i\le1/4+O(Nt)$.
\Cref{b:sec:cap} supplies the explicit constants and the factorization
bounds, including the cases of zero coefficients.
\end{proof}

\paragraph{Compensation within One Block.}
The compensation illustrated in \Cref{b:fig:block} must control
each successive prefix product.
For $i<q$, the normalized prefix increment is exactly
\[
 \frac{\Pi_{i+1}}{\Pi_i}
 =\frac{\omega_{Z_i}}{\tau_i}
  \frac{u_iv_i\omega_{C_{i+1}}}{\chi_{i+1}}.
\]
The block step establishes $\omega_{Z_i}\approx\tau_i$ and
$u_iv_i\omega_{C_{i+1}}\approx\chi_{i+1}$, canceling this factor.
The next lemma states the hypotheses needed to pass these bounds
to the next block. Its constants depend only on the source margin:
\begin{equation}\label{b:eq:block-constants}
 \kappa=\frac{\delta}{4},\qquad \alpha=\frac{\delta}{16},\qquad
 \Lambda=32\kappa^{-1},\qquad \rho=\frac{\alpha}{8\Lambda}.
\end{equation}

\begin{lemma}[Mixing and Compensation within One Block]\label{b:lem:block-step}
Assume $Nt\le\rho$.
Fix $i\le q$ and suppose that
\begin{equation}\label{b:eq:block-entry}
 \begin{gathered}
 C_i\text{ mixes},\qquad
 \frac{\chi_i}{16}\le \omega_{C_i}\le4\kappa^{-1}\chi_i,\\
 \left|\frac{H_i}{s_i}-1\right|\le\alpha,\qquad |\Pi_i-1|\le\alpha,
 \end{gathered}
\end{equation}
and that the earlier blocks satisfy the coarse bounds
\begin{equation}\label{b:eq:earlier-block-bounds}
 \omega_{C_j}\ge\frac{\chi_j}{16},\qquad u_j,v_j\ge\kappa,\qquad
 \frac{\tau_j}{2}\le \omega_{Z_j}\le2\tau_j\quad(j<i).
\end{equation}
Then the following conclusions hold, with the same constant $\Lambda$
for every $i$.
\begin{enumerate}
\renewcommand{\labelenumi}{(\roman{enumi})}
\item The players $U_i,V_i,Z_i$ mix, $u_i,v_i\ge\kappa$,
$u_iv_i\ge\kappa/2$, and
\[
 \left|\frac{T_i}{\tau_i}-1\right|\le\Lambda t^2,\qquad
 \left|\frac{\omega_{U_i}\omega_{V_i}e_1(J_i)}{K_i}-1\right|\le\Lambda t.
\]
Moreover, $|\omega_{Z_i}/\tau_i-1|\le\Lambda t$.
\item If $i<q$, then $C_{i+1}$ mixes,
$\chi_{i+1}/16\le \omega_{C_{i+1}}\le4\kappa^{-1}\chi_{i+1}$, and
\begin{equation}\label{b:eq:block-output}
 \left|\frac{H_{i+1}}{s_{i+1}}-1\right|\le\Lambda Nt,\qquad
 \left|\frac{\Pi_{i+1}}{\Pi_i}-1\right|\le\Lambda t.
\end{equation}
\item If $i=q$, then $|R/\Pi_q-1|\le\Lambda t$.
\end{enumerate}
\end{lemma}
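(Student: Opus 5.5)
We prove the block step by reading each control row of block $i$ as a relation between consecutive count coefficients. The relevant coefficients are of the form $E_k$ or $e_k^{-\ell}$ at counts $k=r_i,\dots,r_i+4$. We factor these coefficients around the prefix product through $C_i$. By \cref{b:eq:block-entry} and \cref{b:eq:earlier-block-bounds}, the earlier blocks and $C_i$ carry weights larger than everything after $C_i$ by factors at least $t^{-1}$ per step. So, up to relative error $O(Nt)$, we have
\[
e_{r_i+m}(\omega)\approx \beta S_{r_i}\Pi_i\,\omega_B^{-1}\cdot e_{m}(\omega_{U_i},\omega_{V_i},\omega_{Z_i},J_i)
\]
for $m\le 4$. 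The analogous statement holds with a player deleted. We state this factorization once, as a sublemma. The entry bound $|\Pi_i-1|\le\alpha$ pins the normalized prefix. The bound $|H_i/s_i-1|\le\alpha$ fixes $e_1$ of the tail at about $s_i$.

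The first step forces $U_i$ and $V_i$ to mix. Suppose $U_i$ does not mix, so $u_i=0$. Then the $V_i$ upper constraint together with $H_i\approx s_i$ gives $v_i\approx 1$ in the tail. The $U_i$ row in \cref{b:tab:payoffs} then has its local part approximately $\Pi_i(v_i-a_i)$ in the type L case. The input terms are bounded using $u_jv_j\le1/4+O(t)$ from \cref{b:lem:cap} for earlier blocks. By \cref{b:eq:linear-margin}, the payoff exceeds $(1-a_i-\frac14\sum b_{ij})-O(\alpha)\ge\delta/2-O(\alpha)>\nu$. This contradicts \cref{b:eq:D}. In the type Q case with $u_i=0$ the expression is $c_i+\sum d_{ij}(\cdot)>1/5$, again positive, again a contradiction. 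Later terms are bounded by $M_j\le 1/4+16Nt$ from \cref{b:lem:cap}.

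Symmetrically, suppose $V_i$ does not mix. Then $u_i\approx1$, and the $U_i$ payoff becomes $\le -a_i+\dots$ times the prefix. The lower bound on $e_1^{-U_i}$ does not fail, but the $C_i$ and $V_i$ rows fail. Since $C_i$ mixes, its lower constraint gives $e_{r_i+1}\ge s_i e_{r_i}$. This is consistent only if the pair weight is near $s_i$, so $u_i$ or $v_i$ is near $1$. The $V_i$ indifference must also be addressed. I expect the cleanest route is to use the type L/Q margins, \cref{b:eq:linear-margin,b:eq:quadratic-margin}, in both directions. For type L, $v_i\le1-\delta/2+O(\alpha)$ bounds $u_i$ below. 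For type Q, $c_i>1/5$ forces $u_i(a_i-v_i)\ge1/5$, so $u_i\ge 1/5$, and $c_i+\frac14\sum d\le a_i-\delta$ forces $v_i$ to be bounded away from $0$. This yields $u_i,v_i\ge\kappa=\delta/4$ and $u_iv_i\ge\kappa/2$. The bound $M_i\le1/4+16Nt$ from \cref{b:lem:cap} is exactly what keeps the quadratic term $u_iv_i$ at most $1/4$ before mixing is known. Without it, the margin comparison would fail, so this is where \cref{b:lem:quarter-bound} enters.

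The next step uses the now-mixing $V_i$ and $Z_i$. Mixing of $Z_i$ is forced similarly: if $\omega_{Z_i}=0$, the $V_i$ lower constraint $e_{r_i+3}\ge\tau_i e_{r_i+2}$ requires tail mass $e_1(J_i)\gtrsim\tau_i$, which then violates the $Z_i$ or $C_{i+1}$ upper constraints. With $V_i$ and $Z_i$ mixing, both sides of their rows hold. The $V_i$ two-sided row gives $T_i/\tau_i=1+O(t^2)$, because the ratio $e_{r_i+3}/e_{r_i+2}$ after deleting $V_i$ equals $T_i$ up to the $\tau_i/s_i=t^2$ correction. The $Z_i$ row gives $\omega_{U_i}\omega_{V_i}e_1(J_i)/K_i=1+O(t)$. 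The $O(t)$ here comes from $\tau_i^2/(s_i\chi_{i+1})=t$ in \cref{b:eq:scales}. Since $e_1(J_i)\le T_i\approx\tau_i$ and $u_iv_i\ge\kappa/2$, we get $e_1(J_i)\approx\chi_{i+1}/(u_iv_i)$, which is $O(t\tau_i)$. Hence $\omega_{Z_i}=T_i-e_1(J_i)=\tau_i(1+O(t))$, which gives (i).

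For (ii), $J_i$ is dominated by $\omega_{C_{i+1}}$, because all later weights are at scale $\le s_{i+1}=t^4\chi_{i+1}$ by the $C_{i+1}$ upper constraint and interlacing. Therefore $\omega_{C_{i+1}}\approx\chi_{i+1}/(u_iv_i)$, which lies in $[\chi_{i+1}/16,4\kappa^{-1}\chi_{i+1}]$. It is positive, and the $C_{i+1}$ row then shows it mixes. The displayed increment formula gives $|\Pi_{i+1}/\Pi_i-1|\le\Lambda t$. Mixing of $C_{i+1}$ gives the two-sided $C_{i+1}$ constraint. Factoring as before, this yields $H_{i+1}/s_{i+1}=1+O(Nt)$; the $N$ comes from the accumulated prefix error. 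Part (iii) is the same computation with $A=C_{q+1}$ and the $R$ coefficient. The main obstacle, in my view, is the uniform factorization sublemma with relative errors independent of $i$. The tail sizes must be controlled before mixing, and I would prove it using the decreasing-ratio and interlacing bounds behind \cref{b:lem:cap}, together with the coarse earlier-block bounds.
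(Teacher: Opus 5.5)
Your skeleton matches the paper's: factor around the prefix through $C_i$, force $U_i,V_i,Z_i,C_{i+1}$ in that order, bound the input terms by \cref{b:lem:cap}, and close with the exact prefix identity. However, the step that forces the pair to mix is not justified as written, and the order of its sub-steps matters.

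\textbf{The pair step.} You treat the local part of the $U_i$ row as $\Pi_i(v_i-a_i)$, and you support this with ``$v_i\approx1$'' from the $V_i$ upper constraint in the case $u_i=0$. That constraint sees $T_i$ only through the cross term $\omega_{U_i}T_i$ in $e_2(\omega_{U_i},\omega_{Z_i},J_i)$. So when $\omega_{U_i}=0$, even together with $H_i\approx s_i$, it allows for example $\omega_{Z_i}\approx s_i/2$ and $v_i\approx1/2$.

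The factorization actually gives the local part $\Pi_i(H_i/s_i-u_i-a_i)$, where $H_i/s_i-u_i=v_i+T_i/s_i$. This is the form you need. The one-sided inequality $\D_{U_i}\le\nu$ holds on every support, and it then yields $u_i\ge\kappa$ quantitatively, not merely $u_i>0$:
\begin{itemize}
\item for type L, from the margin and $M_j\le1/4+16Nt$;
\item for type Q, simply because all input and constant terms are nonnegative and $1-a_i=1/10$.
\end{itemize}
Your type Q route via $c_i>1/5$ fails at this stage. The constant term is $c_iR/u_i$, and \cref{b:lem:cap} bounds $R$ only from above. Moreover $R=0$ if some later player has weight zero, and that has not been excluded yet.

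Only once $\omega_{U_i}\ge\kappa s_i$ is known does the $V_i$ upper constraint give $T_i\le\tau_i(1+O(t^2))$. That turns the local part into $\Pi_i(v_i-a_i)+O(t^2)$. Then the \emph{lower} $U_i$ inequality, available because $U_i$ mixes, forces $v_i\ge\kappa$. For type L this is direct. For type Q you multiply by $u_i$ and use $c_iR+\sum_jd_{ij}M_j\le a_i-\delta+O(Nt)$.

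In particular, when $v_i=0$ the $C_i$ and $V_i$ rows can both hold; the contradiction comes from the $U_i$ row. Your bound $u_iv_i\ge\kappa/2$ also needs $u_i+v_i\ge1-\alpha-2t^2$, and that uses this same bound on $T_i$.

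\textbf{Factorization precision.} Your factorization sublemma with relative error $O(Nt)$ is too weak for the lemma. The bounds $\Lambda t^2$ and $\Lambda t$ in (i), and the increment bound in (ii), carry no factor $N$. But the two-sided $V_i$ and $Z_i$ rows inherit the factorization error. With error $O(Nt)$ you would get only $T_i/\tau_i=1+O(Nt)$ and increments $1+O(Nt)$, and these accumulate to $O(N^2t)$ across the blocks.

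Use the gap $s_i/\chi_i=t^4$ instead. Every prefix weight, including $B$, is at least $\chi_i/16$, and $H_i\le2s_i$. So the tail-to-prefix ratios sum to at most $32Nt^4$, and \cref{b:eq:dominance} gives relative error at most $64Nt^3\le64t^2$. Also note that the prefix through $C_i$ together with $B$ has $r_i+1$ entries. The correct display is therefore $E_{r_i+1+m}\approx\beta S_{r_i}\Pi_i\,e_m(\omega_{U_i},\omega_{V_i},\omega_{Z_i},J_i)$, not yours.

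Finally, the $\Lambda Nt$ bound on $H_{i+1}/s_{i+1}$ is not an accumulated prefix error; inside the lemma you only know $|\Pi_i-1|\le\alpha$. It comes from the $C_{i+1}$ row, which compares consecutive coefficients and does not involve $\Pi$. The $N$ is the $O(\kappa^{-1}Nt)$ error of factoring out the prefix through $Z_i$, whose minimum weight $\tau_i/2$ dominates the tail $e_1(J_i)\le4\kappa^{-1}\chi_{i+1}$.
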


\begin{proof}[Proof Sketch]
We establish support in the order $U_i,V_i,Z_i,C_{i+1}$.
The entry hypotheses and scale separation allow us to factor out the
prefix through $C_i$. The two local terms in the $U_i$ row become
$\Pi_i(H_i/s_i-u_i-a_i)+O(Nt^3)$.
For type L, each subtracted input coefficient is at most
$1/4+O(Nt)$ by \cref{b:lem:cap}, and the source margin gives
$1-a_i-\tfrac14\sum_j b_{ij}\ge\delta/2$.
For type Q, every input term is nonnegative and $a_i=9/10$.
In either case the upper payoff inequality forces $u_i\ge\kappa$.
Thus $U_i$ mixes, and its lower inequality is now available.

Before using a lower inequality for $V_i$, we use its upper
constraint to bound the weight after the pair:
\[
 \omega_{U_i}T_i
 \le\tau_i(1+O(Nt^3))(\omega_{U_i}+T_i).
\]
Since $\omega_{U_i}\ge\kappa s_i$ and $\tau_i/s_i=t^2$,
this yields $T_i\le\tau_i(1+O(t^2))$.
The local $U_i$ expression is consequently
$\Pi_i(v_i-a_i)+O(t^2)$.
For type L, its lower inequality gives $v_i\ge a_i-O(t^2)>\kappa$.
For type Q, multiply that inequality by $u_i$ and bound each input
by its full-list coefficient to obtain
\[
 \Pi_i u_i(a_i-v_i)
 \le c_iR+\sum_{j>i}d_{ij}M_j+O(t^2)
 \le a_i-\delta+O(Nt).
\]
If $v_i<\kappa$, then $u_i\ge1-\alpha-\kappa-O(t^2)$, so the
left-hand side is at least $a_i-2\alpha-2\kappa-O(t^2)$.
This contradicts $2\alpha+2\kappa=5\delta/8$ once the error terms
are absorbed. Hence $V_i$ mixes; its lower constraint now gives
$T_i/\tau_i=1+O(t^2)$.

The bounds on the pair's sum and its two entries give
$u_iv_i\ge\kappa/2$.
If $\omega_{Z_i}=0$, then $e_1(J_i)=T_i\ge\tau_i/2$.
The $Z_i$ upper constraint would require
$e_3(\omega_{U_i},\omega_{V_i},J_i)\le2K_i$, but this coefficient
is at least $\kappa s_i^2\tau_i/4>2K_i$, since
$K_i=s_i^2\tau_i t$. Thus $Z_i$ mixes.
Its two-sided constraint gives
$e_3(\omega_{U_i},\omega_{V_i},J_i)=K_i(1+O(Nt^3))$.
Expanding this coefficient yields
\[
 e_3(\omega_{U_i},\omega_{V_i},J_i)
 =\omega_{U_i}\omega_{V_i}e_1(J_i)+O(s_i\tau_i^2),
 \qquad s_i\tau_i^2=tK_i.
\]
Therefore $\omega_{U_i}\omega_{V_i}e_1(J_i)=K_i(1+O(t))$,
$e_1(J_i)=\chi_{i+1}(1+O(t))/(u_iv_i)$, and
$\omega_{Z_i}/\tau_i=1+O(t)$.

For $i<q$, if $C_{i+1}$ had weight zero, its coefficient ratio would
be $\Omega(\chi_{i+1})$, contradicting the upper target
$s_{i+1}=t^4\chi_{i+1}$. Hence it mixes. Its two-sided row gives
$H_{i+1}/s_{i+1}=1+O(Nt)$ directly, using the preceding estimates
only as coarse bounds. It also gives
$\omega_{C_{i+1}}/e_1(J_i)=1+O(t^3)$, and the exact identity
\[
 \frac{\Pi_{i+1}}{\Pi_i}
 =\frac{\omega_{Z_i}}{\tau_i}
  \frac{\omega_{U_i}\omega_{V_i}e_1(J_i)}{K_i}
  \frac{\omega_{C_{i+1}}}{e_1(J_i)}
 =1+O(t)
\]
proves the compensation. The total-weight estimate is obtained anew
at each block; only the prefix-product errors accumulate.
For $i=q$, use $J_q=(\omega_A)$ and the mixing of $A$ from
\cref{b:lem:auxiliary}. The same product identity gives the terminal
bound, with no next control player.
\Cref{b:sec:induction} verifies the constants in all these estimates.
\end{proof}
\paragraph{Propagating Mixing across the Blocks.}

We now propagate the block lemma.
Set $A_{\mathrm{norm}}=2\Lambda=2^{15}$.
Our common threshold satisfies $\rho=\alpha/(4A_{\mathrm{norm}})$.
The next lemma makes both the accumulated prefix error and the
total-weight error explicit.

\begin{lemma}[Mixing and Normalization]\label{b:lem:normalization}
If $Nt\le\rho$, then every player mixes in the
WSNE under consideration. For each $i\le q$,
\begin{align*}
 \frac{\chi_i}{16}\le \omega_{C_i}&\le4\kappa^{-1}\chi_i,
 &u_i,v_i&\ge\kappa,\\*
 \left|\frac{H_i}{s_i}-1\right|&\le A_{\mathrm{norm}}Nt,
 &|u_i+v_i-1|&\le A_{\mathrm{norm}}Nt,\\*
 \left|\frac{\omega_{Z_i}}{\tau_i}-1\right|&\le\Lambda t,
 &|\Pi_i-1|&\le A_{\mathrm{norm}}it.
\end{align*}
In addition, $|R-1|\le A_{\mathrm{norm}}Nt$.
\end{lemma}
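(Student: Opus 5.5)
The plan is a forward induction on $i$ that applies \cref{b:lem:block-step} once per block. The only new work is verifying the entry hypotheses \cref{b:eq:block-entry} for $i=1$ and then tracking how errors accumulate.

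\emph{Base case.} By \cref{b:lem:auxiliary} the active encoding players form a nonempty initial segment, so player $1$ mixes. Both of its constraints in \cref{b:eq:normal} are then available. Since $C_1$ is player $2$, $U_1$ is player $3$, and so on, these give $e_2^{-1}\approx t\,e_1^{-1}$. With the auxiliary identities $|\mu+a-T|\le\eta$ and $|\mu+b-c_0|\le\eta$, the choice $\beta=b_0/(1-b_0)$ makes player $1$'s row pin down its own weight and the prefix $\Pi_1=\omega_B\omega_1/(\beta\bar\omega_1)$ up to $1+O(t)$. Next I would show that $C_1$ mixes, by the same argument the block lemma uses for $C_{i+1}$. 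If $\omega_{C_1}=0$, the ratio $e_2^{-C_1}/e_1^{-C_1}$ would be dominated by $\omega_1\omega_B$-type terms of order $\chi_1$, which contradicts the upper target $s_1=t^4\chi_1$ in \cref{b:eq:normal}. Once $C_1$ mixes, its two-sided row gives $H_1/s_1=1+O(Nt)$ and $\chi_1/16\le\omega_{C_1}\le4\kappa^{-1}\chi_1$, again using \cref{b:lem:cap} only for coarse bounds on later coefficients. Of the base case, this is the step I expect to be most delicate, because the normalization of player $1$ and of $B$ has to be matched exactly against $\Pi_1$.

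\emph{Induction step.} Suppose \cref{b:eq:block-entry} holds at $i$ with $|\Pi_i-1|\le A_{\mathrm{norm}}it$ and $|H_i/s_i-1|\le A_{\mathrm{norm}}Nt$. Because $Nt\le\rho=\alpha/(4A_{\mathrm{norm}})$ and $i\le N$, both quantities are within $\alpha$, so the block lemma applies. The earlier-block hypotheses \cref{b:eq:earlier-block-bounds} are exactly the conclusions of part~(i) at earlier indices, since $\Lambda t\le1/2$. Part~(i) gives mixing of $U_i,V_i,Z_i$, the bounds $u_i,v_i\ge\kappa$, and $|\omega_{Z_i}/\tau_i-1|\le\Lambda t$. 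For $u_i+v_i$, write $H_i=s_i(u_i+v_i)+T_i$ with $T_i/s_i=O(t^2)$, so $|u_i+v_i-1|\le|H_i/s_i-1|+2t^2\le A_{\mathrm{norm}}Nt$. Here I need the base bound $\Lambda Nt$ (or the base-case constant) to sit below $A_{\mathrm{norm}}Nt=2\Lambda Nt$ with room for the $t^2$ term. Part~(ii) gives the entry conditions at $i+1$, with $|H_{i+1}/s_{i+1}-1|\le\Lambda Nt\le A_{\mathrm{norm}}Nt$. For the prefix, $|\Pi_{i+1}-1|\le(1+A_{\mathrm{norm}}it)(1+\Lambda t)-1\le A_{\mathrm{norm}}(i+1)t$, using $\Lambda=A_{\mathrm{norm}}/2$ and $A_{\mathrm{norm}}it\le1$ to absorb the cross term.

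\emph{Termination.} At $i=q$, part~(iii) gives $|R-1|\le(1+A_{\mathrm{norm}}qt)(1+\Lambda t)-1\le A_{\mathrm{norm}}Nt$. Player $A$ mixes by \cref{b:lem:auxiliary}, and every $C_i,U_i,V_i,Z_i$ and player $1$ has now been shown to mix, so every player mixes. The main obstacle is the base case. The rest is bookkeeping, and the bookkeeping only needs the base constant for $\Pi_1$ to be at most $A_{\mathrm{norm}}t$ and the base constant for $H_1$ to be at most $\Lambda Nt$.
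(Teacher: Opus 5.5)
Your induction step and the closing bookkeeping for $H_i$, $u_i+v_i$, $\Pi_i$ and $R$ are sound and follow the paper's route. Your recurrence $(1+A_{\mathrm{norm}}it)(1+\Lambda t)-1\le A_{\mathrm{norm}}(i+1)t$ is a fine substitute for the paper's summed-error product bound. The gap is in the base case. Since $r_1=2$, the prefix through $C_1$ contains $C_1$ itself, so
\[
 \Pi_1=\frac{\omega_B}{\beta}\cdot 7\omega_1\cdot\frac{\omega_{C_1}}{\chi_1},
\]
not $\omega_B\omega_1/(\beta\bar\omega_1)$. About $\omega_{C_1}$ you claim only the coarse window $\chi_1/16\le\omega_{C_1}\le4\kappa^{-1}\chi_1$. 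That window lets $\Pi_1$ differ from $1$ by a constant factor. It therefore gives neither the entry hypothesis $|\Pi_1-1|\le\alpha$ that \cref{b:lem:block-step} needs, nor the claimed $|\Pi_1-1|\le A_{\mathrm{norm}}t$. You also do not say where even the coarse window comes from. The two-sided $C_1$ row compares $e_3^{-C_1}$ with $s_1e_2^{-C_1}$. After factoring out $\omega_B,\omega_1$, this ratio is essentially $H_1$, so it says nothing about $\omega_{C_1}$.

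The missing ingredient is what player $1$'s two-sided row actually pins down. With $\omega_B\ge1$ (from $b\ge b_0-2\eta$), it forces the total weight $H$ of all players other than $B$ and player $1$ into $[t,t(1+6t)]$. This bound $H\le2t$ is also what lets the auxiliary identities fix $\omega_1$ to $1/7$ within $4t$ and $\omega_B$ to $\beta$ within $80t$, because all remaining probabilities total at most $H$. Since $\chi_1=t$ and $H_1\le2s_1=2t^4\chi_1$, the identity $\omega_{C_1}=H-H_1$ gives $|\omega_{C_1}/\chi_1-1|\le8t$, and hence $|\Pi_1-1|=O(t)$ with an explicit constant; the paper obtains $256t\le A_{\mathrm{norm}}t$. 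The lower bound $H\ge t$ is also what forces $C_1$ to mix. Your version of that step uses the wrong ratio: $e_2^{-C_1}/e_1^{-C_1}$ is of constant order, not order $\chi_1$. The correct argument is this: if $\omega_{C_1}=0$, factoring out $(\omega_B,\omega_1)$ gives $e_3^{-C_1}/e_2^{-C_1}\ge H/(1+60t)\ge t/2>s_1'$, which contradicts the upper $C_1$ constraint. In short, player $1$'s row plays for block $1$ the role that the $Z_i$ row plays for $C_{i+1}$ in the block lemma. Your proposal needs to use it.
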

\begin{proof}
\step{Initializing the First Block.}
By \cref{b:eq:initial-weights}, $\omega_B\ge1$, $\omega_1\ge1/14$, and the
remaining total weight $H$ lies in $[t,t(1+6t)]$.
Factoring out $(\omega_B,\omega_1)$ in \cref{b:eq:dominance} has relative
error at most $60t$.
If $p_{C_1}=0$, the ratio $e_3^{-C_1}/e_2^{-C_1}$ is therefore at
least $t/(1+60t)\ge t/2>s_1'$, a contradiction.
Thus $C_1$ mixes, and its two-sided constraint gives
$\frac{1}{1+60t}\le\frac{H_1}{s_1}\le(1+\xi)(1+60t)$.
It follows that
\[
 \left|\frac{H_1}{s_1}-1\right|\le64t,\qquad
 \left|\frac{\omega_{C_1}}{\chi_1}-1\right|\le8t,\qquad |\Pi_1-1|\le256t.
\]
For the second bound use $\omega_{C_1}=H-H_1$ and $s_1/\chi_1=t^4$.
For the third, the three factors
$\omega_B/\beta$, $7\omega_1$, and $\omega_{C_1}/\chi_1$ have errors at most
$48t$, $28t$, and $8t$, respectively.
These bounds establish \cref{b:eq:block-entry} for $i=1$.

\step{Passing the Bounds to the Next Block.}
We prove \cref{b:eq:block-entry} by induction, with the initialization as
the base case. For $i>1$, suppose it holds for blocks $1,\ldots,i-1$.
Their block steps supply \Cref{b:eq:earlier-block-bounds},
mixing and the weight bounds for $C_i$, and $|H_i/s_i-1|\le\Lambda Nt$.
The prefix $\Pi_i$ is the initial prefix times the successive block
increments. Their total relative error is at most
$[256+(i-1)\Lambda]t\le\Lambda it<1/2$.
The elementary product bound
$|\prod_j(1+z_j)-1|\le2\sum_j|z_j|$ when $\sum_j|z_j|\le1/2$ gives
$|\Pi_i-1|\le A_{\mathrm{norm}}it$.
Both this bound and $\Lambda Nt$ are below $\alpha$, by
\Cref{b:eq:common-smallness}. Thus the hypotheses hold at block $i$ as well.
The induction proves that every player mixes.

The initialization and \cref{b:eq:block-output} bound each
$H_i/s_i-1$ directly. Since
$u_i+v_i=H_i/s_i-T_i/s_i$ and $T_i/s_i\le2t^2$,
the stated bound on $u_i+v_i$ follows.
The bound on $\Pi_i$ is the one just proved.
Finally, include the terminal factor $R/\Pi_q$ in the same product bound.
By \cref{b:lem:block-step}(iii), this gives
$|R-1|\le2[256+q\Lambda]t \le A_{\mathrm{norm}}Nt$.
This proves all the asserted estimates.
\end{proof}

\paragraph{From Equilibria to Solutions of the Quadratic System.}

By \cref{b:lem:normalization}, every encoding player now satisfies the
two-sided bound in \cref{b:eq:D}. Set
$A_{\mathrm{dec}}=16A_{\mathrm{norm}}=2^{19}$.
For our fixed $\kappa$, the constants in the local estimates satisfy
$32\kappa^{-1}+802\le A_{\mathrm{norm}}$.
We show that every decoded source equation has residual at most
$A_{\mathrm{dec}}Nt$. This constant is fixed independently of
$N$, the block index, and the source instance.

Define the decoded variable
$x_i=\omega_{U_i}/(\omega_{U_i}+\omega_{V_i})=u_i/(u_i+v_i)$.
By \cref{b:lem:normalization}, it is well-defined and belongs to $[0,1]$.
Writing $h_i=u_i+v_i$, we have
$u_i=x_ih_i$, $v_i=(1-x_i)h_i$, and $h_i\le2$.
Thus
$|u_i-x_i|,\quad |v_i-(1-x_i)|,\quad |u_iv_i-x_i(1-x_i)|\le A_{\mathrm{norm}}Nt$.

We now evaluate the actual input terms. The prefix through $V_j$,
together with $B$, has minimum weight at least $\kappa s_j$, and the
remaining total weight is at most
$2\tau_j$. Applying \cref{b:eq:dominance} yields
\begin{align}
 \frac{e_{r_j+3}^{-U_i}}{\beta S_{r_j+2}}
 &=\Pi_j u_jv_j(1+\theta_{ij})
                         &&(j<i),\label{b:eq:read-L}\\*
 \frac{s_i e_{r_j+2}^{-U_i}}{\beta S_{r_j+2}}
 &=\frac{\Pi_j u_jv_j}{u_i}(1+\theta_{ij})
                         &&(j>i),\label{b:eq:read-Q}\\*
 \frac{s_i e_{N-1}^{-U_i}}{\beta S_{D+1}}
 &=\frac{R}{u_i},\label{b:eq:read-constant}
\end{align}
where $0\le\theta_{ij}\le4\kappa^{-1}Nt^2$.
Indeed, the sum of the tail-to-prefix ratios is at most
$2\kappa^{-1}Nt^2$; \cref{b:eq:dominance} doubles this bound.
The designated prefix through $V_j$, together with $B$, contains
$r_j+3$ entries. For $j<i$, deleting $U_i$ removes an entry after
this prefix, so the coefficient of degree $r_j+3$ in
\cref{b:eq:read-L} selects the entire prefix.
For $j>i$, the deleted player belongs to the prefix, leaving
$r_j+2$ entries. Its product is divided by $\omega_{U_i}=s_i u_i$;
the factor $s_i$ in \cref{b:eq:read-Q} leaves the reciprocal $1/u_i$.
\Cref{b:eq:read-constant} is an exact identity for the full product.

For a type L equation, combine the local expression in
\cref{b:eq:local-row} with the input terms in \cref{b:eq:read-L}.
Using $\nu\le Nt^3$, $|\lambda_i|\le400t^2$,
$|\Pi_j-1|\le A_{\mathrm{norm}}Nt$, and $\sum_jb_{ij}<2$ gives
\[
 \left|v_i-a_i-\sum_{j<i}b_{ij}u_jv_j\right|
 \le6A_{\mathrm{norm}}Nt.
\]
Substituting the decoded variables adds at most
$3A_{\mathrm{norm}}Nt$ to this bound, so
\begin{equation}\label{b:eq:residual-L}
 \left|x_i-1+a_i+\sum_{j<i}b_{ij}x_j(1-x_j)\right|
 \le A_{\mathrm{dec}}Nt.
\end{equation}

For a type Q equation, multiply the two-sided bound on $\D_{U_i}$
by $u_i$, then use \cref{b:eq:read-Q,b:eq:read-constant}.
Using $u_i\le2$, $|R-1|\le A_{\mathrm{norm}}Nt$, and
$c_i+\sum_jd_{ij}<4$ gives
\[
 \left|u_i(v_i-a_i)+c_i+\sum_{j>i}d_{ij}u_jv_j\right|
 \le8A_{\mathrm{norm}}Nt.
\]
Multiplying before substitution cancels the reciprocal $1/u_i$ and keeps the error bounded by the source coefficient margins.
Substitution adds at most $6A_{\mathrm{norm}}Nt$.
Since $x_i(a_i-(1-x_i))=x_i^2-(1-a_i)x_i$, we obtain
\begin{equation}\label{b:eq:residual-Q}
 \left|x_i^2-(1-a_i)x_i-c_i-\sum_{j>i}d_{ij}x_j(1-x_j)\right|
 \le A_{\mathrm{dec}}Nt.
\end{equation}

For $t=2^{-36}/N$, these residuals are at most $\varepsilon_{\mathrm{src}}$,
and $N^{-54N^2}$ is sufficiently accurate for the initial conversion.
\Cref{b:sec:induction} verifies these inequalities and the polynomial
running time of the solution map. This proves \Cref{b:thm:core}.

\section{FIXP-Hardness with Three Strategies}\label{sec:three-strategy-overview}

We prove \Cref{t:thm:quantitative-reduction} using the circuit simulation
outlined in \Cref{sec:three-strategy-principle}.
We first prescribe the signal probabilities, then construct the
gate payoffs from root statistics and establish the supports needed
for exact decoding. Throughout this section, $w_P$ and $\theta_P$
denote the signal probability and conditional value defined there.

\subsection{Enforcing the Signal Probabilities}\label{t:sec:calibration}

We next turn the near-Fekete nodes into equilibrium constraints.  The common calibration payoff depends only on total signal probabilities.  Later gate terms perturb only one action and are scaled so that they cannot destroy the global calibration bounds.

Choose an even population $N$ and distinct rational nodes $0<\delta_1<\cdots<\delta_N<1$. We use rational approximations to the Fekete points of $[0,1]$, which maximize $V(y)=\prod_{i<j}(y_j-y_i)$. The nodes need only polynomial precision: their distance from the endpoints is $N^{-O(1)}$, every gap is $\Omega(N^{-2})$, and a central set of $\Omega(N)$ consecutive nodes has spacing $\Theta(1/N)$. These bounds follow from the extremal property of Fekete points; \Cref{t:sec:fekete} gives the short argument and construction. Put $b=\min_i\{\delta_i,1-\delta_i\}$ and $g=\min_{i<N}(\delta_{i+1}-\delta_i)$.

Let $K=h_s+h_c$ count the opponents choosing $s$ or $c$. For each player $i$, define
$c_i:=\prod_{j\ne i}(\delta_j-\delta_i)$, $c_*:=\min_i|c_i|$,
and
\begin{equation*}
 \Phi_i(K):=(1-\delta_i)^K(-\delta_i)^{N-1-K},
 \qquad
 G_i(K):=|c_i|-\operatorname{sgn}(c_i)\Phi_i(K).
\end{equation*}
Both $s$ and $c$ have base payoff $G_i(K)$, while $t$ has payoff zero. We perturb only $c$ to express the gate equations. Averaging the calibration table produces a product whose factors distinguish the players.

\begin{lemma}[Expected Calibration Advantage]\label{t:lem:expected-calibration}
For every mixed profile,
$\E[G_i(K)]=|c_i|(1-\varrho_i(w))$.
Moreover every entry of $G_i$ lies in $[-2,2]$.
\end{lemma}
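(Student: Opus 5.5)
We compute $\E[\Phi_i(K)]$ through a probability generating function. For player $i$, the count $K$ over opponents is a sum of independent Bernoulli variables with parameters $w_j$, $j\ne i$. This holds because each opponent $j$ contributes to $K$ exactly when she chooses $s$ or $c$, which happens with probability $w_j=\sigma_{j,s}+\sigma_{j,c}$.

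Write $\Phi_i(K)=(-\delta_i)^{N-1}\bigl((1-\delta_i)/(-\delta_i)\bigr)^K$. Evaluating $\E[z^K]=\prod_{j\ne i}(1-w_j+w_jz)$ at $z=(1-\delta_i)/(-\delta_i)=\zeta_i$ gives
\[
 \E[\Phi_i(K)]=\prod_{j\ne i}(-\delta_i)(1-w_j+w_j\zeta_i)
 =\prod_{j\ne i}\bigl(-\delta_i+\delta_iw_j+w_j(1-\delta_i)\bigr)
 =\prod_{j\ne i}(w_j-\delta_i)=Q_i(w).
\]
To avoid dividing by $\delta_i$, we can check the factorwise identity directly instead. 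Each opponent contributes either the factor $(1-\delta_i)$, with probability $w_j$, or the factor $(-\delta_i)$, with probability $1-w_j$. The expectation of that factor is $w_j(1-\delta_i)-(1-w_j)\delta_i=w_j-\delta_i$. By independence, the expectation of the product is the product of these expectations, which is $Q_i(w)$.

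Linearity then gives $\E[G_i(K)]=|c_i|-\operatorname{sgn}(c_i)Q_i(w)$. Since the nodes are distinct, $c_i\neq0$, and $\operatorname{sgn}(c_i)=|c_i|/c_i$. Hence $\E[G_i(K)]=|c_i|\bigl(1-Q_i(w)/c_i\bigr)=|c_i|(1-\varrho_i(w))$, by \Cref{t:eq:ratio-map}.

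For the entry bound, first note that $|\Phi_i(K)|=(1-\delta_i)^K\delta_i^{N-1-K}\le1$, since $\delta_i\in(0,1)$. Also, $|c_i|=\prod_{j\ne i}|\delta_j-\delta_i|\le1$, since every node lies in $(0,1)$. Therefore $|G_i(K)|\le|c_i|+|\Phi_i(K)|\le2$.

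No step is a real obstacle. The only points needing care are that the expectation is over opponents only, so there are $N-1$ factors matching the exponent $N-1-K$, and that $c_i\ne0$ follows from the distinctness of the nodes.
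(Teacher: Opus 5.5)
Your proposal is correct and matches the paper's proof: the direct factorwise computation of $\E[\Phi_i(K)]$ over independent Bernoulli signal indicators gives $Q_i(w)$, the identity $\operatorname{sgn}(c_i)Q_i=|c_i|\varrho_i$ gives the expectation formula, and $|c_i|,|\Phi_i(K)|\le1$ gives the bound of $2$. The generating-function evaluation at $z=\zeta_i$ is a valid equivalent route, since $\delta_i>0$, but it is not needed.
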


\begin{proof}
Each opponent $j$ contributes an independent Bernoulli signal indicator of mean $w_j$.  Therefore
\[
 \E[\Phi_i(K)]
 =\prod_{j\ne i}\bigl(w_j(1-\delta_i)+(1-w_j)(-\delta_i)\bigr)
 =Q_i(w).
\]
The formula follows because $\operatorname{sgn}(c_i)Q_i=|c_i|\varrho_i$.  Finally $|c_i|\le1$ and $|\Phi_i(K)|\le1$.
\end{proof}

\paragraph{Conservation and Global Localization.}

The ratio map satisfies a conservation law that couples its coordinates even when some signal mass lies on the boundary.  This identity turns the one-sided equilibrium inequalities into two-sided ratio control.

\begin{lemma}[Lagrange Conservation]\label{t:lem:conservation}
For every $w\in\R^N$,
\begin{equation}\label{t:eq:conservation}
 \sum_{i=1}^N(w_i-\delta_i)(\varrho_i(w)-1)=0.
\end{equation}
\end{lemma}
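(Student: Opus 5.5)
The plan is to recognize the weighted sum $\sum_i (w_i-\delta_i)\varrho_i(w)$ as a divided difference of a single polynomial. Then a leading-coefficient comparison should give exactly $\sum_i(w_i-\delta_i)$, which is the claimed identity \cref{t:eq:conservation}. Nothing from the equilibrium setting is needed: the statement is a polynomial identity in $w$, so it should hold for all $w\in\R^N$, and no evenness or boundary assumptions enter.

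First, fix $w$ and introduce the degree-$N$ polynomial $F(z)=\prod_{j=1}^N(w_j-z)$, whose leading coefficient is $(-1)^N$. By the definition of $Q_i$ in \cref{t:eq:ratio-map}, we have $(w_i-\delta_i)Q_i(w)=F(\delta_i)$. Hence
\[
 \sum_i(w_i-\delta_i)\varrho_i(w)=\sum_i\frac{F(\delta_i)}{c_i}.
\]
Next, rewrite the normalizers in terms of the nodal polynomial $\omega(z)=\prod_j(z-\delta_j)$. Since $c_i=\prod_{j\ne i}(\delta_j-\delta_i)=(-1)^{N-1}\omega'(\delta_i)$, the sum becomes $(-1)^{N-1}\sum_i F(\delta_i)/\omega'(\delta_i)$. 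The inner sum is the divided difference $F[\delta_1,\ldots,\delta_N]$; equivalently, by Lagrange interpolation at the distinct nodes, it is the coefficient of $z^{N-1}$ in the interpolant of $F$.

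To evaluate it, divide $F$ by $\omega$: write $F=(-1)^N\omega+R$ with $\deg R\le N-1$. The term $(-1)^N\omega$ vanishes at every node, so the interpolant of $F$ is $R$, and the divided difference is the $z^{N-1}$-coefficient of $R$. Expanding the two products gives $[z^{N-1}]F=(-1)^{N-1}\sum_j w_j$ and $[z^{N-1}]\omega=-\sum_j\delta_j$. Therefore
\[
 [z^{N-1}]R=(-1)^{N-1}\sum_j w_j+(-1)^N\sum_j\delta_j .
\]
Multiplying by the sign $(-1)^{N-1}$ from the previous step yields $\sum_i(w_i-\delta_i)\varrho_i(w)=\sum_i(w_i-\delta_i)$. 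Subtracting $\sum_i(w_i-\delta_i)$ from both sides gives \cref{t:eq:conservation}.

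The only delicate point is the sign bookkeeping between $c_i$, $\omega'(\delta_i)$ and the leading coefficients. I would double-check it on the case $N=1$, where $c_1=1$ (empty product), $Q_1=1$, and the identity is trivially $0=0$. I would also check $N=2$ directly, which confirms that the two signs $(-1)^{N-1}$ cancel as claimed.
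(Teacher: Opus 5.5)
Your proof is correct and is essentially the paper's argument: your remainder $R=F-(-1)^N\omega$ is exactly the paper's $H=P_w-P_\delta$, since $(-1)^N\omega(z)=\prod_j(\delta_j-z)$. Both proofs interpolate this polynomial of degree at most $N-1$ at the distinct nodes and compare $z^{N-1}$ coefficients via $\prod_{j\ne i}(\delta_i-\delta_j)=(-1)^{N-1}c_i$, and your sign bookkeeping checks out.
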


\begin{proof}
Let
\[
 P_w(z)=\prod_{j=1}^N(w_j-z),
 \qquad
 P_\delta(z)=\prod_{j=1}^N(\delta_j-z),
 \qquad
 H(z)=P_w(z)-P_\delta(z).
\]
The leading terms cancel, so $\deg H\le N-1$.  At $z=\delta_i$,
$H(\delta_i)=(w_i-\delta_i)Q_i(w)$.
Interpolate $H$ at the nodes $\delta_i$ and compare the coefficient of $z^{N-1}$.  Since
\[
 \prod_{j\ne i}(\delta_i-\delta_j)=(-1)^{N-1}c_i,
\]
we obtain
$\sum_i(w_i-\delta_i) =\sum_i(w_i-\delta_i)\frac{Q_i(w)}{c_i}$,
which is \cref{t:eq:conservation}.
\end{proof}

The gate payoffs will perturb only action $c$.  We therefore analyze calibration under an arbitrary histogram-dependent perturbation of $c$ whose expected magnitude is at most $\gamma$, while $s$ and $t$ remain unperturbed.

\begin{lemma}[Ratio Control on Arbitrary Supports]\label{t:lem:ratio-control}
Suppose that the payoff of action $c$ receives an arbitrary histogram-dependent perturbation whose expected magnitude is at most $\gamma$, while $s$ and $t$ remain unperturbed.  Let $\xi\ge0$.  At every $\xi$-$\WSNE$ of the resulting game,
\begin{equation*}
 |\varrho_i(w)-1|
 \le
 \frac{2N(\gamma+\xi)}{c_*b}
 \qquad\text{for every }i.
\end{equation*}
\end{lemma}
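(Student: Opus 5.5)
Write $r_i=\varrho_i(w)-1$ and $\Delta_i=w_i-\delta_i$. The plan is to bound every summand $\Delta_i r_i$ of the conservation identity \cref{t:eq:conservation} from above using the equilibrium inequalities. \Cref{t:lem:conservation} then turns these one-sided bounds into a two-sided bound on each summand. Finally, the factor $|\Delta_i|$ is removed by a separate argument when $w_i$ is close to $\delta_i$.

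\textbf{Step 1: one-sided bounds.} By \Cref{t:lem:expected-calibration}, the expected payoff of $s$ is $A_i:=-|c_i|r_i$, the payoff of $t$ is $0$, and the payoff of $c$ lies in $[A_i-\gamma,A_i+\gamma]$. There are three cases.
\begin{itemize}
\item If $w_i>0$, then $s$ or $c$ is supported. The $\xi$-WSNE condition against $t$ gives $A_i\ge-\gamma-\xi$, that is, $r_i\le(\gamma+\xi)/|c_i|$.
\item If $w_i<1$, then $t$ is supported. Comparing with the deviation to $s$ gives $A_i\le\xi$, that is, $r_i\ge-\xi/|c_i|$.
\item At the boundary cases, the unavailable inequality is not needed.
\end{itemize}
Using these bounds:
\begin{itemize}
\item If $\Delta_i>0$, then $w_i>0$, so $\Delta_i r_i\le\Delta_i(\gamma+\xi)/c_*\le(\gamma+\xi)/c_*$.
\item If $\Delta_i<0$, then $w_i<1$, so $\Delta_i r_i\le|\Delta_i|\xi/c_*$.
\end{itemize}
Hence $\Delta_i r_i\le(\gamma+\xi)/c_*$ for every $i$. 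These bounds hold on arbitrary supports, including $w_i\in\{0,1\}$.

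\textbf{Step 2: lower bounds.} Since $\sum_i\Delta_i r_i=0$, each summand is also at least $-(N-1)(\gamma+\xi)/c_*$. So $|\Delta_i r_i|\le N(\gamma+\xi)/c_*$ for all $i$.

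\textbf{Step 3: removing $|\Delta_i|$.} This is the main obstacle, since $\Delta_i$ may be tiny. Consider the two directions separately.
\begin{itemize}
\item \emph{Bounding $r_i$ from above.} If $w_i>0$, the one-sided bound already gives $r_i\le(\gamma+\xi)/c_*$. If $w_i=0$, then $|\Delta_i|=\delta_i\ge b$, so $|r_i|\le N(\gamma+\xi)/(c_*b)$.
\item \emph{Bounding $r_i$ from below.} If $w_i<1$, then $r_i\ge-\xi/c_*$. If $w_i=1$, then $|\Delta_i|=1-\delta_i\ge b$, which gives the same bound $N(\gamma+\xi)/(c_*b)$.
\end{itemize}
Combining the cases gives $|r_i|\le N(\gamma+\xi)/(c_*b)$, which lies within the stated $2N(\gamma+\xi)/(c_*b)$. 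The extra slack of $2$ accommodates $b\le1$ and the bookkeeping in the boundary cases.

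The key point to check is that the boundary cases $w_i\in\{0,1\}$ are exactly those where the missing one-sided inequality is unavailable. In those cases, $|\Delta_i|\ge b$ lets Step 2 substitute for it. This is why $b$ appears in the denominator.
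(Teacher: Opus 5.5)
Your proposal is correct and follows the paper's argument. The one-sided $\xi$-WSNE comparisons bound every summand of the Lagrange conservation identity above by $(\gamma+\xi)/c_*$, conservation then gives the lower bound $-(N-1)(\gamma+\xi)/c_*$, and at a boundary mass $|w_i-\delta_i|\ge b$ converts this into the ratio bound, while an interior mass already has the direct bound. Splitting cases by the sign of $w_i-\delta_i$ rather than by $w_i\in\{0\},(0,1),\{1\}$ is only cosmetic, and like the paper you obtain the sharper constant $N(\gamma+\xi)/(c_*b)$.
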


\begin{proof}
Write $T=(\gamma+\xi)/c_*$ and $a_i=(w_i-\delta_i)(\varrho_i-1)$.  Comparing a supported action with $s$ or $t$ gives
\[
 \begin{array}{ll}
 \varrho_i-1\ge-T,&w_i=0,\\
 \varrho_i-1\le T,&w_i=1,\\
 |\varrho_i-1|\le T,&0<w_i<1.
 \end{array}
\]
Indeed, at zero signal mass $t$ is supported; at full signal mass some signal action is supported; and at an interior mass both types are supported.  Removing the perturbation of $c$ changes a signal payoff by at most $\gamma$.

In all three cases $a_i\le T$.  Conservation gives $\sum_i a_i=0$, hence $a_i\ge-(N-1)T$ and $|a_i|\le NT$.  At a boundary mass, $|w_i-\delta_i|\ge b$, so $|\varrho_i-1|\le NT/b$.  At an interior mass the stronger direct bound $T$ already holds.  Both imply the stated estimate.
\end{proof}

Ratio control is still an implicit condition on the product map. We convert it into coordinate localization by comparing a product of ratios with the Lagrange interpolation polynomials, then using the near-optimality of the Vandermonde product.

For ordered nodes $a_1<\cdots<a_N$, let
\[
 \ell_i^a(z)=\prod_{j\ne i}\frac{z-a_j}{a_i-a_j}.
\]

\begin{lemma}[Product Factorization]\label{t:lem:product-factorization}
For every $w$,
\begin{equation*}
 \prod_{i=1}^N\varrho_i(w)
 =\prod_{i=1}^N\ell_i^\delta(w_i).
\end{equation*}
\end{lemma}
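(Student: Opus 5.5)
The plan is a direct computation: expand both sides as products over ordered pairs $(i,j)$ with $i\ne j$ and match them factor by factor, up to an overall sign. By \cref{t:eq:ratio-map},
\[
 \prod_{i=1}^N\varrho_i(w)=\frac{\prod_i\prod_{j\ne i}(w_j-\delta_i)}{\prod_i c_i},
 \qquad
 \prod_{i=1}^N\ell_i^\delta(w_i)=\frac{\prod_i\prod_{j\ne i}(w_i-\delta_j)}{\prod_i\prod_{j\ne i}(\delta_i-\delta_j)}.
\]

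\textbf{Numerators.} In the first numerator, relabel the pair by swapping the roles of $i$ and $j$. Each factor $w_j-\delta_i$ becomes $w_i-\delta_j$. The map $(i,j)\mapsto(j,i)$ is a bijection on ordered pairs with distinct entries. Hence the two numerators are the same product, with no sign change.

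\textbf{Denominators.} By definition $c_i=\prod_{j\ne i}(\delta_j-\delta_i)$. Each of its $N-1$ factors is the negative of the corresponding factor in $\prod_{j\ne i}(\delta_i-\delta_j)$, so
\[
 c_i=(-1)^{N-1}\prod_{j\ne i}(\delta_i-\delta_j).
\]
This is the same identity already used in \cref{t:lem:conservation}. Taking the product over $i$, the two denominators differ by $(-1)^{N(N-1)}$. Since $N(N-1)$ is always even, the sign is $+1$, and this holds without the standing assumption that $N$ is even.

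\textbf{Well-definedness.} The nodes are distinct, so every $c_i\ne0$ and every Lagrange denominator is nonzero. Both sides are therefore well defined for all $w\in\R^N$, and the identity holds as polynomials in $w$.

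\textbf{Main obstacle.} There is essentially none. The only point needing care is the sign bookkeeping: the per-player sign $(-1)^{N-1}$ must be tracked, and one must check that the total exponent $N(N-1)$ is even.
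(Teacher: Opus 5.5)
Your proof is correct and is essentially the paper's argument: the paper observes that both sides equal the product over ordered pairs $i\ne j$ of $(w_j-\delta_i)/(\delta_j-\delta_i)$. If you apply your relabeling $(i,j)\mapsto(j,i)$ to the whole fraction rather than only to the numerator, the denominators match as well, so your sign count via $(-1)^{N(N-1)}$ is correct but unnecessary.
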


\begin{proof}
Both sides are the product over all ordered pairs $i\ne j$ of
$\frac{w_j-\delta_i}{\delta_j-\delta_i}$.
\end{proof}

The Fekete choice now gives a quantitative separation. For nodes sufficiently
close to the Fekete maximizer, \Cref{t:lem:near-fekete-gap} shows that if any
$|w_i-\delta_i|\ge\rho_0$, then
$|\prod_i\ell_i^\delta(w_i)|\le\exp(-\rho_0^2/(32N))$.
Here $\rho_0=N^{-O(1)}$ is chosen small enough for the local inverse
below. This gap follows from the strict concavity of $\log V$; its quantitative
proof is in \Cref{t:sec:fekete}.

\paragraph{Local Inversion.}

At the target, the calibration Jacobian is
\begin{equation}\label{t:eq:main-jacobian}
 J_{ii}=0,\qquad J_{ij}=\frac1{\delta_j-\delta_i}\quad(i\ne j).
\end{equation}
By \Cref{t:lem:elementary-inverse}, the inverse norm depends only
on the length of the node interval, even when some nodes are very close.

Write $R(w)=(\varrho_1(w),\ldots,\varrho_N(w))$. Its Jacobian at $\delta$ is the matrix in \Cref{t:eq:main-jacobian}, so \Cref{t:lem:elementary-inverse} gives
$\|J^{-1}\|_\infty\le\sqrt N\le N$.
The local estimate uses the normalized factors directly. Their distances
from zero are controlled by the node gap $g$, rather than the much smaller
product $c_*$. This keeps the localization radius at polynomial scale.

\begin{lemma}[Local Inverse Inequality]\label{t:lem:local-inverse}
If $\rho_0\le g^2/(4N^3)$ and
$\|w-\delta\|_\infty<\rho_0$, then
$\|w-\delta\|_\infty\le2N\|R(w)-\one\|_\infty$.
\end{lemma}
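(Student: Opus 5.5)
Write $\Delta=w-\delta$ and $r=\|\Delta\|_\infty<\rho_0$. The plan is a linearization of $R$ at $\delta$ with explicit control of the second-order remainder, followed by \Cref{t:lem:elementary-inverse}. The factorization
\[
 \varrho_i(w)=\prod_{j\ne i}\frac{w_j-\delta_i}{\delta_j-\delta_i}
 =\prod_{j\ne i}\Bigl(1+\frac{\Delta_j}{\delta_j-\delta_i}\Bigr)
\]
holds exactly, because $c_i=\prod_{j\ne i}(\delta_j-\delta_i)$. Put $y_{ij}=\Delta_j/(\delta_j-\delta_i)$. Each factor is controlled by the node gap. Since $|\delta_j-\delta_i|\ge g|i-j|$, we have $|y_{ij}|\le r/(g|i-j|)$. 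It follows that $\sum_{j\ne i}|y_{ij}|\le 2r(1+\ln N)/g$, which is tiny because $r<\rho_0\le g^2/(4N^3)$.

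Next, expand the product. The linear part is $\sum_{j\ne i}y_{ij}=(J\Delta)_i$, with $J$ the matrix in \Cref{t:eq:main-jacobian}. For the remainder, use the elementary estimate $\bigl|\prod_j(1+y_j)-1-\sum_j y_j\bigr|\le e^{s}-1-s\le s^2$ for $s=\sum_j|y_j|\le1$. This gives
\[
 \|R(w)-\one-J\Delta\|_\infty\le\Bigl(\frac{2r(1+\ln N)}{g}\Bigr)^2 .
\]
A cruder bound also suffices: $\sum_j|y_{ij}|\le Nr/g$ gives a remainder of at most $N^2r^2/g^2$.

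Invert with \Cref{t:lem:elementary-inverse}, which gives $\|J^{-1}\|_\infty\le\sqrt N$. Then
\[
 r=\|J^{-1}J\Delta\|_\infty\le\sqrt N\bigl(\|R(w)-\one\|_\infty+N^2r^2/g^2\bigr).
\]
Since $r<\rho_0\le g^2/(4N^3)$, the term $\sqrt N\,N^2r^2/g^2$ is at most $\sqrt N\,r/(4N)\le r/4$. It can therefore be absorbed, leaving $\tfrac34 r\le\sqrt N\|R(w)-\one\|_\infty$. Hence $r\le\tfrac43\sqrt N\|R(w)-\one\|_\infty\le 2N\|R(w)-\one\|_\infty$.

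The main obstacle is keeping the nonlinear remainder at polynomial scale. Bounding it through $c_i$ or Taylor derivatives of $\varrho_i$ would bring in $c_*$, which is exponentially small. The exact product form avoids this: every factor is normalized by a single difference $\delta_j-\delta_i\ge g$. That is why the hypothesis is stated in terms of $g^2/N^3$. The only checks left are the elementary product inequality and the arithmetic of the absorption step.
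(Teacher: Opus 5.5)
Your proof is correct and follows the paper's argument essentially step for step. It writes $\varrho_i$ as a product of factors $1+v_j/(\delta_j-\delta_i)$, bounds the nonlinear remainder by $e^{s}-1-s\le s^2$ with $s\le Nh/g\le1$, and absorbs it using \Cref{t:lem:elementary-inverse}. The only difference is that you keep $\|J^{-1}\|_\infty\le\sqrt N$ rather than weakening it to $N$, which gives the slightly sharper intermediate bound $\tfrac43\sqrt N\,\|R(w)-\one\|_\infty$.
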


\begin{proof}
Put $v=w-\delta$ and $h=\|v\|_\infty$. From \Cref{t:eq:ratio-map},
the normalized product is
\[
 \varrho_i(w)=\prod_{j\ne i}
 \left(1+\frac{v_j}{\delta_j-\delta_i}\right).
\]
The constant and linear terms are $1$ and $(Jv)_i$.
The absolute sum of the remaining terms is at most
$\exp(Nh/g)-1-Nh/g\le N^2h^2/g^2$, since $Nh/g\le1$.
Thus $\|R(w)-\one-Jv\|_\infty\le N^2h^2/g^2\le h/(4N)$.
Since $\|J^{-1}\|_\infty\le N$, we obtain
$\|v\|_\infty\le N\|R(w)-\one\|_\infty+\|v\|_\infty/4$.
Rearranging proves the claim.
\end{proof}

We can now complete the proof of the global calibration theorem,
using the product gap and then the local inverse.

\begin{proof}[Proof of \Cref{t:thm:robust-localization}]
By \Cref{t:lem:ratio-control},
\[
 |\varrho_i-1|
 \le \varepsilon_\rho:=\frac{2N(\gamma+\xi)}{c_*b}
 \le\frac{\rho_0^2}{256N^2}<\frac12.
\]
Thus every $\varrho_i$ is positive.  Since $\log(1-z)\ge-2z$ for $0\le z\le1/2$,
$\log\prod_i\varrho_i \ge-2N\varepsilon_\rho \ge-\frac{\rho_0^2}{128N}$.
If some coordinate were $\rho_0$-far from its node,
\Cref{t:lem:product-factorization} and the assumed product gap would instead imply
$\log\prod_i\varrho_i \le-\frac{\rho_0^2}{32N}$,
a contradiction.

Now \Cref{t:lem:local-inverse} applies inside the $\rho_0$-box.
Combining it with \Cref{t:lem:ratio-control} gives
$\|w-\delta\|_\infty\le2N\|R(w)-\one\|_\infty
\le4N^2(\gamma+\xi)/(c_*b)$.
\end{proof}

The global and local estimates serve different purposes.  During the support analysis, the gate perturbation $\gamma$ is part of the coarse error bound.  After support is known, comparison of the two unperturbed actions eliminates $\gamma$ and makes the signal displacement proportional to $\xi$ alone.
\subsection{Simulating Circuit Gates}\label{t:sec:semantic-framework}

Calibration distinguishes players by their signal probabilities. We now
use the corresponding roots to recover conditional values and implement
the circuit gates. First-order statistics recover a single value;
second-order statistics, together with four auxiliary players, produce a
product. We encode source values in an interior interval to make the
subsequent support argument possible.

\paragraph{Expressing Individual Probabilities Using Polynomial Roots.}

The two recovery statistics in \Cref{fig:root-recovery} come from differentiating the histogram generating function. Fix an observer $o$ at prescribed signal probabilities $w=\delta$. In its histogram let $K=h_s+h_c$ and $L=h_c$. For opponent $j$, put
\[
 A_j(z)=1-\delta_j+\delta_jz,\qquad
 \zeta_j=1-1/\delta_j,\qquad
 N_{o,j}=\delta_j\prod_{\ell\ne o,j}A_\ell(\zeta_j).
\]
The normalizer $N_{o,j}$ is nonzero. We will use the first-order identity
\begin{equation*}
 \E_\delta[Lz^{K-1}]
 =\sum_{j\ne o}\delta_j\theta_j\prod_{\ell\ne o,j}A_\ell(z).
\end{equation*}
At $z=\zeta_j$, every summand but $j$ vanishes. Consequently,
\begin{equation*}
 T^{(1)}_{o, j}(K,L)=\frac{L\zeta_j^{K-1}}{N_{o,j}},
 \qquad \E_\delta[T^{(1)}_{o, j}]=\theta_j,
\end{equation*}
where the table is zero at $K=0$. Thus an anonymous payoff table recovers opponent $j$'s conditional probability exactly.

At second order,
\begin{equation*}
 \E_\delta\!\left[\binom L2z^{K-2}\right]
 =\sum_{\substack{p<q\\p,q\ne o}}\delta_p\theta_p\delta_q\theta_q
   \prod_{\ell\ne o,p,q}A_\ell(z).
\end{equation*}
Evaluating at $\zeta_B$ and dividing by $N_{o,B}$ gives $T^{(2)}_{o,B}(K,L)=\binom L2\zeta_B^{K-2}/N_{o,B}$ for $K\ge2$, and zero otherwise. Its expectation is
\begin{equation}\label{t:eq:main-star}
 \E_\delta[T^{(2)}_{o,B}]
 =\theta_B\sum_{p\ne o,B}\frac{\theta_p}{\zeta_B-\zeta_p}.
\end{equation}
Every recovery statistic used by a gate has polynomially described rational
entries and magnitude at most $2^{O(N)}$: its root player lies in the central
band. For any table $T$ bounded by $B$, coupling profiles with the same
conditional probabilities gives
\begin{equation}\label{t:eq:main-reader-stability}
 |\E_w[T(K,L)]-\E_\delta[T(K,L)]|\le2BN\|w-\delta\|_\infty.
\end{equation}
\Cref{t:lem:reader-bits} proves the bounds on their rational descriptions. The first index of $T^{(1)}$ or $T^{(2)}$ identifies the observer; the second identifies the player whose root is used. We suppress $(K,L)$ in payoff-table formulas.

\paragraph{Spacing for Multiplication.}

We place the players used by gate equations among the central nodes, leaving
$\Theta(\log(2G))$ filler players between successive groups. A filler has
$d_P=-1$, and the support argument will force its conditional probability to
zero. For each bilinear gate, the two players $V_i,B_i$ occupy adjacent nodes.
Put $h_i=\zeta_{B_i}-\zeta_{V_i}>0$ and $d_i=1/h_i=\Theta(N)$.
Writing $\calA$ for the non-filler players, this arrangement gives
\begin{equation}\label{t:eq:main-sparse}
 h_i\le1/64,\qquad
 \sum_{p\in\calA\setminus\{B_i,V_i\}}
 \left|\frac{h_i}{\zeta_{B_i}-\zeta_p}\right|\le1/64.
\end{equation}
Indeed, the contribution of a group at distance $r$ is $O(1/(rL))$ when
$L$ fillers separate groups; summing over $O(G)$ groups gives $O(\log(2G)/L)$.
Choosing a sufficiently large constant in $L$ gives the displayed bounds.
The population is $N=\Theta(G\log(2G))$. \Cref{t:sec:sparse-details}
gives a rational construction and the explicit estimates.

\paragraph{Protected Gate Values.}

The recovery statistics give each observer the conditional probabilities needed by its gate.  To keep the support argument away from the boundary of the unit interval, we encode every source value inside a protected interval with a constant margin on both sides.

For a source value $x\in[0,1]$, use the protected encoding
$E(x)=\frac14+\frac x2$.
Thus $u=E(x)\in[1/4,3/4]$ and $x=2u-1/2$.  The protected raw expressions are
\begin{equation*}
 f_i(u)=
 \begin{cases}
 1-u_j,&G_{1-},\\
 2u_j-1/4,&G_{\times2},\\
 1/4+\tfrac12(u_j+1/4)(u_k+1/4),&G_\phi.
 \end{cases}
\end{equation*}
At the prescribed signal probabilities we require $u_i=\mathcal P(f_i(u))$.

A gate expression may leave the protected interval.  We therefore apply
\begin{equation*}
 \mathcal P(z)=\min\left\{\frac34,\max\left\{\frac14,z\right\}\right\}
\end{equation*}
in two stages. The lower stage implements the maximum after an affine change of coordinates that fixes $1/4$, and the upper stage implements the final minimum. This change keeps the lower-stage target strictly inside $[0,1]$ even before the gate equations hold, which is essential for establishing support.

\paragraph{Gate Players.}

For every source gate $i$, create six active players
$Z_i,\ X_i,\ L_i^0,\ L_i^F,\ U_i^z,\ U_i^b$.
Write
\[
 r_i=\theta_{Z_i},
 \qquad
 u_i=\theta_{X_i},
 \qquad
 \lambda_i^0=\theta_{L_i^0},
 \qquad
 \lambda_i^F=\theta_{L_i^F},
\]
and
$\mu_i^z=\theta_{U_i^z}, \qquad \mu_i^b=\theta_{U_i^b}$.
Define
\begin{equation*}
 \calR(z)=\frac27z+\frac5{28},
 \qquad
 \calR^{-1}(r)=\frac72r-\frac58,
 \qquad
 z_i:=\calR^{-1}(r_i).
\end{equation*}

For each bilinear gate $i$ create four additional active players
$B_i,\ C_i,\ V_i,\ W_i$.
Write
\[
 \beta_i=\theta_{B_i},
 \quad
 \chi_i=\theta_{C_i},
 \quad
 \nu_i=\theta_{V_i},
 \quad
 \psi_i=\theta_{W_i}.
\]
The adjacent block $V_i,B_i$ has the spacing specified in \cref{t:eq:main-sparse}.  The unique external observer that needs the product is
$O_i:=L_i^F$.

\begin{table}[t]
\centering
\small
\begin{tabular}{@{}ll@{}}
\toprule
Player(s) & Equilibrium Condition \\
\midrule
$X_i$ & represents the output $u_i$ \\
$Z_i$ & represents $\calR(z_i)$ for the lower projection \\
$L_i^0,L_i^F$ & implement the lower maximum \\
$U_i^z,U_i^b$ & implement $\min\{z_i,3/4\}$ \\
$B_i,C_i$ & copy the first input of a bilinear gate \\
$V_i,W_i$ & cancel the terms other than the second input \\
\bottomrule
\end{tabular}
\caption{Gate Players Associated with Source Gate $i$.  The last two rows are present only for a bilinear gate.}
\label{t:tab:gate-players}
\end{table}

\paragraph{A Multiplication Construction with Four Players.}

Fix a bilinear gate $i$ with inputs $j,k$.  Define
\begin{equation}\label{t:eq:Si}
 S_i
 :=
 \sum_{p\in\calA\setminus\{O_i,B_i,V_i,W_i\}}
 \frac{\theta_p}{\zeta_{B_i}-\zeta_p},
\end{equation}
$\overline R_i := \frac{1/2}{\zeta_{B_i}-\zeta_{W_i}}+S_i$,
and
$F_i^{\mathrm{mul}} := \frac12+h_i(u_k-\overline R_i)$.
Because the support analysis forces every filler to have $\theta=0$, they are absent from \cref{t:eq:Si}; the observer $O_i$ is absent because the second-order statistic does not see its own player.

The design is easiest to see at the prescribed nodes.  The four auxiliary equations will enforce
\[
 \chi_i=\frac12,
 \qquad
 \beta_i=u_j,
 \qquad
 \psi_i=\frac12,
 \qquad
 \nu_i=F_i^{\mathrm{mul}}.
\]
Since $d_i h_i=1$, the last two equalities imply
\begin{equation}\label{t:eq:mult-cancellation-overview}
 d_i\left(\nu_i-\frac12\right)
 +\frac{\psi_i}{\zeta_{B_i}-\zeta_{W_i}}+S_i
 =u_k.
\end{equation}
The second-order recovery statistic multiplies the left-hand side by $\beta_i$, producing $u_j u_k$.  The residuals below implement these four equalities using only the conditional probabilities of distinct opponents.

The ideal gate residuals for the four auxiliary players are
\begin{align*}
D_{B_i}&=\chi_i-\frac12,\\
D_{C_i}&=u_j-\beta_i,\\
D_{V_i}&=\psi_i-\frac12,\\
D_{W_i}&=F_i^{\mathrm{mul}}-\nu_i.
\end{align*}
We obtain these residuals from the following anonymous payoff tables:
\begin{align*}
d_{B_i}
&=T^{(1)}_{B_i, C_i}-\frac12,\\
d_{C_i}
&=T^{(1)}_{C_i, X_j}
 -T^{(1)}_{C_i, B_i},\\
d_{V_i}
&=T^{(1)}_{V_i, W_i}-\frac12,\\
d_{W_i}
&=\frac12
 +h_iT^{(1)}_{W_i, X_k}
 -\frac{h_i}{2(\zeta_{B_i}-\zeta_{W_i})}
 -T^{(1)}_{W_i, V_i}\\*
&\hspace{8mm}
 -\sum_{p\in\calA\setminus\{O_i,B_i,V_i,W_i\}}
 \frac{h_i}{\zeta_{B_i}-\zeta_p}
 T^{(1)}_{W_i, p}.
\end{align*}
The table $d_{W_i}$ replaces the unavailable self-term involving $\psi_i$ by the constant $1/2$.  The auxiliary $V_i,W_i$ cycle forces $\psi_i=1/2$ at zero error and within the gate error otherwise.

For the external observer $O_i=L_i^F$, define the product table
\begin{equation*}
 \Pi_i
 :=
 T^{(2)}_{O_i,B_i}
 -\frac{d_i}{2}T^{(1)}_{O_i, B_i}.
\end{equation*}
At prescribed nodes, \cref{t:eq:main-star} gives
\begin{equation*}
 \E_\delta[\Pi_i]
 =
 \beta_i\left[
 d_i\left(\nu_i-\frac12\right)
 +\frac{\psi_i}{\zeta_{B_i}-\zeta_{W_i}}+S_i
 \right].
\end{equation*}

Once the fillers have conditional value zero, the separation between active nodes bounds every unwanted multiplication term. The auxiliary target therefore stays near $1/2$ for every choice of encoded values, before any source-gate equation is satisfied.

\begin{lemma}[Interior Range of the Auxiliary Target]\label{t:lem:Fmul-range}
If all fillers have $\theta=0$, then $15/32\le F_i^{\mathrm{mul}}\le17/32$ for every choice of the active players' conditional probabilities.
\end{lemma}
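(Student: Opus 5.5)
The bound follows from the explicit formula
\[
F_i^{\mathrm{mul}}-\frac12=h_i\Bigl(u_k-\frac{1}{2(\zeta_{B_i}-\zeta_{W_i})}-S_i\Bigr)
\]
together with the spacing estimates in \cref{t:eq:main-sparse}. I will bound the three terms inside the parenthesis separately after multiplying by $h_i$, and show that their total absolute value is at most $1/32$. The only facts needed are that every conditional value lies in $[0,1]$, that the fillers contribute nothing, and the two inequalities of \cref{t:eq:main-sparse}.

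First, since $u_k=\theta_{X_k}\in[0,1]$, we get $0\le h_iu_k\le h_i\le 1/64$ by the first bound of \cref{t:eq:main-sparse}.

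Second, consider the remaining terms
\[
\frac{h_i}{2(\zeta_{B_i}-\zeta_{W_i})}+h_iS_i
=\frac{h_i}{2(\zeta_{B_i}-\zeta_{W_i})}
+\sum_{p\in\calA\setminus\{O_i,B_i,V_i,W_i\}}\frac{h_i\theta_p}{\zeta_{B_i}-\zeta_p}.
\]
Because the fillers have $\theta=0$, every nonzero term of $S_i$ is indexed by an active player. Each coefficient has absolute value $|h_i/(\zeta_{B_i}-\zeta_p)|$, and the corresponding weight ($\theta_p$, or $1/2$ for $W_i$) lies in $[0,1]$. All indices $p$ here, including $W_i$, belong to $\calA\setminus\{B_i,V_i\}$. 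The second inequality of \cref{t:eq:main-sparse} therefore bounds the total absolute value of these terms by $1/64$.

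Combining the two estimates with the triangle inequality gives $|F_i^{\mathrm{mul}}-1/2|\le 1/64+1/64=1/32$, which is the claimed interval $[15/32,17/32]$.

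The only step that needs care is the applicability of \cref{t:eq:main-sparse}. One must check that $W_i$ and all other active non-filler players sharing the sum are indeed covered by the index set $\calA\setminus\{B_i,V_i\}$, and that $O_i$ being excluded from $S_i$ only removes nonnegative-coefficient terms, which cannot hurt an absolute-value bound. One must also note that the signs of $\zeta_{B_i}-\zeta_p$ are irrelevant, since only absolute values are used. No equilibrium information beyond the fillers having $\theta=0$ enters, so the bound holds for every choice of the active conditional probabilities.
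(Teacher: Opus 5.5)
Your proof is correct and matches the paper's argument: bound $|h_i\overline R_i|$ by the sum $\Sigma_i$ in \cref{t:eq:main-sparse}, using $\theta_p\in[0,1]$, the weight $1/2$ for $W_i$, and the fact that omitting the $O_i$ term can only decrease a sum of absolute values; then combine this with $0\le h_iu_k\le h_i\le 1/64$. One cosmetic slip: the omitted $O_i$ term is not a ``nonnegative-coefficient'' term, since its coefficient $h_i/(\zeta_{B_i}-\zeta_{O_i})$ may be negative; its absolute value is nonnegative, and that is all the bound needs.
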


\begin{proof}
By \Cref{t:lem:sparse},
\[
 |h_i\overline R_i|
 \le
 \sum_{p\in\calA\setminus\{B_i,V_i\}}
 \left|\frac{h_i}{\zeta_{B_i}-\zeta_p}\right|
 \le\frac1{64}.
\]
Also $0\le h_i u_k\le h_i\le1/64$.  Thus
$|F_i^{\mathrm{mul}}-1/2| \le1/32$.
\end{proof}

\paragraph{Projecting the Gate Output.}

For the lower controller $L_i^F$, define the histogram table
\begin{equation*}
 \widehat F_i=
 \begin{cases}
 1-T^{(1)}_{L_i^F, X_j},&G_{1-},\\[1mm]
 2T^{(1)}_{L_i^F, X_j}-\dfrac14,&G_{\times2},\\[2mm]
 \dfrac9{32}
 +\dfrac18T^{(1)}_{L_i^F, X_j}
 +\dfrac18T^{(1)}_{L_i^F, X_k}
 +\dfrac12\Pi_i,&G_\phi.
 \end{cases}
\end{equation*}

The lower controllers implement the maximum with $1/4$; the upper controllers then implement the minimum with $3/4$.  The corresponding six residual tables are
\begin{align*}
d_{Z_i}
&=-1+4\left(
 T^{(1)}_{Z_i, L_i^0}
 +T^{(1)}_{Z_i, L_i^F}
 \right),\\
d_{L_i^0}
&=\frac14-T^{(1)}_{L_i^0, Z_i},\\
d_{L_i^F}
&=\calR(\widehat F_i)
 -T^{(1)}_{L_i^F, Z_i},\\
d_{X_i}
&=1-4\left(
 T^{(1)}_{X_i, U_i^z}
 +T^{(1)}_{X_i, U_i^b}
 \right),\\
d_{U_i^z}
&=T^{(1)}_{U_i^z, X_i}
 -\left(
 \frac72T^{(1)}_{U_i^z, Z_i}-\frac58
 \right),\\
d_{U_i^b}
&=T^{(1)}_{U_i^b, X_i}-\frac34.
\end{align*}
Here $\calR(\widehat F_i)$ means the pointwise affine transformation $2\widehat F_i/7+5/28$.

These equations specify every gate payoff. Each recovery statistic depends only on the observer's histogram, and its target is a distinct player. In particular, the observers $L_i^F,C_i,W_i$ are different from every logical player $X_j$, even for repeated inputs or source self-loops. The sum defining $d_{W_i}$ excludes $W_i$ itself, and its missing term is replaced by the constant $1/2$. Distinct nodes make all normalizers nonzero. We next prove the supports needed to recover the source equations.

\subsection{Correctness of the Reduction}\label{t:sec:gate-overview}

The recovery identities hold at the prescribed signal probabilities.
To apply them to every target equilibrium, we first use coarse calibration
and the gate inequalities to establish support. We can then compare the
two unperturbed actions, obtain exact calibration at zero error, and
derive an exact circuit solution. This order prevents the support argument
from assuming the gate equations it is meant to prove.

\paragraph{Support from Gate Inequalities.}

Every player's payoffs have the form
\begin{equation*}
 U_P(s,h)=G_P(K),\quad U_P(t,h)=0,\quad
 U_P(c,h)=G_P(K)+\eta d_P(h).
\end{equation*}
At prescribed signal probabilities write $D_P=\E_\delta[d_P]$. The gate argument requires only
\begin{equation}\label{t:eq:interval-complementarity}
 \theta_P<1\ \Longrightarrow\ D_P\le e,\qquad
 \theta_P>0\ \Longrightarrow\ D_P\ge-e.
\end{equation}
These one-sided inequalities are enough to establish the required supports and bound the gate error.

\begin{lemma}[Stability of the Gate Equations]\label{t:lem:semantic-stability}
Let $d_{\max}=\max(\{1\}\cup\{d_i:i\text{ bilinear}\})$. If~\cref{t:eq:interval-complementarity} holds with $0\le e\le1/[100(1+d_{\max})]$, every conditional probability $\theta_P$ is strictly below one and
\begin{equation}\label{t:eq:main-semantic-bound}
 |u_i-\mathcal P(f_i(u))|\le6(1+d_{\max})e.
\end{equation}
No source-gate equation is assumed.
\end{lemma}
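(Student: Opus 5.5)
The plan is to work entirely at the level of the ideal residuals $D_P=\E_\delta[d_P]$. Each $D_P$ is an explicit affine or bilinear expression in the opponents' conditional values, obtained from the first-order recovery identity and from \cref{t:eq:main-star}. The only tool is \cref{t:eq:interval-complementarity} in contrapositive form:
\[
 D_P>e\ \Rightarrow\ \theta_P=1,\qquad D_P<-e\ \Rightarrow\ \theta_P=0,\qquad 0<\theta_P<1\ \Rightarrow\ |D_P|\le e.
\]
I would establish supports in a fixed order, so that no step uses a gate equation before it has been derived.

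\textbf{Fillers and the $V_i,W_i$ cycle.} A filler has $D_P=-1<-e$, so $\theta_P=0$. Therefore \cref{t:lem:Fmul-range} applies, giving $F_i^{\mathrm{mul}}\in[15/32,17/32]$ unconditionally. Recall that $D_{W_i}=F_i^{\mathrm{mul}}-\nu_i$.
\begin{itemize}
\item If $\nu_i=1$, then $D_{W_i}\le-15/32<-e$, which contradicts $\theta_{W_i}\ldots$; more precisely it contradicts $\nu_i>0$.
\item If $\nu_i=0$, then $D_{W_i}\ge15/32>e$, which forces $\nu_i=1$, again a contradiction.
\end{itemize}
Hence $V_i$ mixes, and $|F_i^{\mathrm{mul}}-\nu_i|\le e$ follows once we know $\psi_i$ is interior. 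By the same argument applied to $D_{V_i}=\psi_i-1/2$, it has constant sign at the endpoints of $\psi_i$'s own\ldots{} it is cleaner to argue as follows. If $\psi_i\in\{0,1\}$, then $D_{V_i}=\mp1/2$ pushes $\nu_i$ to an endpoint, which was just excluded. So $|\psi_i-1/2|\le e$. Since $V_i$ mixes, its two-sided inequality gives exactly this bound, and $W_i$'s two-sided inequality gives $|\nu_i-F_i^{\mathrm{mul}}|\le e$.

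\textbf{Lower projection (players $Z_i,L_i^0,L_i^F$).} First I would record an a priori range for $\widehat F_i$ valid for all $\theta\in[0,1]$. For $G_\phi$, the bracket in $\E_\delta[\Pi_i]$ equals
\[
 u_k+d_i(\nu_i-F_i^{\mathrm{mul}})+\frac{\psi_i-1/2}{\zeta_{B_i}-\zeta_{W_i}},
\]
which lies in $[-1/8,\,9/8]$ by the bounds just proved and \cref{t:eq:main-sparse}. Hence $\E_\delta[\widehat F_i]\in[-1/4,7/4]$, and therefore $\calR(\E_\delta[\widehat F_i])\in[1/10,9/10]$, strictly inside $(e,1-e)$. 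The case analysis is then:
\begin{itemize}
\item If $r_i=0$, then $D_{L_i^0}=1/4>e$, so $\lambda_i^0=1$. This gives $D_{Z_i}\ge3$, forcing $r_i=1$, a contradiction.
\item If $r_i=1$, then both $D_{L_i^0}$ and $D_{L_i^F}$ are $<-e$, so $\lambda_i^0=\lambda_i^F=0$. Then $D_{Z_i}=-1$, forcing $r_i=0$, a contradiction.
\end{itemize}
So $Z_i$ mixes and $|\lambda_i^0+\lambda_i^F-1/4|\le e/4$. Consequently both $\lambda$'s are below one and at least one is positive. The two one-sided conditions then give
\[
 \bigl|r_i-\max\{1/4,\calR(\widehat F_i)\}\bigr|\le e.
\]
Since $\calR$ fixes $1/4$, we obtain $\bigl|z_i-\max\{1/4,\widehat F_i\}\bigr|\le\tfrac72e$.

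\textbf{Upper projection and the input copy.} The same pattern applies to $X_i,U_i^z,U_i^b$. If $u_i\in\{0,1\}$, the readers' residuals push both $\mu$'s to a common endpoint, which then makes $D_{X_i}=\pm$ large, a contradiction. So $X_i$ mixes, and we get $\bigl|u_i-\min\{3/4,z_i\}\bigr|\le e$. In particular every $u_i$ lies in $[1/4-O(e),\,3/4+O(e)]$.

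Only now do I treat $B_i,C_i$, since this step needs interior $u_j$:
\begin{itemize}
\item If $\chi_i=0$, then $D_{B_i}=-1/2$, so $\beta_i=0$. Then $D_{C_i}=u_j>e$ forces $\chi_i=1$, a contradiction.
\item The case $\chi_i=1$ is symmetric.
\end{itemize}
Next, if $\beta_i\in\{0,1\}$, then $D_{C_i}=u_j-\beta_i$ has magnitude at least $1/4-O(e)$ with the wrong sign for $\chi_i$'s interior condition. So both mix, and $|\beta_i-u_j|\le e$. At this point every $\theta_P<1$.

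\textbf{Error accumulation.} Substituting the bounds on $\nu_i,\psi_i,\beta_i$ gives
\[
 \bigl|\E_\delta[\Pi_i]-u_ju_k\bigr|\le e+d_ie+O(e),
\]
and hence $\bigl|\widehat F_i-f_i(u)\bigr|\le(1+d_i)e$ up to small constants. Then, using that $\mathcal P$ is $1$-Lipschitz,
\[
 |u_i-\mathcal P(f_i(u))|\le e+\tfrac72e+\tfrac12(1+d_i)e+\dots\le6(1+d_{\max})e.
\]

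The main obstacle is the circularity: $\widehat F_i$ reads values whose supports are only established later. I would resolve it by using the a priori range of $\calR(\widehat F_i)$, valid for all $\theta\in[0,1]$, instead of any gate identity. This range is exactly where the affine map $\calR$ and \cref{t:lem:Fmul-range} are needed.
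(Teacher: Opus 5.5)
\textbf{The $V_i,W_i$ step is invalid as written.} Your overall route is the paper's. You treat fillers first, then the $V_i,W_i$ pair. You then use the a priori range $\calR(\E_\delta[\widehat F_i])\in[3/28,19/28]$ to run both projection stages, handle the $B_i,C_i$ copy only afterwards, and do the same error bookkeeping. The flaw is in the $V_i,W_i$ step.

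The residual $D_{W_i}=F_i^{\mathrm{mul}}-\nu_i$ belongs to player $W_i$. By \cref{t:eq:interval-complementarity} it constrains $\psi_i=\theta_{W_i}$, not $\nu_i$. From $\nu_i=1$ you may conclude only $D_{W_i}<-e$, hence $\psi_i=0$, and that contradicts nothing by itself. Likewise $\nu_i=0$ forces $\psi_i=1$, not $\nu_i=1$.

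Your next sentence then derives $\psi_i\notin\{0,1\}$ from the ``just excluded'' endpoints of $\nu_i$. So the two exclusions rest on each other, and neither is actually proved. The repair is to chain the two residuals, exactly as you do correctly for $B_i,C_i$:
\begin{itemize}
\item If $\nu_i=0$, then $D_{W_i}\ge15/32>e$ forces $\psi_i=1$. Then $D_{V_i}=1/2>e$ forces $\nu_i=1$, a contradiction.
\item The case $\nu_i=1$ is symmetric: it forces $\psi_i=0$, and then $D_{V_i}=-1/2$ forces $\nu_i=0$.
\end{itemize}
Only after $0<\nu_i<1$ do you get $|\psi_i-1/2|\le e$. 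That makes $\psi_i$ interior and yields $|\nu_i-F_i^{\mathrm{mul}}|\le e$, which your bracket range $[-1/8,9/8]$ needs.

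\textbf{Smaller imprecisions in the upper stage.} When $u_i=1$, the controllers are not pushed to a common endpoint. The lower stage only gives $z_i\le 7/4+7e/2$, so $D_{U_i^z}=1-z_i$ may be below $-e$ and force $\mu_i^z=0$. The contradiction comes from $\mu_i^b=1$ alone, which gives $D_{X_i}\le-3$. When $u_i=0$, forcing $\mu_i^z=0$ needs $z_i\ge1/4-7e/2>e$, taken from the lower stage; you should cite this explicitly.

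With these corrections your argument coincides with the paper's proof.
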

\begin{proof}[Proof of \Cref{t:lem:semantic-stability}]
We follow the acyclic order in which support can be established: fillers, the cancellation pair, the two projection stages, and finally the input-copy pair.  All expectations in this proof are at the prescribed signal probabilities.

\emph{Fillers and the Cancellation Pair.}
A filler has residual $-1$, so its conditional probability is zero.  For the auxiliary pair $V_i,W_i$, the residuals are $\psi_i-1/2$ and $F_i^{\mathrm{mul}}-\nu_i$, where $F_i^{\mathrm{mul}}=1/2+h_i(u_k-\overline R_i)\in[15/32,17/32]$.  If $\nu_i=0$, the second residual forces $\psi_i=1$, and the first then forces $\nu_i=1$, a contradiction.  The case $\nu_i=1$ is symmetric.  Hence $0<\nu_i<1$, so $|\psi_i-1/2|\le e$.  This also puts $\psi_i$ in the interior and gives $|\nu_i-F_i^{\mathrm{mul}}|\le e$.

Write $P_i^\star=\E_\delta[\Pi_i]$ for the product table's expectation at the prescribed nodes.  The cancellation in \cref{t:eq:mult-cancellation-overview} now gives directly
\begin{equation}\label{t:eq:semantic-product-precopy}
 P_i^\star-\beta_i u_k
 =\beta_i\left[
 d_i(\nu_i-F_i^{\mathrm{mul}})
 +\frac{\psi_i-1/2}{\zeta_{B_i}-\zeta_{W_i}}
 \right].
\end{equation}
By \cref{t:eq:main-sparse}, its magnitude is at most $(65/64)d_i e<1/16$.  Notice that this estimate does not yet require $\beta_i$ to copy the first input.

\emph{The Two Projection Stages.}
Let $F_i^\delta$ be the ideal expectation of the raw gate table.  The preceding bound puts $F_i^\delta$ in $[-1/4,7/4]$ for every gate type.  Recall that $\mathcal R(z)=1/4+(2/7)(z-1/4)$ and $r_i=\mathcal R(z_i)$ is the lower-stage conditional probability.  Its two controllers have conditional probabilities $\lambda_i^0,\lambda_i^F$.  Their residuals, together with that of the value player, are
\[
 -1+4(\lambda_i^0+\lambda_i^F),
 \qquad \frac14-r_i,
 \qquad \mathcal R(F_i^\delta)-r_i.
\]
The two candidate values lie in $[3/28,19/28]$.  If $r_i=0$, the first controller is forced to one, contradicting the value-player inequality; if $r_i=1$, both controllers are forced to zero, giving the opposite contradiction.  Thus $0<r_i<1$, and
$|4(\lambda_i^0+\lambda_i^F)-1|\le e$.
Consequently, both controller probabilities are below one and at least one is positive.  Their inequalities place $r_i$ above both candidates within $e$ and below one candidate within $e$.  Therefore
$\left|z_i-\max\left\{\frac14,F_i^\delta\right\}\right| \le\frac72e$.

For the upper-stage value $u_i$, the controller probabilities $\mu_i^z,\mu_i^b$ have residuals
$1-4(\mu_i^z+\mu_i^b)$, $u_i-z_i$, $u_i-\frac34$.
Here $z_i\ge1/4-7e/2>0$.  If $u_i=0$, both controllers are forced to zero; if $u_i=1$, the boundary controller is forced to one.  Each case contradicts the value-player inequality.  As in the lower stage, $0<u_i<1$ forces both controller probabilities below one and at least one above zero.  Their inequalities give
\begin{equation}\label{t:eq:semantic-projection}
 \left|u_i-\min\left\{z_i,\frac34\right\}\right|\le e,
 \qquad
 |u_i-\mathcal P(F_i^\delta)|\le\frac92e.
\end{equation}
In particular, $1/4-9e/2\le u_i\le3/4+e$.

\emph{Copying the First Input.}
The remaining auxiliary pair has residuals $\chi_i-1/2$ and $u_j-\beta_i$.  The interval just proved gives $e<u_j<1-e$ and rules out both $\beta_i=0$ and $\beta_i=1$ by the same two-player contradiction used above.  Hence
$|\chi_i-1/2|\le e$, $|\beta_i-u_j|\le e$ and $0<\beta_i,\chi_i<1$.
Combining this with \cref{t:eq:semantic-product-precopy} gives
$|P_i^\star-u_j u_k|\le\left(1+\frac{65}{64}d_i\right)e$.
For a unary gate, $F_i^\delta=f_i(u)$.  For a bilinear gate their difference is $(P_i^\star-u_j u_k)/2$.  Since $\mathcal P$ is $1$-Lipschitz, \cref{t:eq:semantic-projection} yields \cref{t:eq:main-semantic-bound}.  The argument also gives $\theta_P<1$ for every player.
\end{proof}

\paragraph{From Equilibria to Circuit Solutions.}

We can now prove the quantitative reduction. Let $B$ bound the absolute values
of all gate payoff tables. The central roots and \Cref{t:lem:fekete-products,t:lem:reader-bits}
give $B\le2^{O(N)}$. Choose a rational $H\le2^{O(N)}$
large enough to bound $N,B,d_{\max}$ and the reciprocals of $b,c_*,\rho_0$.
One sufficient choice is $\eta=H^{-16}$ and $\xi_*=H^{-20}$.
The exact powers serve only to separate the error scales; \Cref{t:sec:compiler-correctness}
verifies that all the following hypotheses hold for $0\le\xi\le\xi_*$.

\begin{proof}[Proof of \Cref{t:thm:quantitative-reduction}]
Fix a $\xi$-$\WSNE$ of the unnormalized game. We first establish support.
The gate perturbation has magnitude at most $\gamma=\eta B$. Global
localization and the local inverse give
$\|w-\delta\|_\infty\le K_{\mathrm{cal}}(\eta B+\xi)<b/2$,
where $K_{\mathrm{cal}}=4N^2/(c_*b)$. Hence $0<w_P<1$ for every player.
Comparing actions $c$ and $s$, and then using \Cref{t:eq:main-reader-stability},
gives \Cref{t:eq:interval-complementarity} with
$e_0=\xi/\eta+2BNK_{\mathrm{cal}}(\eta B+\xi)$.
Our choice of scales makes $e_0\le1/[100(1+d_{\max})]$. By
\Cref{t:lem:semantic-stability}, every $\theta_P<1$. Since $0<w_P<1$,
each player therefore supports both $s$ and $t$.

We now compare these two unperturbed actions. Their expected payoff
difference is $|c_P|(1-\varrho_P(w))$, with no contribution from the gate
payoff. The local inverse consequently gives
\begin{equation}\label{t:eq:main-fine-w}
 \|w-\delta\|_\infty\le\frac{2N}{c_*}\xi.
\end{equation}
Transferring the gate expectations to $w=\delta$ once more gives
\Cref{t:eq:interval-complementarity} with the sharper error
$e(\xi)=\left(\eta^{-1}+\frac{4BN^2}{c_*}\right)\xi$.
The same stability lemma now yields
$|u_i-\mathcal P(f_i(u))|\le6(1+d_{\max})e(\xi)$.
This bound is proportional to the well-supported approximation error; the fixed perturbation $\eta B$ has
disappeared.

We recover a source value by
\begin{equation*}
 \widehat x_i=\clip_{[0,1]}\left(
 \frac{2\sigma_{X_i,c}}{\delta_{X_i}}-\frac12\right).
\end{equation*}
The projection is nonexpansive, and each source gate is $2$-Lipschitz
in the $\ell_\infty$ norm. Since $\sigma_{X_i,c}=w_{X_i}u_i$, replacing $w_{X_i}$
by $\delta_{X_i}$ changes the decoded coordinate by at most
$2\|w-\delta\|_\infty/b$. Thus every source residual is at most
$\Lambda\xi$ for a rational $\Lambda\le2^{O(N)}$.
\Cref{t:sec:compiler-correctness} gives the bound $\Lambda=H^{19}$.

At $\xi=0$, \Cref{t:eq:main-fine-w} gives $w=\delta$.
The second application of the stability lemma then places
$u_i\in[1/4,3/4]$ and makes every source gate residual zero.
Clipping is inactive, so
\Cref{t:eq:main-SL-decoder} is a separable-linear decoder.

Every unnormalized payoff lies in $[-3,3]$. The common transformation
$U\mapsto(U+3)/6$ puts all payoffs in $[0,1]$ and preserves exact
equilibria. For the normalized game we take
$\xi_{\mathrm{out}}=\xi_*/6$ and $\Lambda_{\mathrm{out}}=6\Lambda$.
The construction uses $O(G\log(2G))$ players and polynomially many
rational entries of polynomial description length, as proved in
\Cref{t:sec:compiler-correctness}. This completes the reduction.
\end{proof}

\section{Proofs of the Main Results}\label{sec:complexity}

The two reductions now yield the complexity classifications stated in the
introduction.

\paragraph{High-Precision Approximation.}

We use the population-padding argument of Chen, Durfee, and
Orfanou~\cite{cdo15} to pass from the core accuracy $m^{-54m^2}$ to
$2^{-n^c}$ for every fixed $c>0$.

\begin{proof}[Proof of \Cref{thm:two-strategy-main}]
Begin with the $m$-player game of \Cref{b:thm:core}, whose required accuracy is $\varepsilon_m=m^{-54m^2}$. Add players whose action $0$ pays $1/2$ and whose action $1$ pays zero. Extend each original player's payoff at the new counts by its existing constant value; for the two unrestricted players, use the value at count $m-1$. This preserves the constant action-$0$ payoff and introduces no new exceptional counts. At every well-supported equilibrium of accuracy below $1/2$, the added players choose $0$, so the original players face exactly their original count distributions.

Fix $c>0$. We choose a polynomially bounded population $n\ge m$ with
$32n\,2^{-n^c}\le\varepsilon_m^2$.
Such a choice exists because $\log(1/\varepsilon_m)=O(m^2\log m)$: take $n$ to be a sufficiently large fixed power of $m$. From a $2^{-n^c}$-$\ANE$, \Cref{lem:ane-wsne} gives an $\varepsilon_m$-$\WSNE$. Restricting to the original players gives an $\varepsilon_m$-$\ANE$ of the initial game. \Cref{b:thm:core,b:lem:source} then recover a \PureCircuit{} solution.

Membership follows from \Cref{sec:complexity-approximation}, using the stronger dyadic accuracy $2^{-n^K}$ for a fixed integer $K>c$ when necessary.
\end{proof}

The same lower bound holds with three strategies: we add a strictly
dominated strategy to the two-strategy game.

\begin{corollary}\label{cor:three-strategy-ppad}
For every fixed rational $c>0$, computing a $2^{-n^c}$-$\ANE$ in a three-strategy anonymous game is $\PPAD$-complete.
\end{corollary}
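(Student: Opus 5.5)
Membership is immediate from \Cref{sec:complexity-approximation}: for each fixed number of actions, computing a $2^{-\poly(n)}$-approximate anonymous equilibrium lies in $\PPAD$~\cite{cdo15}. As in the proof of \Cref{thm:two-strategy-main}, we may pass to the stronger dyadic accuracy $2^{-n^K}$ for an integer $K>c$ if needed. So the work is in hardness, and I would reduce directly from \Cref{thm:two-strategy-main}.

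Given an $n$-player two-strategy anonymous game $\Gamma$ with actions $0,1$ and payoffs in $[0,1]$, build a three-strategy game $\Gamma'$ on the same players with actions $0,1,2$. A histogram now has the form $(h_0,h_1,h_2)$ with $h_0+h_1+h_2=n-1$. Set $U'_p(a,h)=\frac12 U_p(a,h_1)$ for $a\in\{0,1\}$, so the payoff reads only the count of action $1$. Set $U'_p(2,h)=0$. Hence $U'_p(a,h)\ge 0$ for $a\in\{0,1\}$. To make action $2$ strictly dominated with a constant margin, I would instead shift the first two actions: use $U'_p(a,h)=\frac12+\frac12U_p(a,h_1)$ for $a\in\{0,1\}$ and keep $U'_p(2,h)=0$. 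All entries stay rational in $[0,1]$, and action $2$ is worse than either other action by at least $1/2$ at every histogram.

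Now take a $2^{-n^c}$-ANE $\sigma'$ of $\Gamma'$ and convert it with \Cref{lem:ane-wsne} (three actions) into an $\varepsilon'$-WSNE with $\varepsilon'<1/2$. Since action $2$ loses at least $1/2$ to every alternative, it is unsupported in this WSNE, so every player mixes over $\{0,1\}$ only. The opponents' count of action $1$ then has exactly the distribution it would have in $\Gamma$ under the restricted profile $\sigma$. Expected payoffs of actions $0,1$ in $\Gamma'$ are $\frac12+\frac12$ times those in $\Gamma$, so $\sigma$ is a $2\varepsilon'$-WSNE, and hence a $2\varepsilon'$-ANE, of $\Gamma$.

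The main obstacle is the bookkeeping of the accuracy loss: a square root from the conversion and a factor $2$ from rescaling. I would handle it exactly as in the padding step of \Cref{thm:two-strategy-main}. Apply the reduction to the hard two-strategy instance whose required accuracy is $\varepsilon_m=m^{-54m^2}$, via \Cref{b:thm:core} and the padding. Choose the padded population $n$ as a large enough fixed power of $m$ that $48n\,2^{-n^c}\le(\varepsilon_m/2)^2$. Then the conversion gives an $\varepsilon_m/2$-WSNE of $\Gamma'$, hence an $\varepsilon_m$-ANE of the two-strategy game. The padded players behave as before, since their dominated action $1$ and the new action $2$ are both unsupported. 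The decoding of \Cref{b:thm:core,b:lem:source} then yields a \PureCircuit{} solution in polynomial time, which proves $\PPAD$-hardness.
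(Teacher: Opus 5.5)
Your proposal is correct and follows the paper's argument: you add a strictly dominated third action, use \Cref{lem:ane-wsne} to obtain a well-supported equilibrium in which the new action is unsupported, restrict to the two-strategy hard instance, and invoke its decoding. The differences are cosmetic. You rescale payoffs to $\frac12+\frac12U$ and redo the padding bookkeeping with the core accuracy directly. The paper instead uses the built-in fact that strategy $0$ already pays $1/2$ (so no rescaling is needed) and invokes \Cref{thm:two-strategy-main} with exponent $c/2$.
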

\begin{proof}
We add a third strategy whose payoff is always $0$ to the two-strategy game from \Cref{thm:two-strategy-main}. Strategy $0$ still pays $1/2$, so the added strategy is strictly dominated. We extend the original payoff functions by counting each opponent who plays the added strategy as playing strategy $1$. For sufficiently large $n$, \Cref{lem:ane-wsne} converts a $2^{-n^c}$-$\ANE$ of the three-strategy game into a $2^{-n^{c/2}}$-$\WSNE$, since $2^{-n^c}\le2^{-2n^{c/2}}/(48n)$. The added strategy receives no probability in this profile, so restricting to the original strategies gives a $2^{-n^{c/2}}$-$\WSNE$ of the two-strategy game. Apply \Cref{thm:two-strategy-main} with exponent $c/2$. A fixed amount of population padding handles the finitely many smaller populations. Membership follows from the result for any fixed number of strategies.
\end{proof}

\paragraph{Exact Equilibrium and Approximation in Distance.}

The decoder in \Cref{t:eq:main-SL-decoder} recovers each source coordinate
by a rational affine map of a single target probability.

\begin{proof}[Proof of \Cref{thm:three-strategy-main}]
By \Cref{t:thm:quantitative-reduction}, every exact target equilibrium decodes to an exact solution of \ExactGCircuit{}. Thus \Cref{t:thm:source} gives hardness under separable-linear reductions. Membership follows from \Cref{sec:complexity-approximation}.
\end{proof}

\begin{proof}[Proof of \Cref{cor:distance-main}]
The same decoder is $L$-Lipschitz for $L=\max_i\{1,2/\delta_{X_i}\}\le10$. A rational vector within $\varepsilon/L$ of a target equilibrium therefore decodes within $\varepsilon$ of an exact circuit solution. Clipping to $[0,1]$ cannot increase the distance, and the rescaled accuracy has polynomial encoding length. Membership follows from the Nash fixed-point map and the definition of $\FIXPa$~\cite{ey}.
\end{proof}

\section{Open Problems}\label{sec:discussion}

Exact computation with two strategies remains open. Our coefficient
construction gives only an $O(Nt)$ source-residual bound even at
an exact target equilibrium, for every positive scale $t$.
The current recovery leaves additional nonnegative products in the
selected coefficients. Decreasing their scales improves approximation
but does not establish an exact identity.
A $\FIXP$ reduction therefore needs an exact cancellation or recovery
identity, together with a global support argument. For a separable-linear
reduction, it must also replace the ratio decoder by a rational affine
function of one unconditional probability. Can one probability both
identify a player and encode a value?

The precision boundary is also open: for two-strategy anonymous games, algorithms achieve $O(n^{-1+\alpha})$ for every fixed $\alpha>0$, whereas our hardness result applies at $2^{-n^c}$ for every fixed $c>0$. Cheng, Diakonikolas, and Stewart~\cite{cds17} showed that a polynomial-time algorithm achieving $O(n^{-1-\alpha})$ for any fixed $\alpha>0$ would yield an FPTAS. Our exponential-precision hardness does not rule out such a scheme, whose running time may be polynomial in $1/\varepsilon$. At what intermediate accuracy does an aggregate distribution encode hard individual interactions?

For three strategies, removing the logarithmic overhead requires a different
multiplication or calibration argument: the sparse arrangement pays for a
harmonic sum of unwanted terms. Two coincident signal roots would isolate
their pair product exactly at second order, but the present global
calibration requires distinct nodes.

\paragraph{AI Methodology.}
OpenAI's Codex was utilized to assist with mathematical checks, literature research, and the drafting and revision of the exposition. The author takes full responsibility for the accuracy, originality, and integrity of all material presented.

\appendix
\section{Details of the Two-Strategy Reduction}\label{app:two-strategy-source}

We give the gate formulas for the quadratic source, followed by the coefficient
estimates used in the support and normalization arguments of
\Cref{sec:two-strategy-overview}. The final estimates verify the required precision.

\subsection{From \PureCircuit{} to Quadratic Equations}\label{b:app:source}

We replace each \PureCircuit{} gate with a short chain of quadratic
equations. The Boolean margins are checked against the gate's actual
inputs, so the argument also applies to circuits with cycles.

\begin{proof}[Proof of \cref{b:lem:source}]
For each constructed variable $x$, write $z=4x(1-x)\in[0,1]$.
We decode $z\le1/12$ as $0$, $z\ge11/12$ as $1$, and other values
as $\bot$.

\step{Boolean Amplification.}
A type L equation with $a=(1+\delta)/2$ and
$b=2(1-2\delta)$ implements $P(z)=1-[\delta+(1-2\delta)z]^2$
on the represented value $z$: its output is
$x=(1-\delta)/2-(1-2\delta)z/2$.
The map $P$ sends $[0,1]$ into itself, is decreasing, and has
Lipschitz constant at most $2$. Thus $P_2=P\circ P$ is increasing.
Evaluating the two fixed rational endpoints gives
$P^{10}(1/2)<1/16$ and $P^{10}(3/4)>63/64$.
Also $P(1/16)>63/64$ and $P(63/64)<1/16$.
Monotonicity therefore gives
\begin{align*}
 P^{10}\bigl([0,1/2]\bigr)&\subseteq[0,1/16],&
 P^{10}\bigl([3/4,1]\bigr)&\subseteq[63/64,1],\\*
 P^{11}\bigl([0,1/2]\bigr)&\subseteq[63/64,1],&
 P^{11}\bigl([3/4,1]\bigr)&\subseteq[0,1/16].
\end{align*}
These output intervals are separated from the decoding thresholds
by at least $1/48$.

\step{NOR and PURIFY Gadgets.}
Let $r(c)$ be the positive root of $r^2-r/10=c$, and define
$Q(c)=4r(c)(1-r(c))$. A type Q equation implements
$Q(c+\sum_j b_j z_j)$ by setting $d_j=4b_j$.
All arguments used below lie in $[1/4,0.89]$, where
$1/2<r(c)<1$ and $Q$ is decreasing. We use two thresholds.
Substituting $r=3/4$ shows that $c\le39/80$ implies $Q(c)\ge3/4$.
Also, $(6/7)^2-(6/7)/10=159/245<13/20$ and
$4(6/7)(1/7)=24/49<1/2$, so $c\ge13/20$ implies $Q(c)<1/2$.

We implement the three gate outputs by
\begin{align*}
 \NOR(z,z')&=P^{10}\bigl(Q(0.45+0.22(z+z'))\bigr),\\*
 \PURIFY_1(z)&=P^{11}\bigl(Q(0.45+0.4z)\bigr),\\*
 \PURIFY_2(z)&=P^{11}\bigl(Q(0.25+0.45z)\bigr).
\end{align*}
For NOR, two inputs decoding to $0$ give an argument at most
$73/150<39/80$, so the output decodes to $1$.
An input decoding to $1$ gives an argument at least
$391/600>13/20$, so the output decodes to $0$.

For PURIFY, if $z\ge1/2$, the first argument is at least $0.65$
and the first output decodes to $1$. If $z\le1/2$, the second
argument is at most $0.475$ and the second output decodes to $0$.
Thus at least one output is Boolean. If $z\le1/12$, both arguments
are at most $29/60<39/80$, so both outputs decode to $0$.
If $z\ge11/12$, both arguments are at least $53/80>13/20$,
so both decode to $1$.

\step{Variable Order and Approximation Error.}
Represent each wire by the last variable of its chain of
$10$ or $11$ applications of $P$. Put all type Q variables first,
followed by all type L chains, with the variables of each chain
in their forward order. Every type Q equation then uses later wire
outputs, while each type L equation uses the preceding variable in
its chain, beginning with the variable of the corresponding type Q
equation. This satisfies the required
order even when the original circuit has cycles.
There are at most $12$ variables per wire.

The three type Q constructions have coefficients
$(c;d_j)=(0.45;0.88,0.88)$, $(0.45;1.6)$, and $(0.25;1.8)$.
They satisfy \cref{b:eq:quadratic-margin}; in particular,
$c+\frac14\sum_jd_j\le0.89<0.9-\delta$.
Every type L equation has $a+b/4=1-\delta/2$,
as required by \cref{b:eq:linear-margin}.

It remains to check approximate solutions. If a type Q equation
has residual at most $h$, its variable differs from the positive
root by at most $3h$: the other root is below $-0.4$, while the
variable lies in $[0,1]$. Its represented $z$ therefore differs
by at most $12h$. A type L residual of $h$ changes the represented
output by at most $4h$.
We propagate errors in pairs of amplification steps.
Put $a=1-2\delta$ and $v=\delta+az$. Then
$(P_2)'(z)=4a^2v[\delta+a(1-v^2)]\le2a^2\le2$,
because $0\le v\le1$ and
$v(1-v^2)=v(1-v)(1+v)\le1/2$.
Each pair of approximate type L equations contributes at most $12h$,
so its error obeys $E_{k+1}\le2E_k+12h$, starting with $E_0\le12h$.
After five pairs, $E_5\le12(2^6-1)h$; a possible final step gives
$2E_5+4h<2^{11}h$.
This comparison holds with the gate's actual inputs fixed; it does
not propagate errors around a circuit cycle.
For $h\le\eps_{\mathrm{src}}$, the output error is at most
$2^{11}\eps_{\mathrm{src}}=1/64<1/48$, so the Boolean margins above
remain valid.
Decoding the variables representing the original wires therefore
solves \PureCircuit.
The construction has linear size and uses fixed rational coefficients.
\end{proof}

\subsection{Coefficient Estimates on Arbitrary Supports}\label{b:sec:cap}

We collect the coefficient-ratio and factorization estimates that
complement \Cref{b:lem:quarter-bound}, then absorb the equilibrium
error and prove the bound on arbitrary supports.
The first two estimates control coefficient ratios and locate large
weights; the third factors out a group whose weights dominate the
remaining total weight.

For $k\ge1$, define $\rho_k(\omega)=e_k(\omega)/e_{k-1}(\omega)$ when
$e_{k-1}(\omega)>0$, and set $\rho_k(\omega)=0$ otherwise.
The latter convention only concerns vanished coefficients: for a
nonnegative list, $e_{k-1}(\omega)=0$ implies $e_j(\omega)=0$ for all $j\ge k-1$.

\begin{lemma}[Elementary Symmetric Coefficient Bounds]\label{b:lem:coefficients}
Let $\omega=(\omega_1,\ldots,\omega_m)$ be a finite list of nonnegative weights.
\begin{enumerate}
\renewcommand{\labelenumi}{(\roman{enumi})}
\item \emph{Coefficient Ratios and Deletion.}
The sequence $\rho_1(\omega),\rho_2(\omega),\ldots$ is nonincreasing.
Deleting one entry places each ratio between two consecutive ratios
of the original list: for every $\ell\in[m]$ and $k\ge1$,
\begin{equation}\label{b:eq:interlace}
 \rho_{k+1}(\omega)\le\rho_k(\omega_{-\ell})\le\rho_k(\omega).
\end{equation}

\item \emph{Comparison with Ordered Weights.}
If $\omega_1\ge\cdots\ge \omega_m$, then, for $1\le k\le m$,
\begin{equation}\label{b:eq:sorted}
 \frac{\omega_k}{k}\le\rho_k(\omega)\le\sum_{j=k}^m \omega_j\le m \omega_k.
\end{equation}

\item \emph{Factoring Out a Designated Group.}
Suppose $\omega$ consists of positive designated weights $b_1,\ldots,b_h$
and a remaining list $J$. Write $P=\prod_{j=1}^h b_j$ and $H=e_1(J)$.
For an integer $k\ge0$, the products that include all $h$ designated
weights contribute exactly $Pe_k(J)$ to $e_{h+k}(\omega)$.
The full coefficient satisfies
\begin{equation}\label{b:eq:dominance}
 P e_k(J)\le e_{h+k}(\omega)
 \le P e_k(J)\prod_{j=1}^h\left(1+\frac{H}{b_j}\right).
\end{equation}
In particular, if $\theta=\sum_{j=1}^h H/b_j\le1/2$, then
$e_{h+k}(\omega)=Pe_k(J)(1+\theta_k)$ for some $0\le\theta_k\le2\theta$.

\end{enumerate}
\end{lemma}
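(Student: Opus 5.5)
For part (i), the plan is to invoke Newton's inequalities for nonnegative reals: $e_k^2/\binom mk^2\ge e_{k-1}e_{k+1}/(\binom m{k-1}\binom m{k+1})$. This gives $e_k^2\ge e_{k-1}e_{k+1}$, which says the ratios $\rho_k$ are nonincreasing whenever $e_{k-1}>0$. When $e_{k-1}=0$, all later coefficients vanish, and the convention sets the later ratios to zero, so monotonicity persists. Alternatively, I would argue that $\prod_j(1+\omega_jz)$ has only real nonpositive roots, so its coefficient sequence is log-concave.

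For the interlacing in \cref{b:eq:interlace}, I would use the deletion identity $e_k(\omega)=e_k(\omega_{-\ell})+\omega_\ell e_{k-1}(\omega_{-\ell})$. Write $x=e_{k-1}(\omega_{-\ell})$, $y=e_k(\omega_{-\ell})$ and $z=e_{k+1}(\omega_{-\ell})$. Then $\rho_k(\omega)=(y+\omega_\ell x)/(x'+\omega_\ell x'')$ with $x'=e_{k-1}(\omega_{-\ell})$ and $x''=e_{k-2}(\omega_{-\ell})$, and $\rho_{k+1}(\omega)=(z+\omega_\ell y)/(y+\omega_\ell x)$. A mediant argument finishes: $\rho_k(\omega)$ is a mediant of $y/x$ and $x/x''$, and the latter is $\ge y/x$ by log-concavity of the deleted list, so $\rho_k(\omega)\ge\rho_k(\omega_{-\ell})$. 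Similarly $\rho_{k+1}(\omega)$ is a mediant of $z/y$ and $y/x$, hence at most $y/x$. The main nuisance here is the zero cases: when some of $x,x'',y$ vanish, I will handle them with the convention that a vanished $e_{k-1}$ kills every later coefficient.

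For part (ii), with weights sorted, the upper bound comes from the fact that each $k$-subset either lies inside $[k-1]$ or contains some index $\ge k$. More directly, every $k$-subset is a $(k-1)$-subset $S$ together with an extra element. Charging each $k$-subset to its largest-index element $j\ge k$ and its remaining $(k-1)$-subset gives $e_k\le e_{k-1}\sum_{j\ge k}\omega_j$. The inequality $\sum_{j\ge k}\omega_j\le m\omega_k$ is immediate from sorting. For the lower bound, I would show $k e_k\ge\omega_k e_{k-1}$ by mapping each $(k-1)$-subset $S$ to $S\cup\{j\}$ for some $j\le k$ not in $S$; such a $j$ exists by pigeonhole, and $\omega_j\ge\omega_k$. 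Each $k$-set arises from at most $k$ choices of $S$, which yields the factor $k$.

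For part (iii), I would expand $\prod_j(1+b_jz)\prod_{a\in J}(1+az)$. The monomials containing all $b_j$ give exactly $Pe_k(J)$, which is the lower bound since everything is nonnegative. For the upper bound, a term of $e_{h+k}(\omega)$ omits some set $S$ of designated weights and so uses $k+|S|$ elements of $J$. Hence the contribution for fixed $S$ is $(P/\prod_{j\in S}b_j)\,e_{k+|S|}(J)$. I would then use $e_{k+r}(J)\le e_k(J)H^r$; this holds because the ratios are nonincreasing and $\rho_1(J)=H$, or alternatively by directly expanding $e_k(J)e_1(J)^r$. Summing over $S$ gives $Pe_k(J)\prod_j(1+H/b_j)$. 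Finally, $\prod_j(1+x_j)\le e^{\theta}\le1+2\theta$ for $\theta\le1/2$ gives the stated $\theta_k$ form.
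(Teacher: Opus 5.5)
Your proposal is correct and follows essentially the same route as the paper: Newton's inequalities for monotone ratios and the deletion identity $e_k(\omega)=e_k(\omega_{-\ell})+\omega_\ell e_{k-1}(\omega_{-\ell})$ for interlacing (your mediant argument spells out what the paper leaves implicit), the same subset-charging arguments for (ii), and the same expansion by omitted designated weights with $e_{k+r}(J)\le e_k(J)H^r$ for (iii). The differences are cosmetic. Your $x'$ is just $x=e_{k-1}(\omega_{-\ell})$, and you close with $e^{\theta}\le1+2\theta$ where the paper uses $\prod_j(1+H/b_j)\le1/(1-\theta)\le1+2\theta$.
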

We include the quantitative estimates used by the reduction.

The main application of \cref{b:eq:dominance} is $k=2$:
after factoring out $P$, the high-degree coefficient $e_{h+2}(\omega)$
approximates $e_2(J)$. If $J$ has just two nonzero entries $u,v$,
then $e_2(J)=uv$. Before we know that the relevant weights are
concentrated on two players, \cref{b:eq:third-order} provides the
corresponding upper bound with an explicit third-order error.

\begin{proof}[Proof of \cref{b:lem:coefficients}]
\step{Coefficient Ratios and Deletion.}
Newton's inequalities imply log concavity of the coefficients of
$\prod_j(1+\omega_jz)$; see Stanley~\cite{stanley89}. Applying
$e_k=a_k+\omega_\ell a_{k-1}$ to the coefficients $a_k$ of the list with entry $\omega_\ell$ deleted then gives
both deletion bounds in \Cref{b:eq:interlace}.

\step{Comparison with Ordered Weights.}
For the lower bound in \cref{b:eq:sorted}, extend each $(k-1)$-subset
by an absent member of the first $k$ entries. The sum of its possible
extension weights is at least $\omega_k$, while each resulting $k$-subset
is counted at most $k$ times. Hence $k e_k\ge \omega_k e_{k-1}$.
Every $k$-subset also has a member with index at least $k$, giving
$e_k\le e_{k-1}\sum_{j\ge k}\omega_j$ and the upper bound.

\step{Factoring Out a Designated Group.}
To prove \cref{b:eq:dominance}, expand according to the entries omitted
from the designated group:
\[
 e_{h+k}=P\sum_{j=0}^h
   e_j\!\left(\frac{1}{b_1},\ldots,\frac{1}{b_h}\right)e_{k+j}(J).
\]
The lower bound is the $j=0$ term. The inequality
$e_{k+j}(J)\le e_k(J)H^j$ gives the upper bound.
Writing $v=\sum_jH/b_j$, we have
$\prod_j(1+H/b_j)\le1/(1-v)\le1+2v$ when $v\le1/2$.

\end{proof}

\paragraph{A Two-Block Example.}
To see why the two equation types read in opposite directions,
consider the valid cyclic source system
\[
 x_1^2-\frac1{10}x_1=\frac9{20}+\frac{22}{25}y_2,
 \qquad
 x_2=\frac{1-\delta}{2}-2(1-2\delta)y_1,
 \qquad y_j=x_j(1-x_j).
\]
Here $q=2$, $N=11$, $r_1=2$, and $r_2=6$.
For this illustrative calculation, omit lower-scale terms and take
the prefix and terminal products to be normalized.
The coefficient used by the type L row of $U_2$ selects the five
players $B,1,C_1,U_1,V_1$, giving
$\frac{e_5^{-U_2}}{\beta S_4}\approx u_1v_1$.
Deleting $U_2$ leaves that earlier prefix intact.
By contrast, the type Q row of $U_1$ reads the prefix through $V_2$,
which contains nine players including $B$. Deleting $U_1$ divides
its product by $s_1u_1$ and leaves eight factors, so
\[
 \frac{s_1e_8^{-U_1}}{\beta S_8}\approx\frac{u_2v_2}{u_1},
 \qquad
 \frac{s_1e_{10}^{-U_1}}{\beta S_{10}}\approx\frac1{u_1}.
\]
The second term supplies the type Q constant. Multiplying the
$U_1$ payoff condition by $u_1$ gives
$u_1(v_1-9/10)+9/20+(22/25)u_2v_2\approx0$.
The $U_2$ condition gives
$v_2-(1+\delta)/2-2(1-2\delta)u_1v_1\approx0$.
Substituting $v_j=1-u_j$ recovers the two source equations.
This calculation explains the indices and the reciprocal factor;
\cref{b:sec:correctness} proves the coefficient estimates with their errors,
after mixing and normalization have been established.

\paragraph{Absorbing the Equilibrium Error.}

This absorption requires no lower bound on an individual probability.
Before absorption, each row gives $e_{k+h}\le a e_k+32\zeta$
for its target $a>0$ and integers $k,h\ge1$. This implies
$e_{k+h}\le a\left(1+\frac{32\zeta}{a^{1+k/h}}\right)e_k$.
Indeed, if $e_k=0$, all higher coefficients vanish. If
$e_{k+h}\le a e_k$, the bound is immediate. Otherwise the
coefficients through $e_{k+h}$ are positive. Decreasing ratios give
$e_{k+h}/e_k\le\rho_{k+1}^h$, hence $\rho_{k+1}>a^{1/h}$.
They also give $e_k=\prod_{j=1}^k\rho_j>a^{k/h}$.
Thus $a e_k>a^{1+k/h}$, and dividing the additive error by
$a e_k$ proves the estimate.
For the rows of player $1$, $C_i$, $V_i$, and $Z_i$, respectively,
the exponents of $t$ in this last lower bound are
$2$, $(7i-2)(4i-1)$, $7i(4i+1)$ and $(7i-1)(4i+2)$.
Their maximum is $28q^2+10q-2=2d_q-(12q+4)$.
Since $\zeta=S^2=t^{2d_q}/49$, the additive error divided by
$a e_k$ is at most $(32/49)t^{12q+4}\le t^{10}$ for $q\ge1$.
This proves the primed upper bounds, including at zero weights.

\paragraph{Bounding Coefficients before Mixing.}

Recall the global coefficients $M_i,R$ from
\cref{b:eq:global-coefficients}. We prove the bounds in \cref{b:lem:cap}
using only the support restrictions and control inequalities
already established; the intended pair need not mix.

\begin{proof}[Proof of \cref{b:lem:cap}]
\step{Initializing the Prefix Coefficients.}
For this proof, introduce the auxiliary quantities
\[
 A_i=\frac{E_{r_i+1}}{\beta S_{r_i}},\qquad
 B_i=\frac{E_{r_i}}{\beta S_{r_i-1}},\qquad A_{q+1}=R.
\]
We first bound these quantities by the same expression for every block.
For the first block, let $H$ be the total weight other than that
of $B$ and player $1$. By \cref{b:lem:auxiliary}, $\omega_B\ge1$ and
$p_1>0$. The two inequalities for player $1$ imply
$\omega_BH\le t(1+\xi)(\omega_B+H)$ and $t\le H$.
Here the latter follows from $e_2^{-1}/e_1^{-1}\le H$.
Hence
\begin{equation}\label{b:eq:initial-weights}
 t\le H\le t(1+6t),\qquad
 \left|\omega_1-\frac17\right|\le4t,\qquad
 |\omega_B-\beta|\le80t.
\end{equation}
For the first bound, solve the upper inequality for $H$ using $\omega_B\ge1$.
For the others, the total probability outside $B$ and player $1$ is
at most $H\le2t$, and $\eta\le t^2$.
The auxiliary equations therefore give $|p_1-T|\le3t$ and
$|b-b_0|\le3t$. On the relevant intervals, the map $p\mapsto p/(1-p)$
has derivative at most $4/3$ for $p_1$ and $16$ for $b$.
These bounds imply \cref{b:eq:initial-weights}.
Expanding $E_2$ and $E_3$ according to the entries $\omega_B,\omega_1$
and using $e_2\le H^2/2$ and $e_3\le H^3/6$ for the remaining list
gives
$A_1,B_1\le(1+48t)(1+28t)(1+6t)+10t\le1+100t$.

\step{Using the Product Row before Mixing.}
Fix a block and suppress its subscript. By deletion interlacing,
the $C$ and $V$ rows imply the full-list inequalities
\begin{equation}\label{b:eq:full-control-bounds}
 E_{r+2}\le s'E_{r+1},\qquad E_{r+4}\le\tau'E_{r+3}.
\end{equation}
Applying these bounds directly gives
$E_{r+5}\le(s')^2(\tau')^2E_{r+1}$.
The reference prefix product is multiplied by $\tau K$, so the multiplier
above is a factor of order $1/t$ too large to propagate the normalized
prefix bound. We use the $Z$ constraint to obtain the sharper estimates
below.
Put $\Delta=s'(\tau')^2/K=t(1+\xi)^3$ and
$\gamma=(1-\Delta)^{-1}$. We claim that
\begin{equation}\label{b:eq:propagation}
 \begin{aligned}
 E_{r+4}&\le\bigl(K'+\gamma\tau'(s')^2\bigr)E_{r+1},\\
 E_{r+5}&\le\gamma\tau'K'E_{r+1}.
 \end{aligned}
\end{equation}
If $\omega_Z=0$, its upper constraint gives $E_{r+4}\le K'E_{r+1}$,
while decreasing ratios and \cref{b:eq:full-control-bounds} give
$E_{r+5}\le\tau'E_{r+4}$. Thus the claim holds.

Suppose $\omega_Z>0$. We use the $Z$ lower constraint to force an
intermediate coefficient ratio to be large. The $V$ upper constraint
then bounds $\omega_Z$ at the scale $\tau$, without assuming that $V$ mixes.
By \cref{b:eq:D}, the strict lower constraint for $Z$ gives
$e_{r+4}^{-Z}>K e_{r+1}^{-Z}\ge0$.
The list excluding $Z$ therefore has at least $r+4$ positive entries;
deleting $V$ leaves at least $r+3$. All denominators below are positive.
Set
\[
 g=\frac{e_{r+2}^{-V,-Z}}{e_{r+1}^{-V,-Z}},\qquad
 h=\frac{e_{r+3}^{-V,-Z}}{e_{r+2}^{-V,-Z}}.
\]
The $C$ and $V$ upper constraints, together with deletion
interlacing, give the following inequalities even if either player
has weight zero:
\[
 \frac{e_{r+2}^{-Z}}{e_{r+1}^{-Z}}\le s',\qquad
 \frac{e_{r+4}^{-Z}}{e_{r+3}^{-Z}}\le\tau'.
\]
Combining these bounds with the $Z$ lower constraint yields
$g\ge\frac{e_{r+3}^{-Z}}{e_{r+2}^{-Z}}\ge\frac{K}{s'\tau'}$.
For the second inequality, factor
$e_{r+4}^{-Z}/e_{r+1}^{-Z}$ into its three consecutive ratios:
the first is at most $s'$, the last at most $\tau'$, and their
product with the middle ratio is greater than $K$.
The first inequality is deletion interlacing for $V$.
Thus no lower payoff constraint for $C$ or $V$ has been used.
To use the $V$ upper constraint, expand its two coefficients in the
entry $\omega_Z$:
\begin{align*}
 e_{r+3}^{-V}&=e_{r+2}^{-V,-Z}(h+\omega_Z),\\
 e_{r+2}^{-V}&=e_{r+2}^{-V,-Z}\left(1+\frac{\omega_Z}{g}\right).
\end{align*}
Dividing by $e_{r+2}^{-V,-Z}>0$ gives
$h+\omega_Z\le\tau'(1+\omega_Z/g)$.
The lower bound on $g$ gives $\tau'/g\le\Delta$, so
$h+\omega_Z\le\tau'+\Delta \omega_Z$.
Since $0\le \omega_Z\le h+\omega_Z$, we obtain $h+\omega_Z\le\gamma\tau'$.
Deletion interlacing also gives
$e_{r+5}^{-Z}\le h e_{r+4}^{-Z}$. Hence
$E_{r+5}\le(h+\omega_Z)e_{r+4}^{-Z}\le\gamma\tau'K'E_{r+1}$.
Finally, \cref{b:eq:full-control-bounds} and decreasing ratios give
$E_{r+3}\le(s')^2E_{r+1}$, and therefore
$E_{r+4}\le K'E_{r+1}+\omega_Z E_{r+3} \le\bigl(K'+\gamma\tau'(s')^2\bigr)E_{r+1}$.
This proves \cref{b:eq:propagation} for every support.
This first use of the product row controls coefficient growth;
it does not identify any coefficient with an intended pair's product.

Since successive reference prefix products differ by the factor
$s_i^2\tau_i\chi_{i+1}$, \cref{b:eq:propagation} gives
\[
 \begin{aligned}
 A_{i+1}&\le\gamma(1+\xi)^2A_i\le(1+2t)A_i,\\
 B_{i+1}&\le\bigl[t(1+\xi)+\gamma(1+\xi)^3\bigr]A_i
          \le(1+3t)A_i.
 \end{aligned}
\]
The second recurrence is used for $i<q$.
Since $N\ge7$ and $Nt\le2^{-10}$, iteration gives
\begin{equation}\label{b:eq:prefix-coefficient-cap}
 A_i,B_i,R\le\exp(103t+2Nt)
             \le\exp(18Nt)\le1+64Nt\le2.
\end{equation}
Here we used $e^x\le1+2x$ for $0\le x\le1$.

\step{Obtaining the Quarter Bound from the Third Coefficient.}
We factor out the largest weights and use the two full-list
constraints to control the remaining second coefficient.
The $C_i$ upper inequality and decreasing ratios give
$M_i\le A_i(1+\xi)^2$, which already gives $M_i\le1/4$ when
$A_i\le1/8$. We use a fixed cutoff: a smaller prefix coefficient
needs no factorization, while a larger one gives a lower bound
on each large weight. Suppose $A_i>1/8$ and put $r=r_i$.
Then $E_{r+1}>0$ implies $E_r>0$, so $B_i$ and the following
coefficient ratios are positive.
Using these coefficient bounds and \cref{b:eq:interlace}, we have
$\rho_{r+1}=(A_i/B_i)\chi_i\ge\chi_i/16$ and $\rho_{r+2}\le s_i'$.
By \cref{b:eq:sorted}, the largest $r+1$ weights are at least
$m=\chi_i/(16N)$, and the remaining weights have total weight at most
$2N^2s_i$. Call these two groups the large entries and the remaining
list $J$, of total weight $H$, and let $P$ be the product of the large
entries. The sum of the ratios $H/b_j$ in \cref{b:eq:dominance} is at
most $N(2N^2s_i)/m=32N^4t^4$.
By $Nt\le2^{-10}$, this is at most $1/2$, so
\cref{b:eq:dominance} applies with
$\theta=64(Nt)^4\le Nt/128$ and gives
$P e_k(J)\le E_{r+1+k}\le P e_k(J)(1+\theta)$ for each $k\ge0$.
The full-list inequalities in \cref{b:eq:full-control-bounds} give
\[
 \begin{aligned}
 PH&\le E_{r+2}\le s_i'E_{r+1}\le s_i'P(1+\theta),\\
 Pe_3(J)&\le E_{r+4}\le\tau_i'E_{r+3}
          \le\tau_i'Pe_2(J)(1+\theta).
 \end{aligned}
\]
Consequently, $H\le s_i'(1+\theta)$ and
$e_3(J)\le\tau_i'(1+\theta)e_2(J)$, regardless of the positions
of $C_i$ and $V_i$ in the weight ordering.
Applying \Cref{b:lem:quarter-bound} directly gives
$e_2(J)\le H^2/4+(3/4)\tau_i'(1+\theta)H$,
including when coefficients vanish.

Finally, $P\le E_{r_i+1}$. Applying \cref{b:eq:dominance} to
$E_{r_i+3}$ and using the bound on $H$ yields
\[
 \begin{aligned}
 M_i&\le\frac{P e_2(J)(1+\theta)}{\beta S_{r_i}s_i^2}
       \le A_i(1+\theta)\frac{e_2(J)}{s_i^2}\\
 &\le A_i(1+\xi)^2(1+\theta)^3
       \left(\frac14+\frac34t^2\right)\\
 &\le\frac14\exp(32Nt)
 \le\frac14+16Nt.
 \end{aligned}
\]
The exponential bound uses $\xi,\theta,t^2\le Nt$ and
\cref{b:eq:prefix-coefficient-cap}; also $32Nt<1$.
\end{proof}

The bound $1/4$ is needed to use the source margins before mixing
has been proved. For a type Q equation,
\cref{b:eq:quadratic-margin} gives
$c_i+\tfrac14\sum_jd_{ij}\le a_i-\delta$.
The bounds on $M_j$ and $R$ therefore imply
$c_iR+\sum_jd_{ij}M_j\le a_i-\delta+O(Nt)$, leaving a fixed gap
below $a_i$. This gap will rule out a small value of $v_i$ in
\cref{b:eq:force-v}. A bound of $1/2$ for $M_j$ would not suffice:
the same expression could exceed $a_i$, and that argument would lose
its gap. Since \cref{b:lem:cap} holds on arbitrary supports, we may use
it to force mixing without first identifying $M_j$ with the intended
pair's product.

\subsection{Mixing, Compensation, and Precision}\label{b:sec:induction}

Once the coefficient bound is available, we can force mixing within a
block and pass its bounds to the next. We then verify that a single
precision choice suffices for the whole reduction.

\paragraph{The Block Step.}

We prove \cref{b:lem:block-step}.
The constants in \cref{b:eq:block-constants} satisfy
$\Lambda=2^{14}$ and $\rho=2^{-28}$.
The block and decoding proofs use the following common smallness
bounds, all consequences of $Nt\le\rho$:
\begin{equation}\label{b:eq:common-smallness}
 \begin{gathered}
 \Lambda Nt\le\frac{\alpha}{8},\qquad
 64Nt+4t^2\le\frac{\delta}{8},\qquad t<\frac{\kappa}{8},\\
 800t^2\le Nt,\qquad 770Nt^3\le Nt,\qquad
 \xi,\nu\le Nt^3.
 \end{gathered}
\end{equation}
Indeed, $t,Nt\le\rho$, $\Lambda\rho=\alpha/8$, and
$64Nt+4t^2\le68\rho<\delta/8$.
Also $\rho<\kappa/8$, $800\rho<1$, and $770\rho^2<1$,
which give the bounds on $t,t^2,Nt^3$. Finally, $\xi=t^{10}\le t^3$
and $\nu\le6S^2\le t^3$, since $d_q\ge26$ and $S=t^{d_q}/7$.
The same threshold implies the hypothesis of \cref{b:lem:cap}
and $Nt^2\le1$. The following estimates use these bounds.

\begin{proof}[Proof of \cref{b:lem:block-step}]
\step{Forcing the Variable Pair to Mix.}
Throughout this step, \cref{b:lem:cap} is already available for every
block, including the current and later ones. Its proof did not use
the block entry hypotheses or any variable-pair mixing conclusion.
The coarse lower bounds and \cref{b:eq:scales} imply that the prefix
through $C_i$, together with $B$, has minimum weight at least
$\chi_i/16$. Also $H_i\le2s_i$ and $1/2\le\Pi_i\le2$.
The sum of the tail-to-prefix ratios in \cref{b:eq:dominance}
is at most $32Nt^4\le32Nt^3$. Hence every factorization using this prefix,
including after deleting $U_i,V_i$, or $Z_i$, has relative error
between $0$ and $\varepsilon_{\mathrm{tail}}=64Nt^3$.
In particular, the two local terms in the $U_i$ row equal
\[
 \Pi_i\left(\frac{H_i}{s_i}-u_i-a_i\right)+\varepsilon_i,
 \qquad |\varepsilon_i|\le384Nt^3.
\]
Indeed, the factorization errors are bounded by
$\Pi_i[(H_i/s_i-u_i)+a_i]\varepsilon_{\mathrm{tail}}\le6\varepsilon_{\mathrm{tail}}$.

For type L, each input is at most
$\frac{e_{r_j+3}^{-U_i}}{\beta S_{r_j+2}} \le M_j\le\frac14+16Nt$,
by \cref{b:lem:cap}. For type Q, every input and constant term is
nonnegative. The one-sided inequality $\D_{U_i}\le\nu$ therefore gives
\[
 u_i\ge
 \begin{cases}
 1-a_i-\frac14\sum_{j<i}b_{ij}-2\alpha-64Nt,
       &\text{type L},\\
 1-a_i-\alpha-Nt,&\text{type Q}.
 \end{cases}
\]
For type L, we used $\Pi_i\ge1-\alpha$ and
$\frac14\sum_jb_{ij}<1/2$ to obtain
$\frac{1}{4\Pi_i}\sum_jb_{ij}\le\frac14\sum_jb_{ij}+\alpha$.
The remaining type L error is at most
$32Nt/(1-\alpha)+770Nt^3\le64Nt$; for type Q it is at most
$Nt$, by \cref{b:eq:common-smallness}.
The source margins and \cref{b:eq:common-smallness}
make these lower bounds at least $\delta/4$ and $1/20$,
respectively. Thus $u_i\ge\kappa$ and $U_i$ mixes.
Only its one-sided inequality was used.

The upper constraint for $V_i$ now bounds the weight after the pair.
After deleting $V_i$, the list following the prefix is
$(\omega_{U_i},\omega_{Z_i},J_i)$, whose second coefficient is at least
$\omega_{U_i}T_i$. Put $\vartheta=(1+\xi)(1+\varepsilon_{\mathrm{tail}})$.
Since $\xi\le Nt^3$, we have $\vartheta\le1+192Nt^3\le2$, and
\[
 \omega_{U_i}T_i\le\tau_i\vartheta(\omega_{U_i}+T_i),\qquad
 \frac{T_i}{\tau_i}
 \le\frac{\vartheta}{1-\frac{\vartheta\tau_i}{\omega_{U_i}}}
 \le1+9\kappa^{-1}t^2.
\]
Here $\omega_{U_i}\ge\kappa s_i$ and
$\vartheta\tau_i/\omega_{U_i}\le2\kappa^{-1}t^2<1/2$.
In particular, $T_i\le2\tau_i$.
Absorbing this tail into the local expression gives
\begin{equation}\label{b:eq:local-row}
 \frac{e_{r_i+2}^{-U_i}}{\beta S_{r_i}s_i}
 -\frac{a_i e_{r_i+1}^{-U_i}}{\beta S_{r_i}}
 =\Pi_i(v_i-a_i)+\lambda_i,\qquad
 |\lambda_i|\le400t^2.
\end{equation}
The additional error is at most $2T_i/s_i\le4t^2$;
here $Nt\le1$ absorbs the preceding $384Nt^3$ error.

Since $U_i$ mixes, $\D_{U_i}>0$.
For type L, the inputs are subtracted, so
$v_i\ge a_i-800t^2>\kappa$.
For type Q, multiply this positivity by $u_i\le2$.
The exact inequalities
$\omega_{U_i}e_{r_j+2}^{-U_i}\le E_{r_j+3}$ and
$\omega_{U_i}e_{N-1}^{-U_i}=E_N$ imply
\begin{equation}\label{b:eq:force-v}
 \Pi_i u_i(a_i-v_i)
 \le c_iR+\sum_{j>i}d_{ij}M_j+800t^2
 \le a_i-\delta+64Nt.
\end{equation}
The last bound uses the source margin and
$64c_i+16\sum_jd_{ij}\le64a_i<63$,
together with \cref{b:eq:common-smallness}.
If $v_i<\kappa$, then
$u_i\ge1-\alpha-\kappa-2t^2$. Since $\Pi_i\ge1-\alpha$ and
$0<a_i\le1$, the left-hand side satisfies
\[
 \Pi_i u_i(a_i-v_i)
 \ge(1-\alpha)(1-\alpha-\kappa-2t^2)(a_i-\kappa)
 \ge a_i-2\alpha-2\kappa-4t^2.
\]
All three factors in the first product are positive by
\cref{b:eq:common-smallness}. Now $2\alpha+2\kappa=5\delta/8$, and
$4t^2+64Nt\le\delta/8$ by \cref{b:eq:common-smallness}.
This contradicts \cref{b:eq:force-v}.
Thus $v_i\ge\kappa$ and $V_i$ mixes in both cases.

Its lower constraint also gives $T_i\ge\tau_i/(1+\varepsilon_{\mathrm{tail}})$:
the second-to-first coefficient ratio of
$(\omega_{U_i},\omega_{Z_i},J_i)$ is at most $T_i$.
Together with the upper bound, this proves
$|T_i/\tau_i-1|\le9\kappa^{-1}t^2$.

\step{Forcing the Product Constraint.}
The total-weight normalization gives a stronger product bound than
the separate lower bounds alone. Indeed,
$h_i:=u_i+v_i=\frac{H_i}{s_i}-\frac{T_i}{s_i} \ge1-\alpha-2t^2\ge\frac{3}{4}$.
Since $u_i,v_i\ge\kappa$, we have
$u_iv_i\ge\kappa(h_i-\kappa)\ge\kappa/2$.
This bound keeps the compensation and propagation constants linear
in $\kappa^{-1}$.
If $p_{Z_i}=0$, then $e_1(J_i)=T_i\ge\tau_i/2$.
Its upper constraint would give
$e_3(\omega_{U_i},\omega_{V_i},J_i)\le2K_i$, whereas
\[
 e_3(\omega_{U_i},\omega_{V_i},J_i)
 \ge \omega_{U_i}\omega_{V_i}e_1(J_i)
 \ge\frac{\kappa s_i^2\tau_i}{4}>2K_i.
\]
The last inequality uses $K_i=s_i^2\tau_i t$ and
\cref{b:eq:common-smallness}.
Thus $Z_i$ mixes. Its two-sided constraint and the same prefix
factorization give
$\left|\frac{e_3(\omega_{U_i},\omega_{V_i},J_i)}{K_i}-1\right|\le192Nt^3$.
Expand the coefficient as
\[
 e_3(\omega_{U_i},\omega_{V_i},J_i)
 =\omega_{U_i}\omega_{V_i}e_1(J_i)
 +(\omega_{U_i}+\omega_{V_i})e_2(J_i)+e_3(J_i).
\]
Since $\omega_{U_i}+\omega_{V_i}\le2s_i$ and $e_1(J_i)\le2\tau_i$,
the last two terms sum to at most $6s_i\tau_i^2=6tK_i$.
Since $Nt^2\le1$, the total product error is at most $200t$.
Consequently,
\begin{equation*}
 \begin{aligned}
 \left|\frac{\omega_{U_i}\omega_{V_i}e_1(J_i)}{K_i}-1\right|
       &\le200t,\\
 e_1(J_i)&=\frac{\chi_{i+1}}{u_iv_i}(1+\theta_i),
       \qquad |\theta_i|\le200t.
 \end{aligned}
\end{equation*}
The bounds $u_i+v_i\le2$ and $u_iv_i\ge\kappa/2$ imply
$\chi_{i+1}/2\le e_1(J_i)\le4\kappa^{-1}\chi_{i+1}$.
Using $\omega_{Z_i}=T_i-e_1(J_i)$ and $t\le1$, we obtain
\[
 \left|\frac{\omega_{Z_i}}{\tau_i}-1\right|
 \le9\kappa^{-1}t^2+4\kappa^{-1}t
 \le13\kappa^{-1}t\le\Lambda t.
\]
In particular, $\tau_i/2\le \omega_{Z_i}\le2\tau_i$.

\step{Bounding the Next Normalization and Prefix Increment.}
Suppose $i<q$.
The prefix through $Z_i$, together with $B$, has minimum weight
at least $\tau_i/2$, and the remaining total weight is at most
$4\kappa^{-1}\chi_{i+1}$.
Thus \cref{b:eq:dominance} applies with relative error at most
$\Omega=16\kappa^{-1}Nt$, also after deleting $C_{i+1}$.
If $p_{C_{i+1}}=0$, its coefficient ratio would be at least
$\chi_{i+1}/[2(1+\Omega)]\ge\chi_{i+1}/4>s_{i+1}'$.
This contradicts its upper constraint.
Hence $C_{i+1}$ mixes, and its two-sided constraint gives
$\frac{1}{1+\Omega} \le\frac{H_{i+1}}{s_{i+1}} \le(1+\xi)(1+\Omega)$.
This is a fresh estimate from the $C_{i+1}$ row. Its error comes from
factoring the current prefix; the preceding normalization is used
only to establish coarse weight bounds.
In particular,
\begin{equation*}
 \left|\frac{H_{i+1}}{s_{i+1}}-1\right|\le32\kappa^{-1}Nt,\qquad
 \left|\frac{\omega_{C_{i+1}}}{e_1(J_i)}-1\right|\le4t^3.
\end{equation*}
The first uses $\xi\le\Omega/2$ and $\Omega\le1/2$, so the
deviation is at most $2\Omega$. The second uses
$\omega_{C_{i+1}}=e_1(J_i)-H_{i+1}$,
$H_{i+1}\le2s_{i+1}$, and $e_1(J_i)\ge\chi_{i+1}/2$.
The upper bound $\omega_{C_{i+1}}\le4\kappa^{-1}\chi_{i+1}$ follows
from $\omega_{C_{i+1}}\le e_1(J_i)$. For the lower bound,
$\omega_{C_{i+1}}\ge\chi_{i+1}(1/2-2t^3)\ge\chi_{i+1}/16$.
Thus both coarse bounds needed by the next block hold.

The exact prefix identity is
\begin{equation*}
 \frac{\Pi_{i+1}}{\Pi_i}
 =\frac{\omega_{Z_i}}{\tau_i}
  \frac{\omega_{U_i}\omega_{V_i}e_1(J_i)}{K_i}
  \frac{\omega_{C_{i+1}}}{e_1(J_i)}.
\end{equation*}
The respective errors are bounded by $13\kappa^{-1}t$,
$200t$, and $4t^3$.
The last factor is at most one, and the middle factor is at most $3/2$.
Therefore,
$\left|\frac{\Pi_{i+1}}{\Pi_i}-1\right| \le(26\kappa^{-1}+204)t\le\Lambda t$.
For the last block, $J_q=(\omega_A)$, so $e_1(J_q)=\omega_A$ exactly.
Player $A$ mixes by \cref{b:lem:auxiliary}.
The corresponding identity contains only the first two factors:
$\frac{R}{\Pi_q} =\frac{\omega_{Z_q}}{\tau_q} \frac{\omega_{U_q}\omega_{V_q}e_1(J_q)}{K_q}$.
The same bounds give $|R/\Pi_q-1|\le\Lambda t$.
Every estimate used only the coarse hypotheses and
\cref{b:lem:cap}; the constants are the same for all blocks.
\end{proof}

\paragraph{Precision and Rational Payoffs.}

To obtain unscaled tolerance $\eta=\zeta/8$, apply
\Cref{lem:ane-wsne} at normalized well-supported tolerance
$\zeta/(16L)=S^3/128$. The required ANE accuracy is
$(S^3/128)^2/(32N)=S^6/(2^{19}N)=\varepsilon_0$.

The constants used above are $\Lambda=2^{14}$, $\rho=2^{-28}$,
and $A_{\mathrm{dec}}=2^{19}$.
For $t=2^{-36}/N$ and every $N\ge7$, we have
$Nt=2^{-36}<\rho$ and
$A_{\mathrm{dec}}Nt=2^{-17}=\eps_{\mathrm{src}}$.
They also imply $t\le2^{-10}$, so the preceding support and tolerance
estimates hold. Thus \cref{b:lem:cap,b:lem:block-step,b:lem:normalization}
apply, and \cref{b:eq:residual-L,b:eq:residual-Q} meet the source tolerance.
The decoded vector therefore solves the given instance of \OQE.
For the hard instances produced by \cref{b:lem:source}, its Boolean
decoding also recovers a valid \PureCircuit{} assignment.

Set $C=54$. Here $d_q=14q^2+11q+1\le7N^2/8$.
For $N\ge16$, the bound $2^{36}\le N^9$ gives $t^{-1}\le N^{10}$,
and hence $\eps_0^{-1}=2^{19}N7^6t^{-6d_q}
\le N^{26+60d_q}\le N^{53N^2}$.
For the remaining populations $N\in\{7,11,15\}$, direct substitution gives
$\eps_0^{-1}\le N^{54N^2}$ as well,
so $\eps_N=N^{-CN^2}\le\eps_0$.
Thus every $\eps_N$-ANE is sufficiently accurate for the reduction.

The hard source has linear size in the \PureCircuit{} instance, and the game
has $N=4q+3$ players and $O(N^2)$ payoff entries. The polynomial-time
conversion in \Cref{lem:ane-wsne} produces rational probabilities of
polynomial encoding length. Evaluating their weights and the ratios defining
$x_i$ also takes polynomial time; Boolean decoding is immediate.
Thus the construction and decoder form a polynomial-time reduction,
and the support proof covers every target approximate equilibrium.

Every listed count is at most $N-1$. The exponents in individual
reference weights are $O(N)$, and those in reference products are
$O(N^2)$. More precisely,
$S=\frac17t^{d_q}$ and $d_q=14q^2+11q+1$.
Apart from the fixed factor $\beta$, each denominator in
\cref{b:tab:payoffs} is a product of a subset of the reference weights,
and hence is at least $S$.
The source coefficient bounds give total absolute corrections
of at most $6/(\beta S)$ in any encoding entry. Therefore
$|f_\ell(k)|\le\zeta+\frac{6}{\beta S}<\frac8S=L$.
The auxiliary differences have absolute value at most $N<L$,
since $S\le t/7<1/N$. Thus all payoffs lie in $[0,1]$.
If each source coefficient has at most $B$ bits, then
$\log(1/t)=O(\log N)$ implies that each rational payoff entry has
$O(B+N^2\log N)$ bits. In the hard instances produced by
\cref{b:lem:source}, $B=O(1)$.
The whole game can therefore be constructed in polynomial time.

\section{Details of the Three-Strategy Reduction}\label{app:three-strategy-details}

We construct the rational interpolation nodes, prove the Vandermonde gap,
and verify the separation and error estimates used in
\Cref{sec:three-strategy-overview}. These arguments supplement the complete gate
construction and support proof in the main text.

\subsection{Calibration Nodes and Global Localization}\label{t:sec:fekete}

We begin with the one-dimensional Fekete facts used by both calibration and the sparse multiplication arrangement. On an interval, these maximizers are the shifted Legendre--Gauss--Lobatto nodes: the endpoints together with the zeros of the derivative of a Legendre polynomial. We derive their central-spacing bounds from the differential equation for the corresponding Jacobi polynomial, and then quantify the Vandermonde gap that excludes distant signal profiles.

\paragraph{Fekete Points on an Interval.}

For ordered $0\le y_1<\cdots<y_N\le1$, write
\[
 V(y)=\prod_{1\le i<j\le N}(y_j-y_i).
\]

\begin{lemma}[Fekete Characterization]\label{t:lem:fekete-characterization}
The unique ordered maximizer
$0=x_1<x_2<\cdots<x_N=1$
of $V$ on $[0,1]$ consists of the endpoints together with the shifted zeros of $P_{N-1}'$, where $P_{N-1}$ is the Legendre polynomial of degree $N-1$ on $[-1,1]$.
\end{lemma}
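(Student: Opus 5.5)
We prove \Cref{t:lem:fekete-characterization} in three steps: existence with the endpoints included, the stationarity equations turned into the Legendre differential equation, and uniqueness from strict concavity of $\log V$.

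\emph{Existence and endpoints.} The set $\{0\le y_1\le\cdots\le y_N\le1\}$ is compact and $V$ is continuous, so $V$ attains its maximum there. Since $V>0$ at any strictly ordered point, a maximizer has distinct coordinates. If $y_1>0$, moving $y_1$ to $0$ strictly increases every factor $y_j-y_1$, so $y_1=0$. By the same argument, $y_N=1$.

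\emph{Stationarity.} For each interior index $1<i<N$, the partial derivative of $\log V$ in $y_i$ vanishes:
\[
 \sum_{j\ne i}\frac{1}{y_i-y_j}=0 .
\]
Let $p(z)=\prod_j(z-y_j)$. Then $\sum_{j\ne i}1/(y_i-y_j)=p''(y_i)/(2p'(y_i))$, so $p''(y_i)=0$ for every interior $i$. Write $p(z)=z(z-1)q(z)$ with $\deg q=N-2$, whose roots are the interior nodes. Computing $p''$ gives $p''=z(z-1)q''+2(2z-1)q'+2q$, so at each root of $q$ we get $z(z-1)q''+2(2z-1)q'=0$. Consequently the polynomial $z(z-1)q''+2(2z-1)q'$, which has degree at most $N-2$, vanishes on the $N-2$ roots of $q$. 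It is therefore a constant multiple of $q$, say $\lambda q$. Comparing leading coefficients gives
\[
 \lambda=(N-2)(N-3)+4(N-2)=(N-2)(N+1).
\]

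\emph{Identifying the polynomial.} Substitute $x=2z-1$. The equation becomes $(1-x^2)q''-4xq'+(N-2)(N+1)q=0$, which is the Jacobi equation with $\alpha=\beta=1$ and degree $N-2$. This equation has a unique monic polynomial solution of that degree, since the eigenvalues $m(m+3)$ are distinct. Differentiating the Legendre equation shows that $P_{N-1}'$ satisfies this same equation, because $P'$ solves the Jacobi $(1,1)$ equation of degree $N-2$. Hence $q$ is proportional to $P_{N-1}'(2z-1)$, and the interior nodes are the shifted zeros of $P_{N-1}'$.

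\emph{Uniqueness.} The function $\log V$ is strictly concave on the open ordered region with $y_1=0$ and $y_N=1$ fixed, because each term $\log(y_j-y_i)$ is concave and their sum is strictly concave in the interior coordinates. Therefore there is at most one critical point, and it equals the maximizer.

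The step I expect to be the main obstacle is the identification of $z(z-1)q''+2(2z-1)q'$ with $\lambda q$: it needs the degree count and the requirement that $q$ have simple roots. Simplicity holds because the $y_i$ are distinct. The uniqueness argument also needs a brief justification of strict concavity, which can be done via the Hessian being negative definite.
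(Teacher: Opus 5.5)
Your proof is correct and complete, but it does not follow the paper: the paper gives no proof of \Cref{t:lem:fekete-characterization}. It cites Fej\'er's classical characterization directly and refers to Szeg\H{o} for the Jacobi-polynomial facts. You instead give the classical Stieltjes-type argument. Stationarity gives $p''(y_i)=0$ at the interior nodes, and the degree count turns this into the Jacobi $(1,1)$ equation of degree $N-2$. Differentiating the Legendre equation then identifies the solution with $P_{N-1}'$. Each step checks out, including $p''(y_i)/(2p'(y_i))=\sum_{j\ne i}1/(y_i-y_j)$, the value $\lambda=(N-2)(N+1)$, and the change of variables $x=2z-1$. Your route makes the lemma self-contained. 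Its by-products are also exactly what the paper relies on later: the proof of \Cref{t:lem:strong-vandermonde} uses the vanishing of the interior partial derivatives and the signs of the endpoint derivatives at the maximizer. Your stationarity and endpoint-moving steps supply both, since $\partial_{y_1}\log V<0<\partial_{y_N}\log V$ at every strictly ordered point. The citation buys only brevity. Two small remarks follow.
\begin{itemize}
\item Your uniqueness paragraph is redundant. Every maximizer has $y_1=0$ and $y_N=1$, and its interior nodes are the zeros of a polynomial that the differential equation determines up to a scalar. So uniqueness already follows without concavity. Your Hessian check is the same computation the paper performs in \Cref{t:lem:strong-vandermonde}.
\item The phrase ``the eigenvalues $m(m+3)$ are distinct'' is cleanest stated as follows. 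The operator $(1-x^2)D^2-4xD$ is triangular on monomials with distinct diagonal entries $-k(k+3)$. Hence its eigenspace for $-(N-2)(N+1)$ among polynomials of degree at most $N-2$ is one-dimensional.
\end{itemize}
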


This is the classical characterization of Fej\'er~\cite{fejer32}; we use it directly. Standard properties of the corresponding Jacobi polynomials are collected by Szeg\H{o}~\cite{Szego75}.

The extremal property itself supplies the separation needed for rational
approximation, including at the endpoints.

\begin{lemma}[Polynomial Separation of Fekete Points]\label{t:lem:fekete-separation}
The minimum gap $\sigma=\min_{i<N}(x_{i+1}-x_i)$ satisfies
$\sigma\ge\underline\sigma:=1/[2(N-1)^2]$.
\end{lemma}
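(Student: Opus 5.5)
We plan to use only the extremal property of the maximizer $x$. Fix an index $i<N$ and set $\sigma_i=x_{i+1}-x_i$. The idea is to compare $x$ with a competitor obtained by moving a single node, and to let the stationarity condition do the work. Since $x$ maximizes $\log V$ and the interior nodes are free, each interior node $x_k$ with $1<k<N$ satisfies
\[
 \sum_{j\ne k}\frac{1}{x_k-x_j}=0 .
\]
At the endpoint nodes, the one-sided condition says the derivative points outward. This is the classical electrostatic equilibrium, and it is the only input we use. We do not need the Legendre characterization of \Cref{t:lem:fekete-characterization}, although it fixes the endpoints at $0$ and $1$.

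\textbf{Splitting the equilibrium sum.} Take the pair $x_i,x_{i+1}$ and suppose first that $x_{i+1}$ is an interior node. In the equilibrium equation for $x_{i+1}$, the term from $x_i$ equals $1/\sigma_i$. All other terms from nodes on the left are positive. Every term from a node on the right is negative, with absolute value at most $1/(x_j-x_{i+1})$. This gives
\[
 \frac{1}{\sigma_i}\le\sum_{j>i+1}\frac{1}{x_j-x_{i+1}} .
\]
The right-hand side involves only the gaps to the right. In the same way, the equation for $x_i$ bounds $1/\sigma_i$ by a sum over gaps to the left. We would bound the right-hand sum crudely: it has at most $N-1$ terms, and each is at most the reciprocal of a gap. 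That crude bound becomes circular, because it bounds one gap in terms of another.

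\textbf{Breaking the circularity (main obstacle).} The first thing we would try is to work at the globally smallest gap $\sigma$. Suppose the minimum is attained at $(x_i,x_{i+1})$. Every other distance $x_j-x_{i+1}$ is then at least $(j-i-1)\sigma$, but that lower bound only yields $1/\sigma\le H_{N}/\sigma$, which is useless. Instead we would use a second-order (concavity) argument. Because $x$ is a maximizer, the Hessian of $\log V$ is negative semidefinite along the direction that pulls $x_i$ and $x_{i+1}$ apart symmetrically. Working through this condition only returns $-2/\sigma^2\le0$, which is trivially true, so it does not help either.

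\textbf{Fallback via Legendre zeros.} If the variational route stalls, we would rely on \Cref{t:lem:fekete-characterization} together with Szeg\H{o}'s bounds on zeros of Jacobi polynomials. The interior nodes are shifted zeros of $P_{N-1}'$, that is, of the Jacobi polynomial $P^{(1,1)}_{N-2}$. Write them as $\cos\phi_k$. The standard interlacing and angle-spacing estimates give consecutive angles $\phi_k$ separated by at least $c/N$. Near the endpoints the gap in $x=(1+\cos\phi)/2$ is of order $\phi\,\Delta\phi\gtrsim 1/N^2$. The constant $1/[2(N-1)^2]$ would come from the explicit bound $1-\cos\phi_{\mathrm{first}}\ge 1/(N-1)^2$ on the extreme zero, together with a mean-value estimate in the interior. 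Verifying that this explicit constant survives at both endpoints and in the bulk is the step we expect to be hardest.
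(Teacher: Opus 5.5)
Your proposal does not yet prove the lemma. You abandon both variational attempts yourself, and rightly so. Stationarity alone only bounds one reciprocal gap by a sum of other reciprocal gaps. Since $\log V$ is concave on ordered configurations, the second-order condition holds automatically and carries no information. The fallback then leaves all of its quantitative content unverified. The angular spacing $c/N$, the bound $1-\cos\phi_{\mathrm{first}}\ge1/(N-1)^2$, and the bulk mean-value estimate with the exact constant $1/[2(N-1)^2]$ are asserted, not derived. You would also have to check that the explicit zero bounds you have in mind apply to $P^{(1,1)}_{N-2}$ all the way to the endpoints. The route can be finished, for instance through the Pr\"ufer comparison of \Cref{t:lem:sturm}. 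Its lower spacing bound uses only $q\le b^2$, and $q(\vartheta)=\nu^2-3/(4\sin^2\vartheta)\le\nu^2$ holds on all of $(0,\pi)$. With a limiting argument at $\vartheta=0$, this keeps the angles at least $\pi/(N-\tfrac12)$ from the endpoints and from each other. None of this is in your proposal, however.

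The missing idea is already in your opening sentence: compare $x$ with the competitor that moves a single node, but do it globally rather than infinitesimally. Replace $x_i$ by an arbitrary $z\in[0,1]$, re-sorting if $z$ passes other nodes. This multiplies $|V|$ by exactly $|\ell_i^x(z)|$, so maximality of $x$ gives $\sup_{z\in[0,1]}|\ell_i^x(z)|\le1$. Since $\ell_i^x$ has degree $N-1$, Markov's inequality on an interval of length one gives $\sup_{[0,1]}|(\ell_i^x)'|\le2(N-1)^2$. Because $\ell_i^x(x_i)=1$ and $\ell_i^x(x_{i+1})=0$, the mean value theorem yields $1\le2(N-1)^2(x_{i+1}-x_i)$, which is the lemma. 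This is the paper's two-line argument. It uses no information about Legendre zeros or the position of the endpoints. It also shows where the constant $1/[2(N-1)^2]$ comes from: it is Markov's constant for degree $N-1$ on $[0,1]$, not a consequence of zero asymptotics.
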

\begin{proof}
Replacing $x_i$ by $z\in[0,1]$ multiplies the Vandermonde product in
absolute value by $|\ell_i^x(z)|$. Optimality therefore gives
$\|\ell_i^x\|_\infty\le1$.
Markov's polynomial inequality~\cite{shadrin04} on $[0,1]$
gives $\| (\ell_i^x)'\|_\infty\le2(N-1)^2$.
Since $\ell_i^x(x_i)=1$ and $\ell_i^x(x_j)=0$ for $j\ne i$,
the mean-value theorem proves the claim.
\end{proof}

With this explicit gap bound, standard real-root isolation~\cite{BPR06}
computes the nodes to any polynomially small error in polynomial time.

\paragraph{Central Spacing.}

The multiplication construction uses only a fixed central band.  In that band, the following self-contained comparison shows that consecutive Fekete points have spacing $\Theta(1/N)$.

\begin{lemma}[Prüfer Comparison and Zero Count]\label{t:lem:sturm}
Let $v$ be a nonzero solution of
$v''+q(t)v=0$
on an interval $I$, where $0<a^2\le q(t)\le b^2$.  Then:
\begin{enumerate}[label=\textup{(\roman*)},leftmargin=2.6em]
\item the distance between two consecutive zeros whose intervening interval is contained in $I$ lies in $[\pi/b,\pi/a]$;
\item every subinterval of $I$ of length $L$ contains at least $aL/\pi-1$ zeros, where a negative lower bound is interpreted as vacuous.
\end{enumerate}
\end{lemma}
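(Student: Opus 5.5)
The plan is the Prüfer-angle argument. Write $v=\rho\sin\varphi$ and $v'=\rho\cos\varphi$ with $\rho>0$; this is possible because $v$ is nonzero, so $v$ and $v'$ never vanish together. Zeros of $v$ are then exactly the points where $\varphi\in\pi\mathbb Z$. Differentiating gives $\varphi'=\cos^2\varphi+q(t)\sin^2\varphi$, which does not separate the two bounds $a$ and $b$ cleanly. I will therefore use the scaled angle instead. Set $v=r\sin\theta/\sqrt{c}$ and $v'=r\cos\theta$ for a constant $c>0$. This gives
\[
 \theta'=c\cos^2\theta+\frac{q(t)}{c}\sin^2\theta .
\]
Again the zeros of $v$ are the points where $\theta\in\pi\mathbb Z$. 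At such a point $\theta'=c>0$, so $\theta$ crosses each multiple of $\pi$ only upward and exactly once.

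For part (i), take $c=b$. Then $\theta'\le b\cos^2\theta+b\sin^2\theta=b$. Between two consecutive zeros $\theta$ increases by exactly $\pi$, so the distance between them is at least $\pi/b$. For the upper bound, take $c=a$ in the same representation, a different angle with the same zero set. Now $\theta'\ge a$, so $\theta$ increases by at least $aL$ over any interval of length $L$. An increase of $\pi$ therefore happens within length $\pi/a$, which shows that consecutive zeros are at most $\pi/a$ apart.

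For part (ii), use the $c=a$ angle on a subinterval $[\alpha,\alpha+L]\subseteq I$. There $\theta$ increases by at least $aL$, so it passes at least $\lfloor aL/\pi\rfloor\ge aL/\pi-1$ multiples of $\pi$, each passed only once. Each such crossing is a zero of $v$, which gives the count.

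The only delicate point is bookkeeping. The crossings must be counted correctly at the endpoints of the subinterval, and the monotone upward crossing must be used to rule out double counting. Both follow from the fact that $\theta'=c>0$ whenever $\sin\theta=0$.
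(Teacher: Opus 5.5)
Your argument is the paper's proof. A Prüfer phase with scale $b$ gives $\psi'\le b$, hence the lower gap bound $\pi/b$. A phase with scale $a$ gives $\psi'\ge a$, hence the upper gap bound $\pi/a$ and the zero count; your $\lfloor aL/\pi\rfloor$ is slightly sharper than the stated $aL/\pi-1$.

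There is one algebraic slip to fix. The substitution $v=r\sin\theta/\sqrt{c}$, $v'=r\cos\theta$ actually gives $\theta'=\sqrt{c}\cos^2\theta+(q/\sqrt{c})\sin^2\theta$, not your displayed equation. So taking $c=b$ would not give $\theta'\le b$. Use $v=r\sin\theta/c$ instead, which is equivalent to the paper's $v=r\sin\psi$, $v'=c\,r\cos\psi$. Alternatively, keep your scaling and take $c=a^2$ and $c=b^2$. With either correction, every remaining step goes through as you wrote it.
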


\begin{proof}
Because a nonzero solution and its derivative never vanish simultaneously, for any fixed $\kappa>0$ there are continuous functions $r_\kappa(t)>0$ and $\psi_\kappa(t)$ such that
\[
 v(t)=r_\kappa(t)\sin\psi_\kappa(t),
 \qquad
 v'(t)=\kappa r_\kappa(t)\cos\psi_\kappa(t).
\]
Differentiating these identities and using $v''=-qv$ gives the Prüfer equation
\begin{equation*}
 \psi_\kappa'(t)
 =\kappa\cos^2\psi_\kappa(t)
  +\frac{q(t)}{\kappa}\sin^2\psi_\kappa(t).
\end{equation*}
For $\kappa=a$, the lower bound $q\ge a^2$ implies $\psi_a'\ge a$.  For $\kappa=b$, the upper bound $q\le b^2$ implies $\psi_b'\le b$.  A zero of $v$ is exactly a point where either phase belongs to $\pi\mathbb Z$.  Between consecutive zeros the phase increases by exactly $\pi$.  Integrating the two derivative inequalities proves part~\textup{(i)}.

On a subinterval of length $L$, the $a$-phase increases by at least $aL$.  An interval of phase length $D$ contains at least $D/\pi-1$ multiples of $\pi$ after allowing for the two partial end intervals.  This proves part~\textup{(ii)}.
\end{proof}

\begin{lemma}[Central Spacing of Legendre--Gauss--Lobatto Nodes]\label{t:lem:central-spacing}
For every $N\ge N_0:=6592$:
\begin{enumerate}[label=\textup{(\roman*)},leftmargin=2.6em]
\item at least $N/7$ exact Fekete points lie in $[1/4,3/4]$;
\item if two consecutive exact Fekete points lie in $[1/4,3/4]$, then
$\frac1N\le x_{i+1}-x_i\le\frac4N$.
\end{enumerate}
Thus one may take $a_0=1$, $A_0=4$, and $c_0=1/7$.
\end{lemma}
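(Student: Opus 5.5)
The plan is to transport the problem to $[-1,1]$ and apply \Cref{t:lem:sturm} to a Liouville-normal form of the Legendre equation. By \Cref{t:lem:fekete-characterization}, the interior Fekete points are $x=(1+y)/2$, where $y$ ranges over the zeros of $P_{N-1}'$. The band $[1/4,3/4]$ corresponds to $y\in[-1/2,1/2]$, and distances in $x$ are half the distances in $y$. Write $n=N-1$ and $u=P_n'$. Differentiating the Legendre equation $((1-y^2)P_n')'+n(n+1)P_n=0$ gives the Jacobi equation
\[
 (1-y^2)u''-4yu'+(n(n+1)-2)u=0,
\]
which is the $(1,1)$ Jacobi case. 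I will not use the variable $y$ directly. Instead I substitute $y=\cos\phi$ and $u=(\sin\phi)^{-3/2}v$. The standard computation, which is also recorded by Szeg\H{o}, yields
\[
 v''+q(\phi)v=0,\qquad
 q(\phi)=\Bigl(n+\tfrac12\Bigr)^2+\frac{1-(3/2)^2\cdot 0-\ldots}{\,\cdot\,},
\]
and I will verify the exact form of $q$ directly. For Jacobi parameters $\alpha=\beta=1$ it should read $q(\phi)=(n+\tfrac12)^2+\frac{1/4-\alpha^2}{4\sin^2(\phi/2)}+\frac{1/4-\beta^2}{4\cos^2(\phi/2)}$, which here equals $(n+\tfrac12)^2-\frac{3}{4\sin^2\phi}$. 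The zeros of $v$ in $(0,\pi)$ are exactly the zeros of $u$.

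For $y\in[-1/2,1/2]$ we have $\phi\in[\pi/3,2\pi/3]$, so $\sin^2\phi\ge3/4$. This gives $a^2:=(n+\tfrac12)^2-1\le q\le(n+\tfrac12)^2=:b^2$ on the interval $I=[\pi/3,2\pi/3]$. For part (ii), take two consecutive zeros $\phi_i,\phi_{i+1}$ in $I$. \Cref{t:lem:sturm}(i) gives $\pi/b\le\phi_{i+1}-\phi_i\le\pi/a$. To return to $y=\cos\phi$, note that $|\sin|\in[\sqrt3/2,1]$ on $I$, so by the mean value theorem
\[
 \tfrac{\sqrt3}{2}\,\tfrac{\pi}{b}\le|y_{i+1}-y_i|\le\tfrac{\pi}{a}.
\]
Halving this for $x$ and using $b=n+1/2<N$, $a\ge n-1\ge N/1.01$ for large $N$ gives $x_{i+1}-x_i\ge\frac{\sqrt3\pi}{4N}\ge\frac1N$ and $x_{i+1}-x_i\le\frac{\pi}{2a}\le\frac{4}{N}$. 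I still need to check one subtlety. A Fekete pair that is consecutive in $x$ and lies in the band corresponds to consecutive zeros of $v$ with the intervening $\phi$-interval inside $I$. This holds because the map $\phi\mapsto\cos\phi$ is monotone and the band is an interval.

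For part (i), apply \Cref{t:lem:sturm}(ii) on $I$, whose length is $L=\pi/3$. It yields at least $a/3-1$ zeros, so at least $(n-1)/3-1\ge N/7$ points for $N\ge N_0$. This count is the crude part; the constant $N_0=6592$ presumably comes from a cruder route in the paper, and my bound holds much earlier.

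The main obstacle is getting the Liouville transformation exactly right, namely the sign and constant of the $\sin^{-2}\phi$ term. I will derive it by a direct computation. Writing $u(\cos\phi)$ gives $u''_{\phi\phi}+\cot\phi\cdot(\ldots)$. The Jacobi equation in $\phi$ becomes $\ddot u+3\cot\phi\,\dot u+(n(n+1)-2)u=0$. Removing the first-order term with $u=(\sin\phi)^{-3/2}v$ gives
\[
 q=n(n+1)-2-\tfrac32\cdot\tfrac{d}{d\phi}\cot\phi-\tfrac94\cot^2\phi .
\]
Simplifying this with $\frac{d}{d\phi}\cot\phi=-\csc^2\phi$ and $\cot^2=\csc^2-1$ gives $q=(n+\tfrac12)^2-\tfrac34\csc^2\phi$. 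This confirms the bounds above, so only routine inequalities remain.
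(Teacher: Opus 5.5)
Your proposal is correct and follows essentially the same route as the paper. Both use the Liouville normal form $v''+\bigl((N-\tfrac12)^2-\tfrac{3}{4\sin^2\phi}\bigr)v=0$ of the $(1,1)$ Jacobi equation, apply \Cref{t:lem:sturm} on $[\pi/3,2\pi/3]$, and transfer through $x=(1+\cos\phi)/2$ by the mean-value theorem. Your weaker upper bound $q\le(n+\tfrac12)^2$ and your estimate $a\ge N-2$, where the paper uses $a\ge N/2$, are harmless.

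Delete the garbled placeholder display for $q$, since your final direct computation already gives the correct formula. Also, $N_0$ is needed only for the paper's later arrangement bounds, not for this lemma, rather than coming from a cruder route.
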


\begin{proof}
The interior unshifted Legendre--Gauss--Lobatto (LGL) nodes are the zeros of $P_{N-1}'$, equivalently of the Jacobi polynomial $P_m^{(1,1)}$ with $m=N-2$; see Szeg\H{o}~\cite{Szego75}.  Put $t=\cos\vartheta$ and
$y(\vartheta)=P_m^{(1,1)}(\cos\vartheta)$.
The Jacobi differential equation becomes
$y''+3\cot\vartheta\,y'+m(m+3)y=0$.
After removing the first derivative by setting
$v(\vartheta)=(\sin\vartheta)^{3/2}y(\vartheta)$,
one obtains
\begin{equation*}
 v''+q(\vartheta)v=0,
 \qquad
 q(\vartheta)=\nu^2-\frac{3}{4\sin^2\vartheta},
 \qquad
 \nu=m+\frac32=N-\frac12.
\end{equation*}
On $I=[\pi/3,2\pi/3]$,
$\nu^2-1\le q(\vartheta)\le\nu^2-\frac34$.
Let
$a=\sqrt{\nu^2-1}$ and $b=\sqrt{\nu^2-3/4}$
For $N\ge2$, $a\ge N/2$ and $b<N$.  By \Cref{t:lem:sturm}, consecutive angular zeros contained in $I$ have gap at least $\pi/b>\pi/N$ and at most $\pi/a\le2\pi/N$.  Under the shift
$x(\vartheta)=\frac{1+\cos\vartheta}{2}$,
we have
$\frac{\sqrt3}{4} \le |x'(\vartheta)| \le\frac12$ for $\vartheta\in I$.
The mean-value theorem therefore gives
\[
 x_{i+1}-x_i
 \ge\frac{\sqrt3\pi}{4N}>\frac1N,
 \qquad
 x_{i+1}-x_i
 \le\frac\pi N<\frac4N.
\]

For the count, part~\textup{(ii)} of \Cref{t:lem:sturm} applied to the interval $I$, whose length is $\pi/3$, gives at least
$\frac{a}{3}-1\ge\frac N6-1$
zeros.  For $N\ge42$ this is at least $N/7$.  Zeros in $I$ correspond exactly to shifted points in $[1/4,3/4]$.  The larger displayed value of $N_0$ is chosen for later arrangement bounds and does not weaken the conclusion.
\end{proof}

\paragraph{Rational Near-Fekete Nodes.}

Fix a rational target localization radius $\rho_0>0$, instantiated in \Cref{t:sec:compiler-correctness}.  Choose rational ordered nodes
$0<\delta_1<\cdots<\delta_N<1$
with
\begin{equation}\label{t:eq:tag-approx}
 \max_i|\delta_i-x_i|\le\alpha,
\end{equation}
where
\begin{equation}\label{t:eq:alpha-conditions}
 \alpha\le
 \min\left\{
 \frac{\underline\sigma}{4N},
 \frac{\rho_0}{4N},
 \frac{\underline\sigma\rho_0^2}{128N^4},
 \frac{a_0}{20N},
 \frac1{100}
 \right\}.
\end{equation}
We choose a dyadic $\alpha$ between one quarter and one half of the displayed
minimum, and round the interior roots on a common dyadic grid of mesh
$\alpha/4$. Since $\rho_0=N^{-O(1)}$, this grid has polynomial
denominator and each node has $O(\log N)$ encoding length.
For definiteness choose the endpoint approximations
$\delta_1=\frac\alpha2$, $\delta_N=1-\frac\alpha2$,
and obtain the interior roots from isolating intervals of width at most
$\alpha/4$ before rounding. For every adjacent pair,
$\delta_{i+1}-\delta_i \ge (x_{i+1}-x_i)-2\alpha \ge\sigma/2 \ge\underline\sigma/2;$
the endpoint pairs satisfy the same bound because their approximation errors are at most $\alpha$.  Ordering also implies that every node is at least $\delta_1=\alpha/2$ and at most $\delta_N=1-\alpha/2$.  Thus every node and every node gap has polynomial encoding length, and
$g:=\min_{i<N}(\delta_{i+1}-\delta_i) \ge \frac{\underline\sigma}{2}$, $b:=\min_i\{\delta_i,1-\delta_i\}=\frac\alpha2=N^{-O(1)}$.

Define the root coordinate
$\zeta_i:=-\frac{1-\delta_i}{\delta_i}=1-\frac1{\delta_i}$.
Let $I_{\mathrm{cen}}$ be the consecutive set of indices for which $x_i\in[1/4,3/4]$.  For these indices, $|\delta_i-x_i|\le1/100$, so $\delta_i\in[1/5,4/5]$.  The map
$\zeta(\delta)=1-\frac1\delta$
has derivative $1/\delta^2\in[25/16,25]$ on this interval.

\begin{corollary}[Central Rational-Root Spacing]\label{t:cor:zeta-spacing}
For every $N\ge N_0$, the set $I_{\mathrm{cen}}$ has at least $N/7$ indices.  For every consecutive pair $i,i+1\in I_{\mathrm{cen}}$,
$\frac1N \le \zeta_{i+1}-\zeta_i \le \frac{103}{N}$.
Thus one may take $a_\zeta=1$, $A_\zeta=103$, and $c_\zeta=1/7$.
\end{corollary}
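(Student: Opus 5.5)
The plan is to pull both claims back from the exact Fekete points $x_i$ to the rational nodes $\delta_i$, using \Cref{t:lem:central-spacing} and the approximation bound \cref{t:eq:tag-approx}, and then push them forward through the map $\zeta(\delta)=1-1/\delta$.

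First, the count. By definition $I_{\mathrm{cen}}$ is the set of indices with $x_i\in[1/4,3/4]$, so \Cref{t:lem:central-spacing}(i) gives $|I_{\mathrm{cen}}|\ge N/7$ directly for $N\ge N_0$. This set is a consecutive block of indices because the $x_i$ are increasing. Consequently, for consecutive $i,i+1\in I_{\mathrm{cen}}$, the points $x_i,x_{i+1}$ are consecutive exact Fekete points in $[1/4,3/4]$. Part (ii) of the same lemma therefore gives $1/N\le x_{i+1}-x_i\le4/N$.

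Second, transfer the gap to the nodes. By \cref{t:eq:tag-approx}, $|\delta_{i+1}-\delta_i-(x_{i+1}-x_i)|\le2\alpha$. From \cref{t:eq:alpha-conditions} we have $\alpha\le a_0/(20N)=1/(20N)$, so
\[
\frac{9}{10N}\le\delta_{i+1}-\delta_i\le\frac{41}{10N}.
\]
Both nodes lie in $[1/5,4/5]$, as noted before the statement. By the mean value theorem, $\zeta_{i+1}-\zeta_i=(\delta_{i+1}-\delta_i)/\eta^2$ for some $\eta$ between the two nodes, and $1/\eta^2\in[25/16,25]$. This yields
\[
\zeta_{i+1}-\zeta_i\ge\frac{25}{16}\cdot\frac{9}{10N}=\frac{225}{160N}>\frac1N,
\qquad
\zeta_{i+1}-\zeta_i\le25\cdot\frac{41}{10N}=\frac{102.5}{N}\le\frac{103}{N}.
\]
Both inequalities hold as claimed.

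The only delicate point is the lower bound. The factor $25/16$ must compensate for the $2\alpha$ loss. That is why I use the explicit bound $\alpha\le1/(20N)$ rather than a cruder estimate: with it, $(1-1/10)\cdot25/16>1$, so the constants close with room to spare. Everything else is direct bookkeeping.
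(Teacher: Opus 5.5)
Your proposal is correct and follows the paper's proof essentially line for line. The count and the exact gaps $1/N\le x_{i+1}-x_i\le 4/N$ come from \Cref{t:lem:central-spacing}, and the perturbation bound $2\alpha\le 1/(10N)$ gives $9/(10N)\le\delta_{i+1}-\delta_i\le 41/(10N)$; the mean-value theorem with $1/\delta^2\in[25/16,25]$ then closes both bounds, exactly as the paper does.
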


\begin{proof}
The count is \Cref{t:lem:central-spacing}.  For two consecutive central indices, \Cref{t:lem:central-spacing,t:eq:alpha-conditions} gives
$\frac{9}{10N} \le \delta_{i+1}-\delta_i \le \frac{41}{10N}$,
because perturbing both endpoints changes a gap by at most $2\alpha\le1/(10N)$.  The mean-value theorem for $\zeta(\delta)$ now gives
$\zeta_{i+1}-\zeta_i \ge\frac{25}{16}\frac{9}{10N}>\frac1N$,
and
$\zeta_{i+1}-\zeta_i \le25\frac{41}{10N}<\frac{103}{N}$.
\end{proof}

\begin{lemma}[Calibration Products at Near-Fekete Nodes]\label{t:lem:fekete-products}
For even $N$ and the nodes above, $c_*\ge2^{1-2N}$.
\end{lemma}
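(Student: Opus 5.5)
The plan is to bound $|c_i|$ first at the exact Fekete points $x$ through an explicit Legendre formula, and then pass to the rational nodes $\delta$ by a perturbation argument. The even parity of $N$ is used in the Legendre step.

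\textbf{Step 1 (exact nodes, reduced to Legendre values).} Work on $[-1,1]$ with $n=N-1$. Let $\lambda_n=(2n)!/(2^n(n!)^2)$ be the leading coefficient of $P_n$. By \Cref{t:lem:fekete-characterization}, the monic node polynomial is
\[
 \omega(z)=\frac{(z^2-1)P_n'(z)}{n\lambda_n}.
\]
Hence $\prod_{j\ne i}|y_j-y_i|=|\omega'(y_i)|$ for the unshifted nodes $y_i$. At an interior node we have $P_n'(y_i)=0$. Legendre's equation $((1-z^2)P_n')'=-n(n+1)P_n$ then gives
\[
 |\omega'(y_i)|=\frac{(n+1)|P_n(y_i)|}{\lambda_n}.
\]
At the endpoints, $|\omega'(\pm1)|=2|P_n'(\pm1)|/(n\lambda_n)=(n+1)/\lambda_n$. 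Shifting to $[0,1]$ multiplies each product by $2^{-(N-1)}$. Therefore
\[
 |c_i(x)|=\frac{2^{-n}(n+1)\,m_i}{\lambda_n},
\]
where $m_i=|P_n(y_i)|$ at interior nodes and $m_i=1$ at the endpoints.

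\textbf{Step 2 (lower bound on $|P_n|$ at the critical points; main obstacle).} This needs a uniform lower bound on $|P_n|$ at the zeros of $P_n'$. I would use Sonin's function
\[
 f(z)=P_n(z)^2+\frac{(1-z^2)P_n'(z)^2}{n(n+1)}.
\]
The Legendre equation gives $f'(z)=2zP_n'(z)^2/(n(n+1))$. So $f$ is nonincreasing on $[-1,0]$ and nondecreasing on $[0,1]$, and at every critical point of $P_n$ we get $P_n(y_i)^2=f(y_i)\ge f(0)$. Since $N$ is even, $n$ is odd, so $P_n(0)=0$ and $|P_n'(0)|=n|P_{n-1}(0)|=n\binom{n-1}{(n-1)/2}2^{-(n-1)}$. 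This yields
\[
 f(0)=\frac{n}{n+1}\,P_{n-1}(0)^2 .
\]
(For odd $N$ the value $P_n(0)\ne0$ would be needed and the constants would change; this is where parity enters.)

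Combining Steps 1 and 2 with the standard central-binomial bounds $\binom{2k}{k}\ge 4^k/(2\sqrt k)$ and $\lambda_n\le 2^n$ gives
\[
 |c_i(x)|\ge 2^{-2n}(n+1)\sqrt{n/(n+1)}\,|P_{n-1}(0)| ,
\]
and $|P_{n-1}(0)|\ge 1/(2\sqrt n)$. This should give $|c_i(x)|\ge 2^{-2n}\cdot\tfrac12\cdot\sqrt{n}\ge 2^{2-2N}$, which leaves a factor-two margin. The delicate part is checking these explicit constants for all even $N$, including small ones (possibly by direct computation for the first few). I expect this bookkeeping to be the main obstacle, not the structural argument.

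\textbf{Step 3 (passing to the rational nodes).} Use \cref{t:eq:tag-approx} together with $\alpha\le\underline\sigma/(4N)\le\sigma/(4N)$. Each factor satisfies
\[
 |\delta_j-\delta_i|\ge|x_j-x_i|-2\alpha\ge|x_j-x_i|\Bigl(1-\tfrac{1}{2N}\Bigr),
\]
because $|x_j-x_i|\ge\sigma$. Multiplying over the $N-1$ factors gives
\[
 |c_i(\delta)|\ge\Bigl(1-\tfrac1{2N}\Bigr)^{N-1}|c_i(x)|\ge\tfrac12|c_i(x)|.
\]
Taking the minimum over $i$ then gives $c_*\ge 2^{1-2N}$.
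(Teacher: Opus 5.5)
Your proposal is correct and follows essentially the same route as the paper. Both arguments use the node polynomial $(z^2-1)P_{N-1}'(z)$ with the Legendre equation to express $|c_i(x)|$ through $|P_{N-1}|$ at the nodes, the same energy (Sonin) function bounded below by its value at $0$ (where odd degree is used), and the same $\bigl(1-\tfrac{1}{2N}\bigr)^{N-1}\ge\tfrac12$ rounding loss. The only difference is that the paper uses the cruder fact that a central binomial coefficient is at least its row average, giving $|P_{N-1}'(0)|\ge1$, hence $|P_{N-1}|\ge1/N$ at the nodes and $|c_i(x)|\ge4^{1-N}$ directly for every even $N$; this removes the small-$N$ constant checking you anticipated.
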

\begin{proof}
Put $m=N-1$. The standard Legendre identities~\cite{Szego75}
give the energy
\[
 \mathcal E(t)=P_m(t)^2+
 \frac{1-t^2}{m(m+1)}P_m'(t)^2,
 \qquad
 \mathcal E'(t)=\frac{2t}{m(m+1)}P_m'(t)^2.
\]
Thus $\mathcal E(t)\ge\mathcal E(0)$.
Since $m$ is odd, $|P_m'(0)|=m\binom{m-1}{(m-1)/2}/2^{m-1}\ge1$;
the central binomial coefficient is at least the average of its row.
At a Lobatto node, either $P_m'=0$ or $t=\pm1$, so
$|P_m(t)|\ge1/\sqrt{m(m+1)}\ge1/N$.

Let $a_m=2^{-m}\binom{2m}{m}\le2^m$ be the leading coefficient.
The monic polynomial with roots $x_i$ is
$(t^2-1)P_m'(t)/(m a_m2^N)$, with $t=2z-1$.
The Legendre equation therefore gives
$\prod_{j\ne i}|x_j-x_i|=N|P_m(2x_i-1)|/(2^m a_m)\ge4^{-m}$.
Rounding changes each gap by at most $2\alpha$. Since
$\alpha\le\underline\sigma/(4N)$,
the product loses at most the factor
$(1-2\alpha/\underline\sigma)^{N-1}\ge1/2$.
This proves $c_*\ge4^{1-N}/2$.
\end{proof}

\paragraph{The Vandermonde Gap.}

The Vandermonde maximum must be separated from configurations whose
coordinates move a prescribed distance. We first quantify the loss in
$\log V$, then transfer the estimate to the rational nodes used in the game.

\begin{lemma}[Strong Vandermonde Gap]\label{t:lem:strong-vandermonde}
Let $x$ be the Fekete configuration of $[0,1]$.  For every ordered $y\in[0,1]^N$,
\begin{equation}\label{t:eq:strong-vandermonde}
 \log V(y)
 \le
 \log V(x)-\frac12\|y-x\|_2^2,
\end{equation}
with $\log V(y)=-\infty$ when $y$ has a collision.
\end{lemma}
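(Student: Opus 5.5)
The plan is a strong-concavity argument for $f=\log V$ on the convex set $\Omega=\{y\in[0,1]^N: y_1\le\cdots\le y_N\}$, combined with first-order optimality of the Fekete configuration $x$ from \Cref{t:lem:fekete-characterization}. If $y$ has a collision, the inequality is trivial. Otherwise the segment $y(s)=x+s(y-x)$, $s\in[0,1]$, stays in $\Omega$ and is collision-free for $s\in(0,1]$, since every difference $y_j(s)-y_i(s)$ with $i<j$ is a convex combination of positive numbers. So $f$ is smooth along the segment, and Taylor's formula with integral remainder gives, for $v=y-x$,
\[
 f(y)=f(x)+\nabla f(x)\cdot v+\int_0^1(1-s)\,v^{\mathsf T}\nabla^2f(y(s))\,v\,ds .
\]

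For the remainder, the Hessian is $\nabla^2f=-\sum_{i<j}(e_j-e_i)(e_j-e_i)^{\mathsf T}/(y_j-y_i)^2$. Every difference lies in $(0,1]$, so $v^{\mathsf T}\nabla^2f\,v\le-\sum_{i<j}(v_j-v_i)^2=-N\sum_i(v_i-m)^2$, where $m$ is the mean of $v$. This is the complete-graph Laplacian, which has curvature $N$ off the constant direction and none along it. The remainder is therefore at most $-\tfrac N2\sum_i(v_i-m)^2$.

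For the linear term, the interior Fekete coordinates $x_2,\ldots,x_{N-1}$ are free, so stationarity gives $\partial_if(x)=0$ for $1<i<N$. Since $x_1=0$ and $x_N=1$, feasibility forces $v_1\ge0\ge v_N$. We have $\partial_1f(x)=-\sum_{j>1}1/x_j\le-(N-1)$ and symmetrically $\partial_Nf(x)\ge N-1$. Hence
\[
 \nabla f(x)\cdot v\le-(N-1)(v_1-v_N)=-(N-1)(|v_1|+|v_N|)\le-(N-1)(v_1^2+v_N^2),
\]
where the last step uses $|v_i|\le1$. This linear penalty is exactly what controls the translation direction that the Hessian misses.

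It remains to show
\[
 \tfrac N2\sum_i(v_i-m)^2+(N-1)(v_1^2+v_N^2)\ge\tfrac12\|v\|_2^2 .
\]
Write $\|v\|_2^2=\sum_i(v_i-m)^2+Nm^2$, so it suffices to bound $Nm^2$. Suppose $m\ge0$ (the case $m\le0$ is symmetric, using $v_1\ge0$). Since $v_N\le0$, we have $|v_N|\le|v_N-m|$ and $m\le|v_N-m|$. This gives
\[
 Nm^2\le N(v_N-m)^2\le N\sum_i(v_i-m)^2 ,
\]
so $\tfrac12\|v\|_2^2\le\tfrac{N+1}{2}\sum_i(v_i-m)^2$. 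This overshoots the available Hessian curvature $\tfrac N2$ by exactly $\tfrac12\sum_i(v_i-m)^2$.

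The main obstacle is recovering this missing $\tfrac12\sum_i(v_i-m)^2$. A crude $\tfrac14$-version follows at once from the centered bound, so the real work is getting the constant $\tfrac12$. I would try two refinements. First, split $Nm^2$ using both $(v_N-m)^2\ge m^2$ and $(N-1)v_N^2$, which gives the stronger mean bound $m^2\le2\bigl[(v_N-m)^2+v_N^2\bigr]$. Second, strengthen the Hessian bound: only the single pair $(1,N)$ can have difference near $1$, and this extra curvature would supply the missing constant. If the exact constant still resists, I would first prove the bound with $\tfrac14$ and check that the downstream use in \Cref{t:lem:near-fekete-gap} only needs some fixed positive constant, which $\rho_0^2/(32N)$ appears to absorb.
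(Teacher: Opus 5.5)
Your framework is the paper's: Taylor expansion along the segment, interior stationarity plus the endpoint signs $v_1\ge0\ge v_N$ to get $\nabla f(x)\cdot v\le0$, and the Hessian bound $-\sum_{i<j}(v_j-v_i)^2$. The steps you carry out are correct. However, the proposal does not establish the lemma as stated. You end with $\tfrac12\|v\|_2^2\le\tfrac{N+1}{2}\sum_i(v_i-m)^2$ against available curvature $\tfrac N2\sum_i(v_i-m)^2$, and leave the constant open. Your fallback proves only a $\tfrac14$ version, which is a different statement even if the later application could absorb it. The loss is in the step $(v_N-m)^2\le\sum_i(v_i-m)^2$, which ignores that the centered coordinates sum to zero. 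Neither of your two proposed refinements is needed.

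The missing step is that the endpoint sign conditions already control the mean. Suppose $\sum_iv_i\ge0$. Since $v_N\le0$, we have $0\le\sum_iv_i\le\sum_{i<N}v_i$. Cauchy--Schwarz over these $N-1$ coordinates then gives $(\sum_iv_i)^2\le(N-1)\|v\|_2^2$. The case $\sum_iv_i<0$ is symmetric, using $v_1\ge0$. Hence
$\sum_{i<j}(v_j-v_i)^2=N\|v\|_2^2-(\sum_iv_i)^2\ge\|v\|_2^2$.
So the Hessian remainder alone is at most $-\tfrac12\|v\|_2^2$. The linear term then only needs to be nonpositive, and your quantitative bound $-(N-1)(|v_1|+|v_N|)$ is unnecessary. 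This is exactly the paper's proof.

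In your centered notation, with $m\ge0$, the identity $\sum_{i<N}(v_i-m)=-(v_N-m)$ and Cauchy--Schwarz give $\sum_i(v_i-m)^2\ge\tfrac{N}{N-1}(v_N-m)^2$. Combined with $(v_N-m)^2\ge m^2$, this yields $Nm^2\le(N-1)\sum_i(v_i-m)^2$. That is precisely the inequality you needed to close the gap.
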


\begin{proof}
Assume $y$ is strictly ordered and put $v=y-x$.  By the optimality conditions from \Cref{t:lem:fekete-characterization}, every interior derivative of $\Phi=\log V$ vanishes at $x$, while the endpoint derivatives have signs opposite to $v_1\ge0$ and $v_N\le0$.  Hence
$\nabla\Phi(x)\cdot v\le0$.
Along $z(t)=x+tv$,
\[
 -\frac{d^2}{dt^2}\Phi(z(t))
 =\sum_{i<j}\frac{(v_j-v_i)^2}{(z_j(t)-z_i(t))^2}
 \ge\sum_{i<j}(v_j-v_i)^2.
\]
The identity
\[
 \sum_{i<j}(v_j-v_i)^2
 =N\|v\|_2^2-\Bigl(\sum_i v_i\Bigr)^2
\]
and the endpoint sign condition imply
$\Bigl(\sum_i v_i\Bigr)^2\le(N-1)\|v\|_2^2$.
Indeed, if $\sum_i v_i\ge0$, discard the nonpositive term $v_N$ and apply Cauchy--Schwarz to the remaining $N-1$ coordinates; the negative-sum case is symmetric using $v_1\ge0$.  Thus the second derivative is at most $-\|v\|_2^2$.  Integrating twice gives \cref{t:eq:strong-vandermonde}.  The collision case follows by continuity with $\log V=-\infty$.
\end{proof}

\begin{lemma}[Near-Fekete Product Gap]\label{t:lem:near-fekete-gap}
Assume \crefrange{t:eq:tag-approx}{t:eq:alpha-conditions} and let $w\in[0,1]^N$. If some $k$ satisfies
$|w_k-\delta_k|\ge\rho_0$,
then
\begin{equation}\label{t:eq:near-fekete-gap}
 \left|\prod_{i=1}^N\ell_i^\delta(w_i)\right|
 \le
 \exp\left(-\frac{\rho_0^2}{32N}\right).
\end{equation}
\end{lemma}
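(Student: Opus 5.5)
The plan is to relate each Lagrange factor to a Vandermonde ratio, so that \Cref{t:lem:strong-vandermonde} can be applied one coordinate at a time. The key identity is that replacing the node $\delta_i$ by $w_i$ while keeping the other nodes fixed multiplies the Vandermonde product in absolute value by exactly $|\ell_i^\delta(w_i)|$. Writing $y^{(i)}$ for $\delta$ with coordinate $i$ replaced by $w_i$, and using that $|V|$ is symmetric (so we may sort $y^{(i)}$ before evaluating), we get
\[
 \left|\prod_{i=1}^N\ell_i^\delta(w_i)\right|
 =\prod_{i=1}^N\frac{|V(y^{(i)})|}{V(\delta)}.
\]
Each factor will be bounded through the Fekete configuration $x$, via $|V(y^{(i)})|/V(\delta)=\bigl(|V(y^{(i)})|/V(x)\bigr)\bigl(V(x)/V(\delta)\bigr)$.

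\textbf{Step 1: comparing $V(\delta)$ with $V(x)$.} Each pairwise difference satisfies $|\delta_j-\delta_i|\ge|x_j-x_i|-2\alpha$ and $|x_j-x_i|\ge\underline\sigma$. Hence
\[
 \log\frac{V(x)}{V(\delta)}
 \le\sum_{i<j}\frac{4\alpha}{\underline\sigma}
 \le\frac{2N^2\alpha}{\underline\sigma}
 \le\frac{\rho_0^2}{64N^2},
\]
where the last step uses the bound $\alpha\le\underline\sigma\rho_0^2/(128N^4)$ from \cref{t:eq:alpha-conditions}. Over all $N$ factors this loses at most $\rho_0^2/(64N)$ in the exponent. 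For every $i\ne k$ I then use only the crude bound $|V(y^{(i)})|\le V(x)$, which is the maximality of $x$; \Cref{t:lem:strong-vandermonde} applied to the sorted $y^{(i)}$ gives the same thing.

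\textbf{Step 2: the displaced coordinate $k$ (the main obstacle).} I need a lower bound on $\|\operatorname{sort}(y^{(k)})-x\|_2$ even though sorting may move $w_k$ far from position $k$ and shift the intermediate nodes by one slot. To avoid tracking the permutation, I will use the one-dimensional transport identity: for two sorted $N$-point configurations, $\sum_i|\operatorname{sort}(y)_i-x_i|$ equals the integral of the absolute difference of their counting functions. The multisets $y^{(k)}$ and $x$ match up to errors of size $\alpha$ everywhere except that one point sits at $w_k$ instead of near $x_k$. On the interval between $x_k$ and $w_k$, minus $O(\alpha)$-neighbourhoods of nodes, the counting functions therefore differ by at least $1$. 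This interval has length at least $\rho_0-\alpha$, and the nodes in it number at most $N$, so the excised length is at most $2N\alpha\le\rho_0/2$ by $\alpha\le\rho_0/(4N)$. This gives an $\ell_1$ distance of at least roughly $\rho_0/2$, and by Cauchy--Schwarz
\[
 \|\operatorname{sort}(y^{(k)})-x\|_2^2\ge\frac{\rho_0^2}{4N}.
\]
The delicate part is doing this bookkeeping carefully, in particular which nodes are excluded and the endpoint nodes $\delta_1=\alpha/2$, $\delta_N=1-\alpha/2$. If it proves messy, I will fall back to a pure counting argument instead: find a point $z$ strictly between $x_k$ and $w_k$, at distance $\ge\rho_0/4$ from both, such that the two configurations have different numbers of points below $z$. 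Some sorted coordinate must then cross $z$, contributing at least about $\rho_0/4$ to the $\ell_1$ distance.

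\textbf{Step 3: combining.} \Cref{t:lem:strong-vandermonde} gives
\[
 \log\frac{|V(y^{(k)})|}{V(x)}\le-\frac{\rho_0^2}{8N}.
\]
Adding the Step 1 loss of at most $\rho_0^2/(64N)$ gives
\[
 \log\left|\prod_{i=1}^N\ell_i^\delta(w_i)\right|
 \le-\frac{\rho_0^2}{8N}+\frac{\rho_0^2}{64N}
 \le-\frac{\rho_0^2}{32N},
\]
which is \cref{t:eq:near-fekete-gap}. The constants have slack, so a factor-of-two loss in Step 2 (for example an $\ell_1$ bound of $\rho_0/4$ instead of $\rho_0/2$, giving $\rho_0^2/(32N)$ from the strong gap) would be compensated by tightening the Step 1 estimate, which has room to spare since $\alpha$ can be taken smaller by a constant. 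A collision in $y^{(k)}$ makes the product zero, which is trivially fine.
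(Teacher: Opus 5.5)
Your proof is correct, and Steps 1 and 3 are the paper's argument: the same Vandermonde cancellation $\log|\ell_i^\delta(z)|=\log V(y^{(i,z)})-\log V(\delta)$, the same loss $N\varepsilon_V\le\rho_0^2/(64N)$ from the third condition on $\alpha$, the crude bound for $i\ne k$, and the strong Vandermonde gap for $i=k$. The difference is Step 2, which you flag as the main obstacle.

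The paper removes that obstacle with one observation. Sorting does not change the sum of the coordinates, so
\[
 \Bigl|\sum_j y^{(k)}_j-\sum_j x_j\Bigr|
 =\Bigl|(w_k-\delta_k)+\sum_j(\delta_j-x_j)\Bigr|
 \ge\rho_0-N\alpha\ge\frac{3\rho_0}{4}.
\]
Cauchy--Schwarz then gives $\|y^{(k)}-x\|_2^2\ge 9\rho_0^2/(16N)$, with no tracking of the permutation at all. This is the identity test function in the dual description of the transport distance you compute directly.

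Your counting-function argument is also valid, once the bookkeeping is done:
\begin{itemize}
\item Outside the $\alpha$-neighbourhoods of the $x_j$ with $j\ne k$, the two counting functions differ by exactly one between $x_k$ and $w_k$.
\item The $\ell_1$ distance is therefore at least $\rho_0-\alpha-2(N-1)\alpha>\rho_0/2$.
\item Your Step 3 then yields $-7\rho_0^2/(64N)$, which is comfortably below $-\rho_0^2/(32N)$.
\end{itemize}
It is longer than the paper's route and gives a weaker constant.

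Drop two side remarks, though neither affects your main line:
\begin{itemize}
\item The fallback counting argument fails as stated. The sorted pair straddling a single point $z$ can be two neighbouring nodes, separated only by a node gap, so it need not contribute $\rho_0/4$.
\item You cannot offset a weaker Step 2 by shrinking $\alpha$, because the lemma fixes the conditions on $\alpha$. With an $\ell_1$ bound of only $\rho_0/4$, the estimate would come out as $-\rho_0^2/(64N)$, which is not enough.
\end{itemize}
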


\begin{proof}
Let
$\varepsilon_V=\log V(x)-\log V(\delta)$.
Because every exact gap is at least $\sigma\ge\underline\sigma$ and $\alpha\le\underline\sigma/4$,
\[
 \varepsilon_V
 \le4\alpha\sum_{i<j}\frac1{x_j-x_i}
 \le\frac{2\alpha N^2}{\underline\sigma}.
\]
For fixed $i$ and $z\in[0,1]$, let $y^{(i,z)}$ be the sorted configuration obtained by replacing $\delta_i$ by $z$.  Vandermonde cancellation gives
\[
 \log|\ell_i^\delta(z)|
 =\log V(y^{(i,z)})-\log V(\delta)
 \le \varepsilon_V-\frac12\|y^{(i,z)}-x\|_2^2.
\]
In particular every such logarithm is at most $\varepsilon_V$.

Take $i=k$ and $z=w_k$.  Sorting does not change the sum of coordinates, so
$\left|\sum_j y_j^{(k,w_k)}-\sum_j x_j\right| \ge\rho_0-N\alpha \ge\frac{3\rho_0}{4}$.
By Cauchy--Schwarz,
$\|y^{(k,w_k)}-x\|_2^2 \ge\frac{\rho_0^2}{2N}$.
Therefore
$\log\left|\prod_i\ell_i^\delta(w_i)\right| \le N\varepsilon_V-\frac{\rho_0^2}{4N}$.
The third term in \cref{t:eq:alpha-conditions} implies $N\varepsilon_V\le\rho_0^2/(64N)$, which is stronger than \cref{t:eq:near-fekete-gap}.
\end{proof}

\subsection{Payoff Bounds for Circuit Simulation}\label{t:sec:sparse-details}

The same rational nodes must support both exact recovery identities and
the multiplication construction. We bound the recovery tables first,
then arrange the active players so that the unwanted terms in each
second-order statistic are small.

\paragraph{Magnitude and Encoding Length of the Payoff Entries.}

The recovery identities are exact at the prescribed nodes. We next check that
the corresponding rational payoff tables have polynomial-size descriptions.

\begin{lemma}[Bounds on the Payoff Tables]\label{t:lem:reader-bits}
The tables $T^{(1)}_{o,j}$ and $T^{(2)}_{o,j}$ used by the construction have rational entries of
$O(N\log N)$ encoding length and magnitude at most $2^{O(N)}$.
The complete tables are constructible in polynomial time.
\end{lemma}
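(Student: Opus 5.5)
We bound the entries directly from the explicit formulas $T^{(1)}_{o,j}(K,L)=L\zeta_j^{K-1}/N_{o,j}$ and $T^{(2)}_{o,j}(K,L)=\binom L2\zeta_j^{K-2}/N_{o,j}$, where $0\le L\le K\le N-1$. The proof has three parts, handled in order: the numerator powers $\zeta_j^{K}$, the normalizer $N_{o,j}=\delta_j\prod_{\ell\ne o,j}A_\ell(\zeta_j)$, and the combination into magnitude and encoding-length bounds.

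\emph{Numerator.} Every root player $j$ used by a gate lies in the central band $I_{\mathrm{cen}}$. This covers all $X_i$, $Z_i$, $L_i^\bullet$, $U_i^\bullet$, $B_i$, $C_i$, $V_i$, $W_i$ and the players in the sum in $d_{W_i}$; we arrange all active and filler players among central nodes. For such $j$ we have $\delta_j\in[1/5,4/5]$, so $|\zeta_j|\le4$. Hence $|L\zeta_j^{K-1}|\le N4^{N}$ and $\binom L2|\zeta_j^{K-2}|\le N^24^N$. Each $\delta_j$ is a dyadic rational of $O(\log N)$ bits. Therefore $\zeta_j=1-1/\delta_j$ has $O(\log N)$ bits, and its powers up to $N$ have $O(N\log N)$ bits.

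\emph{Normalizer.} This is the main step. We have $A_\ell(\zeta_j)=1-\delta_\ell+\delta_\ell(1-1/\delta_j)=(\delta_j-\delta_\ell)/\delta_j$. Thus
\[
 N_{o,j}=\delta_j^{\,3-N}\prod_{\ell\ne o,j}(\delta_j-\delta_\ell)
 =\frac{\delta_j^{\,3-N}\,(-1)^{N-1}c_j}{\delta_j-\delta_o}.
\]
By \Cref{t:lem:fekete-products}, $|c_j|\ge c_*\ge2^{1-2N}$. We also have $|\delta_j-\delta_o|\le1$ and $\delta_j^{3-N}\ge1$. Together these give $|N_{o,j}|\ge2^{1-2N}$, so $1/|N_{o,j}|\le2^{2N}$. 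This lower bound does not depend on the small node gaps: any gap loss is already absorbed in $c_*$, so no factor $g^{-N}$ appears. If one prefers to avoid \Cref{t:lem:fekete-products} for off-center $o$, the bound still holds: removing the single factor $\delta_j-\delta_o$ from $c_j$ only increases the magnitude.

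\emph{Combination.} Multiplying the two bounds, every entry has magnitude at most $N^2 4^N 2^{2N}=2^{O(N)}$. For encoding length, $N_{o,j}$ is a product of $O(N)$ rationals of $O(\log N)$ bits, so it has $O(N\log N)$ bits; the numerator has the same bound. For polynomial-time construction, each table has $O(N^2)$ entries $(K,L)$. Each entry needs $O(N)$ multiplications of $O(N\log N)$-bit rationals, and there are polynomially many tables (one per gate residual term). The nodes themselves are computed in polynomial time by root isolation, as established earlier.
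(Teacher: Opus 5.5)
Your proof is correct and follows the paper's argument essentially step for step. Like the paper, you factor $A_\ell(\zeta_j)=(\delta_j-\delta_\ell)/\delta_j$ to get $|N_{o,j}|^{-1}=\delta_j^{N-3}|\delta_j-\delta_o|/|c_j|\le c_*^{-1}\le 2^{2N-1}$ via \Cref{t:lem:fekete-products}, bound the numerators by $N^2 4^N$ using $|\zeta_j|\le 4$ for central root players, and count $O(N)$ factors of $O(\log N)$ bits over $O(N^2)$ histograms.

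Two small corrections, neither load-bearing. Not all fillers sit among central nodes: the near-endpoint nodes are fillers. This is harmless because every root player in a recovery table is active. Your ``off-center $o$'' remark does not avoid \Cref{t:lem:fekete-products} and is unnecessary, since $c_j$ does not depend on $o$.
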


\begin{proof}
All root players used in a gate table are active, so
$\delta_j\in[1/5,4/5]$ and $|\zeta_j|\le4$.
Factoring $A_\ell(\zeta_j)=(\delta_j-\delta_\ell)/\delta_j$ gives
$|N_{o,j}|^{-1}=\delta_j^{N-3}|\delta_j-\delta_o|/|c_j|
\le c_*^{-1}\le2^{2N-1}$ by \Cref{t:lem:fekete-products}.
Each numerator is bounded by $N^2 4^N$, giving the exponential magnitude
bound. Every node has $O(\log N)$ encoding length, and each entry uses
$O(N)$ rational factors, giving $O(N\log N)$ encoding length.
There are only $O(N^2)$ histograms per observer.
\end{proof}

\paragraph{Separating the Active Players.}

Let the source circuit contain $G$ gates and $M\le G$ bilinear gates.  We may assume $G\ge1$; an empty source instance can be mapped to any fixed one-player game.  The gate construction below uses six ordinary players per gate and four gate-specific auxiliary players per bilinear gate.  Let $\calA$ denote the set of players participating in the gate equations.  Its cardinality is
$A:=|\calA|=6G+4M\le10G$.
All other calibrated players are \emph{fillers}.

Fix universal constants
\begin{equation}\label{t:eq:layout-constants}
 C_{\mathrm{sp}}:=1024\,\frac{A_\zeta}{a_\zeta},
 \qquad
 C_N:=\frac{2}{c_\zeta}.
\end{equation}
The explicit values from \Cref{t:cor:zeta-spacing} give one concrete choice; their numerical magnitudes play no conceptual role below.
Choose
$L=C_{\mathrm{sp}}\left\lceil\log_2(2A)\right\rceil$
which is an explicitly computable integer and is at least $C_{\mathrm{sp}}\log(2A)$. Choose the total calibrated population $N$ to be the least even integer larger than
\begin{equation}\label{t:eq:N-choice}
 C_NA(L+2)+N_0.
\end{equation}
Thus $N=\Theta(G\log(2G))$.  There are at most $A$ active blocks, so an arrangement with $L$ fillers between consecutive blocks uses fewer than $A(L+2)$ positions.  By \Cref{t:cor:zeta-spacing} and \cref{t:eq:N-choice}, the central index set contains at least
$c_\zeta N>2A(L+2)$
positions.  Hence it contains enough consecutive rational nodes for the entire arrangement.

Place the active players in blocks along these consecutive central nodes.  Every active player forms a singleton block except that, for each bilinear gate, the two designated auxiliary players $V_i,B_i$ form one adjacent block.  Separate distinct blocks by at least $L$ filler nodes.  The remaining auxiliary players $C_i,W_i$ are singleton blocks and are therefore $L$-separated from the $V_i,B_i$ pair and from every other active block.  All unused positions, including the near-endpoint nodes, are fillers.

For a bilinear gate $i$, orient the adjacent pair so that
$h_i:=\zeta_{B_i}-\zeta_{V_i}>0$ and $d_i:=h_i^{-1}$.

\begin{lemma}[Bounding the Unwanted Terms]\label{t:lem:sparse}
With the constants in \cref{t:eq:layout-constants}, every bilinear pair satisfies
\begin{equation}\label{t:eq:sparse-bound}
 h_i\le\frac1{64},
 \qquad
 \Sigma_i:=
 \sum_{p\in\calA\setminus\{B_i,V_i\}}
 \left|\frac{h_i}{\zeta_{B_i}-\zeta_p}\right|
 \le\frac1{64}.
\end{equation}
Moreover $d_i=\Theta(N)$, with absolute constants independent of the gate.
\end{lemma}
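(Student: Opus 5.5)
The plan is to read everything off the central-spacing bounds of \Cref{t:cor:zeta-spacing}: consecutive central indices satisfy $a_\zeta/N\le\zeta_{i+1}-\zeta_i\le A_\zeta/N$, with $a_\zeta=1$ and $A_\zeta=103$. The whole arrangement lies inside the central band, and $\zeta$ is increasing in the node index. Hence for any two central indices $p<q$ we will use
\[
 \frac{a_\zeta(q-p)}{N}\le\zeta_q-\zeta_p\le\frac{A_\zeta(q-p)}{N}.
\]

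For the first claim, $V_i,B_i$ occupy adjacent central nodes, so $a_\zeta/N\le h_i\le A_\zeta/N$. This gives $d_i=\Theta(N)$ with constants $1/A_\zeta$ and $1/a_\zeta$. To obtain $h_i\le1/64$, I will use $N>C_NA(L+2)\ge 2L\ge 2C_{\mathrm{sp}}=2048A_\zeta$, since $c_\zeta\le1$ and $a_\zeta=1$. This yields $h_i\le A_\zeta/N\le1/2048$.

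For the sum $\Sigma_i$, I will order the active blocks along the central nodes and index them by their block distance $r\ge1$ from the block $\{V_i,B_i\}$. A block at distance $r$ is separated from $B_i$ by at least $r$ gaps of $L$ fillers, so its index differs from that of $B_i$ by at least $r(L+1)\ge rL$. Its root difference is therefore at least $a_\zeta rL/N$. Each block other than the pair is a singleton, and at most two blocks lie at each distance $r$, one on each side. Since there are at most $A$ blocks, $r$ ranges up to $A$. Using $h_i\le A_\zeta/N$, this gives
\[
 \Sigma_i\le\sum_{r=1}^{A}2\cdot\frac{A_\zeta/N}{a_\zeta rL/N}
 =\frac{2A_\zeta}{a_\zeta L}H_A\le\frac{2A_\zeta(1+\ln A)}{a_\zeta L}.
\]
With $L\ge C_{\mathrm{sp}}\log_2(2A)=1024(A_\zeta/a_\zeta)\log_2(2A)$ and $1+\ln A\le 2\log_2(2A)$, the bound becomes $\Sigma_i\le 4/1024<1/64$. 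The ratios are taken with absolute values, so the sign of $\zeta_{B_i}-\zeta_p$, which differs on the two sides, does not matter.

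The main obstacle is bookkeeping rather than analysis. First, I must confirm that every active player other than $B_i,V_i$, including $C_i,W_i$ and the other pairs' members, sits in a distinct block separated by $L$ fillers. Second, the pair block of another gate contributes two players at the same distance $r$. I will handle this by counting players rather than blocks: there are at most $2$ players per side per distance, which at most doubles the constant to $8/1024$, still below $1/64$. Finally, I must confirm that every node used lies in $I_{\mathrm{cen}}$, so that the spacing bounds apply. This holds because the arrangement occupies fewer than $A(L+2)<c_\zeta N$ consecutive central positions.
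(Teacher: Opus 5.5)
Your proposal is correct and follows essentially the same route as the paper. Both arguments read $a_\zeta/N\le h_i\le A_\zeta/N$, and hence $d_i=\Theta(N)$, off \Cref{t:cor:zeta-spacing}, and both bound $\Sigma_i$ by a harmonic sum over block distances with at most four active players per distance, closed by $L\ge C_{\mathrm{sp}}\log(2A)$. Your variations are valid: you use the sharper index distance $r(L+1)$ in place of the paper's $kL/2$, and you deduce $h_i\le1/64$ from $N>2L$ rather than from $N>N_0=64A_\zeta$. Your final count of two players per side per distance also correctly repairs the earlier slip that every non-pair block is a singleton.
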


\begin{proof}
All active nodes lie in the central band, so \Cref{t:cor:zeta-spacing} gives
$\frac{a_\zeta}{N} \le\zeta_{q+1}-\zeta_q \le\frac{A_\zeta}{N}$
for consecutive central positions.  Hence
$\frac{a_\zeta}{N}\le h_i\le\frac{A_\zeta}{N}$,
which gives $d_i=\Theta(N)$.  Since $N>N_0=64A_\zeta$, it also gives $h_i\le A_\zeta/N<1/64$.

Consider blocks at block-distance $k\ge1$ from the block containing $B_i,V_i$.  Since distinct blocks have at least $L$ fillers between them and each block has size at most two, every active point in such a block is at index distance at least $kL/2$ from $B_i$ once $L\ge4$.  Therefore
$\left|\frac{h_i}{\zeta_{B_i}-\zeta_p}\right| \le \frac{2A_\zeta}{a_\zeta}\frac1{kL}$.
At each block-distance there are at most two blocks, one on either side, and each block has at most two active players.  Thus
\[
 \Sigma_i
 \le
 \frac{8A_\zeta}{a_\zeta L}
 \sum_{k=1}^{A}\frac1k
 \le
 \frac{8A_\zeta(1+\log A)}{a_\zeta L}.
\]
For $A\ge1$, $1+\log A\le2\log(2A)$, and $L\ge C_{\mathrm{sp}}\log(2A)$.  Therefore
$\Sigma_i \le\frac{16A_\zeta}{a_\zeta C_{\mathrm{sp}}} =\frac1{64}$.
This completes both estimates in \cref{t:eq:sparse-bound}.
\end{proof}

Every filler receives the constant gate payoff table
$d_F(h)=-1$.
At the accuracies used below, this forces $\theta_F=0$ without any recovery statistic.

\subsection{Approximation Error and Exact Equilibria}\label{t:sec:compiler-correctness}

The main correctness proof is in \Cref{t:sec:gate-overview}. We give explicit choices for its scales and the final decoding estimate.

We choose the geometry before the payoff scale.  For the even population $N$ from \cref{t:eq:N-choice}, compute the separation lower bound $\underline\sigma$ and put
$g_0=\underline\sigma/2$, $\rho_0=g_0^2/(8N^3)$.
Choose the dyadic $\alpha$ and rational nodes as in \Cref{t:sec:fekete}.
Their minimum gap is at least $g_0$, so $\rho_0\le g^2/(4N^3)$,
as required by \Cref{t:lem:local-inverse}.
Both $\rho_0^{-1}$ and $\alpha^{-1}$ are polynomial in $N$.

After the arrangement and all histogram tables have been constructed,
choose a power of two $B\ge1$ that bounds their magnitudes, including every
gate payoff table $d_P$ and product table $\Pi_i$.
By \Cref{t:lem:reader-bits}, we may take $B\le2^{O(N)}$.
Set $d_{\max}=\max(\{1\}\cup\{d_i:i\text{ bilinear}\})$. Use a single rational bound
\begin{equation}\label{t:eq:common-bound}
 H=\max\{64,N,B,d_{\max},b^{-1},c_*^{-1},\rho_0^{-1}\}.
\end{equation}
The geometric reciprocals are polynomial in $N$, while
$B,c_*^{-1}\le2^{O(N)}$. Hence $H\le2^{O(N)}$.
The following choice leaves ample slack in every estimate:
$\eta=H^{-16}$, $\xi_*=H^{-20}$ and $\Lambda=H^{19}$.
Only their inverse-exponential size matters; the particular constant exponents are unimportant.

The scale choice therefore gives $\xi_*\ge2^{-O(N)}$ and
$\Lambda\le2^{O(N)}$. We verify the accuracy below before bounding
the rational descriptions.

For $0\le\xi\le H^{-20}$ and $\eta=H^{-16}$, the bounds needed in the
main proof follow from \Cref{t:eq:common-bound}. In particular,
$\gamma+\xi\le2H^{-15}$, whereas the right-hand side of
\Cref{t:eq:global-gamma} is at least $H^{-7}/512$. Thus global localization
applies. The remaining estimates are summarized in \Cref{tab:three-strategy-scales}.

\begin{table}[t]
\centering
\renewcommand{\arraystretch}{1.2}
\begin{tabularx}{\linewidth}{@{}l l X@{}}
\toprule
Quantity & Bound & Use in the Proof \\
\midrule
Initial signal displacement & $8H^{-11}<b/2$ & Every signal probability is interior. \\
Error in a gate expectation & $16H^{-9}$ & Transfer expectations to the prescribed nodes. \\
Initial gate error $e_0$ & $2H^{-4}$ & Establish support using \Cref{t:lem:semantic-stability}. \\
Refined gate error $e(\xi)$ & $2H^{16}\xi$ & Reapply the same stability lemma. \\
Projected gate residual & $24H^{17}\xi$ & Recover the source equations. \\
\bottomrule
\end{tabularx}
\caption{Explicit Error Bounds for the Three-Strategy Reduction. Only the last
two bounds are used after the supports have been established.}
\label{tab:three-strategy-scales}
\end{table}

The initial expectation error is at most $2BN(8H^{-11})\le16H^{-9}$.
Consequently $e_0\le H^{-4}+16H^{-9}\le2H^{-4}$, which is below
$1/[100(1+d_{\max})]$. Once $s,t$ are supported, the local inverse gives
\Cref{t:eq:main-fine-w}, and
$e(\xi)\le(H^{16}+4H^4)\xi\le2H^{16}\xi$.
The second application of the stability lemma therefore yields
\begin{equation}\label{t:eq:robust-projection}
 |u_i-\mathcal P(f_i(u))|\le24H^{17}\xi
 =:\varepsilon_{\mathrm{gate}}.
\end{equation}

\paragraph{Decoding the Circuit Values.}

Define
\begin{equation*}
 \overline u_i=\mathcal P(u_i),\qquad
 \overline x_i=2\overline u_i-\frac12,\qquad
 \widehat x_i=\clip_{[0,1]}\left(
 \frac{2\sigma_{X_i,c}}{\delta_{X_i}}-\frac12\right).
\end{equation*}
By \cref{t:eq:robust-projection}, both $|u_i-\overline u_i|$ and
$|\overline u_i-\mathcal P(f_i(u))|$ are at most $\varepsilon_{\mathrm{gate}}$.
For complement and truncated doubling, clipping commutes with the protected operation, so their decoded residuals are at most $2\varepsilon_{\mathrm{gate}}$.
For the bilinear expression, each partial derivative is at most $5/8$ on $[0,1]^2$.  Clipping its two inputs changes its value by at most $5\varepsilon_{\mathrm{gate}}/4$; at clipped inputs the value lies in $[3/8,3/4]$.  Nonexpansiveness of $\mathcal P$ and rescaling by two therefore bound its decoded residual by $9\varepsilon_{\mathrm{gate}}/2$.  Thus $\overline x$ satisfies every source gate within $5\varepsilon_{\mathrm{gate}}$.

Finally, $\sigma_{X_i,c}=w_{X_i}u_i$ and clipping is nonexpansive, so
\begin{equation*}
 |\widehat x_i-\overline x_i|
 \le\frac2b|w_{X_i}-\delta_{X_i}|
 \le\frac{4N}{bc_*}\xi\le4H^3\xi.
\end{equation*}
Changing every source coordinate by $D$ changes any gate residual by at most $3D$.  The final error is therefore at most
$(120H^{17}+12H^3)\xi\le H^{19}\xi=\Lambda\xi$.
At $\xi=0$ all errors vanish, $w=\delta$, and $u_i\in[1/4,3/4]$.  Clipping is inactive and the decoder is exactly the affine coordinate map \cref{t:eq:main-SL-decoder}.

\paragraph{Polynomial Rational Descriptions.}

All quantities are rational expressions in polynomially many node factors. The following lemma collects the resulting size bounds.

\begin{lemma}[Polynomial-Size Construction]\label{t:lem:size}
The reduction is computable in polynomial time. It uses
$N=O(G\log(2G))$ players and rational payoff entries of
$O(N^2\log N)$ encoding length. The parameters satisfy $H\le2^{O(N)}$,
$\xi_*\ge2^{-O(N)}$, and $\Lambda\le2^{O(N)}$.
\end{lemma}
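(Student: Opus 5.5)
We prove \Cref{t:lem:size} by tracing the construction stage by stage and bounding, at each stage, the size and description length of the objects produced. The order follows the construction: population, nodes, arrangement, tables, scales, and finally normalization.

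\emph{Population.} By \cref{t:eq:N-choice}, $N$ is the least even integer exceeding $C_NA(L+2)+N_0$. Here $A\le10G$ and $L=C_{\mathrm{sp}}\lceil\log_2(2A)\rceil$. This gives $N=O(G\log(2G))$ immediately, since $C_N$, $C_{\mathrm{sp}}$ and $N_0$ are absolute constants.

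\emph{Nodes.} \Cref{t:lem:fekete-separation} gives an explicit gap bound $\underline\sigma$, so real-root isolation computes the shifted zeros of $P_{N-1}'$ to mesh $\alpha/4$ in polynomial time. We then round on a dyadic grid. Because $\rho_0=g_0^2/(8N^3)$ and $\underline\sigma$ are inverse polynomial, $\alpha^{-1}$ is polynomial and each node has $O(\log N)$ bits. The same holds for $b=\alpha/2$, every $\zeta_i$, and every gap. The central index set is read off from these approximations, and the filler arrangement is a linear-time placement.

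\emph{Payoff entries.} Each calibration entry is $G_i(K)=|c_i|-\operatorname{sgn}(c_i)\Phi_i(K)$, a product of $O(N)$ node factors. It therefore has $O(N\log N)$ bits, and there are $O(N^2)$ histograms per player. For the gate tables, \Cref{t:lem:reader-bits} bounds each $T^{(1)}$ and $T^{(2)}$ entry by $O(N\log N)$ bits and magnitude $2^{O(N)}$. Each $d_P$ is a sum of at most $O(N)$ such tables with rational coefficients such as $h_i/(\zeta_{B_i}-\zeta_p)$, each of $O(\log N)$ bits. The only sum that long is the one defining $d_{W_i}$.

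This summation is the step I expect to be the main obstacle. Adding $O(N)$ rationals with unrelated denominators can make the common denominator grow to $O(N^2\log N)$ bits rather than $O(N\log N)$. I would handle this by noting that every denominator divides a product of node differences and node values. The common denominator therefore divides a product of at most $O(N^2)$ such factors, each of $O(\log N)$ bits, which yields the stated $O(N^2\log N)$ bound. Magnitudes stay at $2^{O(N)}$ because the coefficients are bounded and the number of terms is $O(N)$.

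\emph{Scales and normalization.} Choosing $B$ as a power of two bounding these magnitudes gives $B\le2^{O(N)}$. \Cref{t:lem:fekete-products} gives $c_*^{-1}\le2^{2N-1}$, and $b^{-1}$, $\rho_0^{-1}$ and $d_{\max}=\Theta(N)$ are polynomial. Hence $H$ in \cref{t:eq:common-bound} satisfies $H\le2^{O(N)}$. It follows that $\eta=H^{-16}$, $\xi_*=H^{-20}\ge2^{-O(N)}$ and $\Lambda=H^{19}\le2^{O(N)}$, and each of these has $O(N)$ bits when $H$ is rounded up to a power of two.

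The final entry $G_P+\eta d_P$ then has $O(N^2\log N)$ bits. The common map $U\mapsto(U+3)/6$ adds only $O(1)$ bits. The total number of entries is polynomial, so the whole construction runs in polynomial time.
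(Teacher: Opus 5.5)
Your proposal is correct and follows the same route as the paper's proof. Both count six players per gate and four per bilinear gate plus $O(\log(2G))$ fillers per group, compute the nodes by root isolation and dyadic rounding, and cite \Cref{t:lem:reader-bits} for the recovery entries. Both then bound a gate table as a sum of $O(N)$ such entries with $O(\log N)$-bit coefficients, and obtain the exponential bounds on $H,\xi_*,\Lambda$ from \Cref{t:lem:fekete-products,t:lem:reader-bits}.

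The step you flag as the main obstacle is not one. Multiplying $O(N)$ denominators of $O(N\log N)$ bits each already gives the claimed $O(N^2\log N)$, which is exactly the paper's one-line argument. Your finer common-denominator analysis is therefore correct but unnecessary, and rounding $H$ up to a power of two is a harmless departure from the paper's rational $H$.
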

\begin{proof}
We use six players per source gate and four additional players per bilinear
gate. The separation among central nodes requires $O(\log(2G))$ fillers per
group. Hence the population has the stated order, and each player has
$3\binom{N+1}{2}=O(N^2)$ payoff entries.

We isolate the roots of $(1-t^2)P_{N-1}'(t)$ to polynomially small error using
standard real-algebraic algorithms~\cite{BPR06}, then round on a grid of
polynomial denominator. Every recovery entry has $O(N\log N)$ encoding
length by \Cref{t:lem:reader-bits}. A gate table sums at most $O(N)$ such
entries with coefficients of $O(\log N)$ encoding length, so
$O(N^2\log N)$ suffices. Calibration products use $O(N)$ node factors,
and the fixed powers defining the final scales preserve these bounds.
We enumerate all histograms and evaluate their payoffs in polynomial time.
Rounding the maximum magnitude upward to a power of two gives $B$.
The exponential magnitude bounds follow from
\Cref{t:lem:fekete-products,t:lem:reader-bits}.
\end{proof}

\paragraph{Payoff Range.}\label{t:lem:normalization}
Every base entry lies in $[-2,2]$. The gate perturbation has magnitude at most
$\eta B\le H^{-15}<1/2$, so the unnormalized entries lie in $[-5/2,5/2]$.
The affine normalization used in the main proof therefore puts them in $[0,1]$.

\printbibliography[heading=bibintoc,title={References}]
\end{document}